\newtheorem{theorem}{Theorem}[section]
\newtheorem{lemma}[theorem]{Lemma}
\theoremstyle{definition}
\newtheorem{definition}[theorem]{Definition}
\newcommand{\Fkin}{kinematic indicator }
\newcommand{\FkinUH}{F_{\textnormal{H}}}
\newcommand{\TMSlit}{\breve{T}M}
\newcommand{\EH}{E_{\textnormal{H}}}
\newcommand{\ELab}[1]{E_{#1}}
\crefname{appendixtheorem}{lemma}{lemmas}
\date{\today}
\def\endthebibliography{%
\def\@noitemerr{\@latex@warning{Empty `thebibliography' environment}}%
\endlist
}
\title{
The Vlasov Bivector: 
A Parameter-Free Approach to Vlasov Kinematics
}
\author{Finlay Gunneberg${}^{1,2,}$\footnote{email: f.gunneberg1@lancaster.ac.uk \\ ORCID:  https://orcid.org/0009-0009-1225-6912}, Jonathan Gratus${}^{1,2,}$\footnote{email: j.gratus@lancaster.ac.uk \\ ORCID:  https://orcid.org/0000-0003-1597-6084}, and Harvey Stanfield${}^{3,}$\footnote{email: harvey.stanfield@postgrad.manchester.ac.uk  \\ ORCID:  https://orcid.org/0000-0001-7197-3983}}
\begin{document}

\maketitle

${}^1$ Department of Physics, Lancaster University, Lancaster, LA1 4YB, UK

${}^2$ The Cockcroft Institute, Daresbury Labs, Keckwick Ln, Daresbury, WA4 4AD, UK

${}^3$ Department of Physics \& Astronomy, Manchester University, Manchester, M13 9PL, UK

Emails and ORCIDS

\begin{abstract}
Plasma kinetics, for both flat and curved spacetime, is conventionally performed on the mass shell, a 7--dimensional time-phase space with a Vlasov vector field, also known as the Liouville vector field. The choice of this time-phase space encodes the parameterisation of the underling 2nd order ordinary differential equations. By replacing the Vlasov vector on time-phase space with a bivector on an 8--dimensional sub-bundle of the tangent bundle, we create a parameterisation free version of Vlasov theory. This has a number of advantages, which include working for lightlike and ultra-relativistic particles, non metric connections, and metric-free and premetric theories. It also works for theories where no time-phase space can exist for topological topological reasons. An example of this is when we wish to consider all geodesics, including spacelike geodesics. 

We extend the particle density function to a 6--form on the subbundle of the tangent space, and define the transport equations, which correspond to the Vlasov equation. We then show how to define the corresponding 3--current on spacetime. We discuss the stress-energy tensor needed for the Einstein-Vlasov system.

This theory can be generalised to create parameterisation invariant Vlasov theories for many 2nd order theories, on arbitrary manifolds. The relationship to sprays and semi-sprays is given and examples from Finsler geometry are also given.
\end{abstract}


\section{Introduction}
In Vlasov and Boltzmann theories of kinematics, one is interested in the dynamics of a scalar field over a time-phase space which represents a particle density.
The fields are typically functions of time, 3 positional coordinates, and 3 velocity or momentum coordinates.
The dynamic equations are written in terms of a first order operator on this scalar field.
In the case of the Vlasov equation, the action of the operator on the scalar field is zero, whereas in the case of the Boltzmann equation, the right hand side is nonzero to account of collisions.
This 7-dimensional time-phase space, which we call the \textit{kinematic domain}, is a subspace of the 8-dimensional tangent bundle. It corresponds to the chosen parameterisation one uses for the underlying geodesics of the chosen Lorentz force equation. 
Sometimes there is a natural choice for this 7-dimensional space, such as the mass shell (see \cite{andreasson_einstein-vlasov_2011}).
In other cases it has to be chosen; for example, in the case of null geodesics.
Furthermore, there are even cases where it is not even possible to construct a kinematic domain;
for example, in the case when we consider all geodesics, including spacelike geodesics.

\begin{figure}[tb]
    \centering
    \includegraphics[width=0.4\linewidth,trim={3cm 4cm 3cm 4cm},clip]{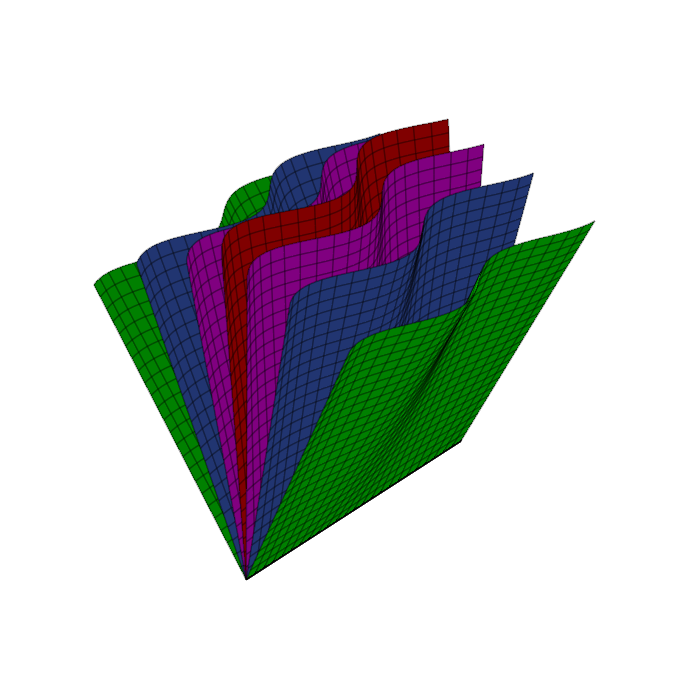}
    \caption{The Vlasov Bivector integrates to form leaves, like the leaves of a book. The density of the leaves also represent the particle distribution.}
    \label{fig:intro Leaves}
\end{figure}

In this article we investigate an alternative approach which does not require a choice of kinematic domain.
We work on an 8-dimensional conic bundle which is subset of the tangent bundle, and we replace the Vlasov vector, with the Vlasov bivector. 
This correspond to not choosing a parameterisation for the underlying ordinary differential equations (ODEs) of the system. 
The integral 2--dimensional surfaces of this bivector are depicted in figure \ref{fig:intro Leaves}. 
There are many advantages to doing this which we discuss in the following subsection.

Although we concentrate on the geodesic and Lorentz force equations and the corresponding Vlasov fields, this approach can be applied to the kinematics for any 2nd order ODEs, on any arbitrary dimensional base manifold.

In this article we first summarize the standard kinematic domain approach, and make the link with sprays and semi-sprays. 
We then define the Vlasov bivector and particle density 6--form, and give the equations of motion for the latter. 
We show how to go between the approaches, when the kinematic domain approach exists. We also show how to calculate the current 3--form the stress-energy tensor needed for the Einstein-Vlasov system.

In this article, we use sprays and Finsler geometry as examples to put this work in context. However, for the reader unfamiliar with these concepts, all statements about sprays, semi-sprays and Finsler geometry may be safely ignored. Details linking our work and sprays is given in \cref{Spray Section}.

\subsection{The Standard Vlasov Approach}

\begin{figure}[tb]
\begin{subfigure}{0.45\textwidth}
\centering
\includegraphics[scale=0.45]{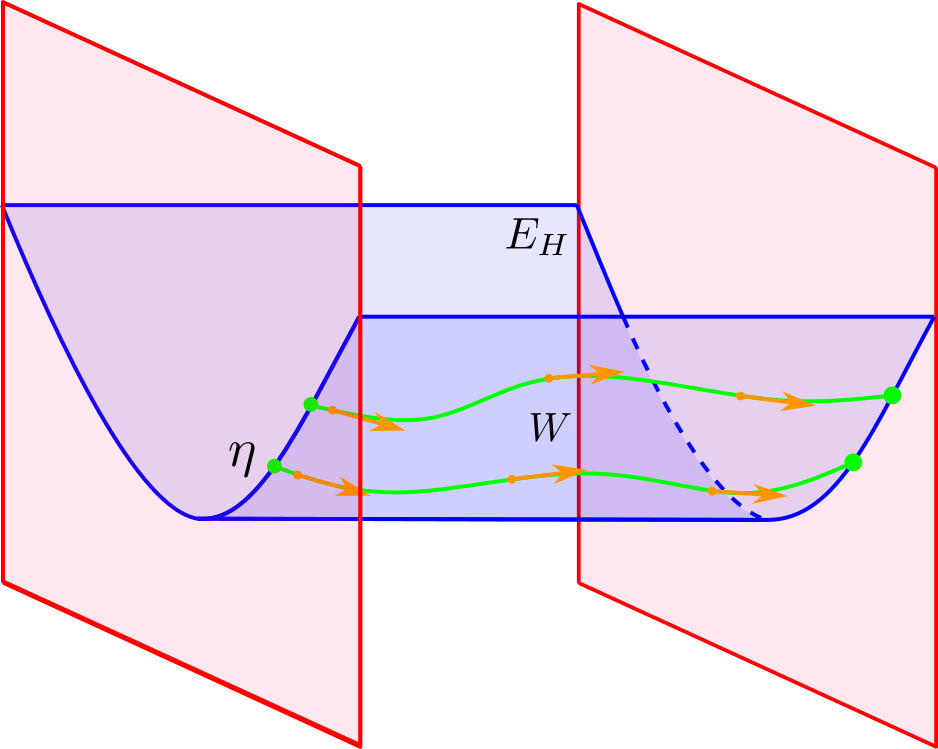}\\
\caption{Prolongations for a metric compatible connection.}
\label{Fig_Mass shell slip compat}
\end{subfigure}
\quad
\begin{subfigure}{0.45\textwidth}
\centering
\includegraphics[scale=0.45]{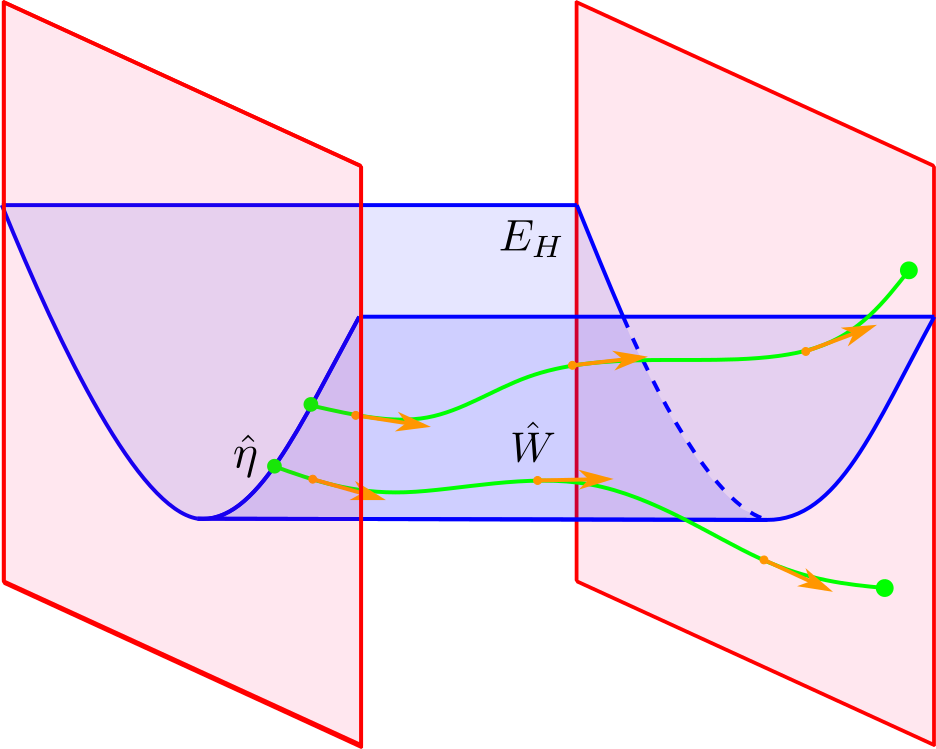}\\
\caption{Prolongations for a non-metric compatible connection.}
\label{Fig_Mass shell slip non compat}
\end{subfigure}
\caption{\label{Fig Mass Shell Slip} 
Illustration of integral curves, i.e the prolongations, in the mass shell for the case of a Vlasov field built from a force equation with a metric compatible connection (\cref{Fig_Mass shell slip compat}) and a non metric compatible connection (\cref{Fig_Mass shell slip non compat}). 
The mass shell is represented by the blue sheet, integral curves $\eta,\;\hat{\eta}$ of the Vlasov fields $W,\hat{W}$ are given by the green lines. 
The Vlasov fields themselves are depicted by the orange arrows.
The green curves show are initially on the unit hyperboloid but do not remain on it for the case where the Vlasov field is built from a non metric compatible connection.}
\end{figure}

When performing Vlasov kinematics, a 7--dimensional time-phase space $E$ is chosen upon which to construct the particle density scalar field $f_E$. 
We will refer to $E$ as the kinematic domain, and it is a bundle over the spacetime manifold $M$.
The first order differential equation can be represented by the action of a vector field called the Vlasov vector field $W_E\in\Gamma TE$,
\begin{equation}\label{Intr Vlasov Eqn on E}
W_E\langle f_E\rangle =0,
\end{equation}
where the angled brackets $\langle\bullet\rangle$ indicate the action of a vector field on a scalar field.
In this article we frequently use the Vlasov field as an example, both for charged and uncharged particles.

We refer to $W_E$ as the Vlasov vector field, or simply the Vlasov field.
It is also referred to as the Liouville vector field in some literature.
Throughout this paper/article we use the language of differential geometry on the tangent bundle to describe Vlasov systems.
For an overview of this technique see \cite{sarbach_geometry_2014}.
For an example of the benefits of using geometric techniques for Vlasov and related systems, see \cite{glinsky_coordinate-free_2024}.
The solutions to the underlying second order ODEs are called trajectories and are the worldlines of the particles. 
The tangent vectors to the trajectories, that is the velocities of particles, are curves in the kinematic domain, called the prolongation of the trajectories. 
Prolongations are integral curves of the Vlasov vector field. 
Although we primarily refer to the Vlasov field throughout this paper, these ideas can be applied to any kinematic theory in which the trajectories are governed by second order ODEs.

\begin{figure}[tb]
\centering
\textbf{}\includegraphics[scale=0.8]{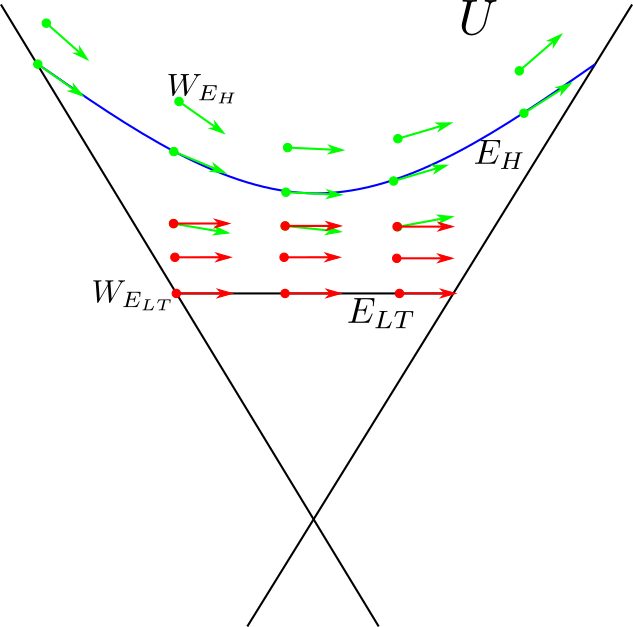}
\caption{\label{Fig Hyperboloid and Lab Bundles} The 7-dimensional kinematic domains $\EH$ and $\ELab{t}$ are given by slices of the 8-dimensional conic bundle $U$.
The unit hyperboloid $\EH$ is given by the dark blue hyperbola and a lab time bundle $\ELab{t}$ by the horizontal black line.
The vector fields $W_{\EH}$ and $W_{\ELab{t}}$ (green and red arrows respectively) are tangent to their respective kinematic domains.}
\end{figure}

The kinematic domain $E\subset TM$ is a sub-bundle of the tangent bundle over the base manifold $M$ and there are many choices for this sub-bundle.
For example, if $M$ has a spacetime metric, $g$ (with signature $(-,+,+,+)$), then one choice is to use the unit mass shell $\EH$, also known as the upper unit hyperboloid:
 \begin{equation}
 \EH=\{ \underline{v}\in TM \colon g(\underline{v},\underline{v})=-1, \mbox{ $\underline{v}$ is future pointing} \}.
\label{Eqn Intro EH}
 \end{equation}
This corresponds to proper-time parameterisation and is the natural choice for the relativistic Vlasov-Maxwell system and the Einstein Vlasov system \cite{andreasson_einstein-vlasov_2011}.
When performing plasma kinematics with the Vlasov-Maxwell system, we typically formulate the Lorentz force equation in terms of a metric compatible connection. One disadvantage of the upper unit hyperboloid is that prolongations of trajectories for a non-metric compatible connection will not, in general, remain on the mass shell (see \cref{Lem Non Metric Vlasov} for an example).
This is visualised in \cref{Fig Mass Shell Slip}. 
 
 Another choice of kinematic domain is the lab time bundle.
 This is characterised by a lab time scalar field $t\in\Gamma\Lambda^0M$, where $\underline{v} \langle t \rangle >0$ if $\underline{v}$ is future pointing, and is defined by
 \begin{equation}
 \ELab{t}=\{ \underline{v}\in TM \colon \underline{v}\langle t \rangle =1\text{ and $\underline{v}$ is timelike} \}.
 \label{Eqn Intro ELab}
 \end{equation}
 Working on this bundle adds an additional term to the Vlasov equation relative to the Vlasov field on $\EH$.
 Unlike the unit hyperboloid it does not require a metric compatible connection.
It is also useful when other quantities are defined with reference to a global time (see e.g. \cite{warwick_moment_2023}).
The disadvantage of this approach is that one has to choose a lab time coordinate and there may be complicated transformations from one lab time coordinate to another.
An illustration of kinematic domains as ``slices'' of a conic bundle is given in \cref{Fig Hyperboloid and Lab Bundles}.

Given two kinematic domains, say $E$ and $\hat{E}$, it is possible to transform the Vlasov field $W_E\in\Gamma TE$ to another Vlasov field $W_{\hat{E}}\in\Gamma T\hat{E}$ in such a way that the trajectories in the base space are unaffected.
This transformation corresponds to a reparameterisation of the trajectories associated with the Vlasov field.
For example, the geodesic equation on the unit hyperboloid $\EH$ becomes the pre-geodesic equation on the lab time bundle $\ELab{t}$, and by analogy the Lorentz force equation becomes the pre-Lorentz force equation. 
This extra force term, which is always proportional to the velocity, gives rise to an extra term in the Vlasov field.

\subsection{The Advantages of the Parameter Free Approach}
\label{Sec: Intro Advantages}

The primary goal of this work is to present the formalism of the Vlasov field and the corresponding Vlasov equation, in a way which does not require choosing a kinematic domain, or equivalently, a parameterisation. 
It is a considerable abstraction to pass from the 7-dimensional kinematic formalism to the 8-dimensional formalism, and several new concepts must be defined, such as the Vlasov bivector and the particle density form. 
We identify here a number of reasons why this new formalism is justified.

In this formalism we work on the conic bundle $U\subset TM$, which is a bundle over $M$ and has the same dimension as $TM$. 
This is a subset of the slit tangent bundle $\TMSlit=TM\backslash\{0\}$, which is the subset of $TM$ which excludes the zero vectors. 
It has the conic properties, namely if $\underline{u}\in U$ then for any $\lambda\neq 0$ we have $\lambda\underline{u}\in U$. 
Thus is the set of all possible velocities, which allows for rescaling. 
We say that $U$ is time orientable if it the disjoint union $U=U^+\cup U^-$ where $U^+$ are future pointing. 

There are multiple advantages to this approach:
\begin{itemize}
    \item It removes the arbitrariness of the kinematic domain and the need for additional terms. 
    Thus this approach is fundamentally free of the choice of parameterisation.

    \item It works when modelling lightlike particles where the kinematic domain is 6-dimensional, and hence one cannot use the hyperboloid bundle, $\EH$.
    One could use a lab time bundle $E=\{ \underline{v}\in TM \colon \underline{v}\langle t \rangle =1\text{ and } g(\underline{v},\underline{v})=0\}$.
    In the case of Minkowski spacetime, with $t$ given by the usual Minkowski coordinate, the prolongations of lightlike geodesics remain on $E$. 
    However, in general where there is gravity or where we use an arbitrary time scalar field, this is not the case, and the prolongations of lightlike geodesics will not remain on $E$. See appendix \cref{Lem Null Geo} for an example of this. 
    Thus one has two choices, either to use our approach given here or to use the pre-geodesic equation.
    
    \item In particle accelerators charged particles travel at speeds very close to the speed of light and one can use an ultra-relativistic approximation. 
    Although massless particles are not affected by an electromagnetic fields, ultra-relativistic particles do respond in the limit where the electromagnetic field becomes infinite. 
    In \cite{burton_asymptotic_2007}, the authors consider the ultra-relativistic approximation for a charged fluid. 
    The approach given here will enable the ultra-relativist approximation for a kinetic description of charged particles. 

   \item As seen in \cref{Fig Mass Shell Slip}, the prolongations of solutions to the Lorentz force equation  do not, in general, remain on $\EH$ if the connection is not metric compatible. 
   An example is given in the appendix. 
   By contrast, our approach works with force equations constructed from any connection, and more general models of acceleration \cite{torrome_effect_2023,gallego2024effect}.  

   \item There exist cases where there cannot exist a kinematic domain for topological reasons. 
   For example, suppose we wish to consider all solutions to the autoparallel equation. In spacetime this corresponds to all timelike, lightlike and spacelike geodesics simultaneously. In this case $U=TM$.
   Thus we would have to pick an initial velocity for each direction in order to construct the phase space $E$ upon which we define our Vlasov field.  Each fibre $E_p$ would then be topologically equivalent to the quotient set $T_pM/\sim$ where $\underline{v}\sim\lambda\underline{v}$ for $\lambda\neq 0$. This set is the real projective $(n{-}1)$--space $\mathbb{R}P^{n-1}$ in the case where $M$ is a $n$--dimensional spacetime manifold. However $\mathbb{R}P^{n-1}$ cannot be embedded into $T_pM$, for $n>1$, and hence no kinematic domain $E$ exists. By contrast our formalism remains valid in this case.
   
   \item Since kinematic domains consist only of future time pointing vectors, the existence of $E$ assumes that the system we are considering be time orientable. 
   This is the case for the timelike vectors, but fails when considering all vectors. 
   By contrast out approach works even if $U$ is not time orientable, such as when $U=TM$.

   \item Sometimes it is necessary to work in a particular kinematic domain for practical reasons, such as when performing numerical simulations of astrophysical plasmas \cite{warwick_moment_2023,ginzburg_simulation_2016}. 
   As stated above there are choices of kinematic domains, for example $\EH$ and $\ELab{t}$, and it may be necessary to transform the Vlasov equation between them. One can construct this transformation by considering the underlying ODEs, rescaling them, and then reconstructing the new corresponding Vlasov equation. The advantage of starting with the parameterisation free approach, is that it gives the formula for this transformation directly. This can be visualised in \cref{Fig Hyperboloid and Lab Bundles}.
   This is analogous to the advantages of working in coordinate-free notation. 
   If one is subsequently given a coordinate system, one can easily calculate the corresponding coordinate quantities from the coordinate-free quantities.
   Also, given two coordinate systems, the transformation of these coordinate-quantities is also derived from the coordinate free definitions. 
   Likewise the formula for passing from one kinematic domain to another falls out of our kinematic domain free definition of the Vlasov field.
   
   \item In the case of $\EH$ the kinematic domain is defined using the metric. Our approach works even when the manifold does not possess a metric. For example for the autoparallel equations, one can construct the Vlasov equations with just a connection and no metric. Thus it is compatible with pre-metric formalisms of dynamics (see \cite{hehl_foundations_2003,jancewicz_premetric_2008,itin_premetric_2018}). 
   For electrodynamics, without a metric, one would need to consider a force tensor, analogous to the electromagnetic field, but which mapped vectors to vectors.
    
   \item This formalism can be generalised to Finsler spacetimes \cite{javaloyes_definition_2020}. 
   Furthermore, our formalism is not intrinsically dependant on any geometric objects beyond a base manifold.
   By casting objects from Finsler geometry based kinematic theories (e.g. \cite{hohmann_kinetic_2020}), the dependence of these objects upon the Finsler metric can be better highlighted in a way analogous to pre-metric electromagnetism.
    
    \item The clock hypothesis is defined using the metric. Particles which decay must have a notions of time. However for stable particles, one can argue that the description of the particles motion should not be defined in terms of proper time which does not effect the particles. Our approach may be useful for treatments which do not impose the clock hypothesis, e.g. Mashoon electrodynamics \cite{mashhoon_complementarity_1988,mashhoon_hypothesis_1990,mashhoon_hypothesis_2004,mashhoon_nonlocal_2004}.

    \item Finally, there is a philosophical argument. There is vague distinction between the  kinematics and the dynamics of a system. In the case of particle dynamics, the kinematics simply state that we are interested in curves that satisfy some unspecified ODE, whereas the dynamics prescribe the ODE. It is the dynamics that require the connection and, maybe a metric. By contrast in standard Vlasov on $\EH$, the kinematics are defined on the kinematic domain that requires a metric. Thus the metric is introduced at the kinematic stage rather than the dynamics stage.

\end{itemize}


\subsection{The Vlasov Bivector and the Transport equations}

\begin{figure}[tb]
\centering
\includegraphics[scale=0.65]{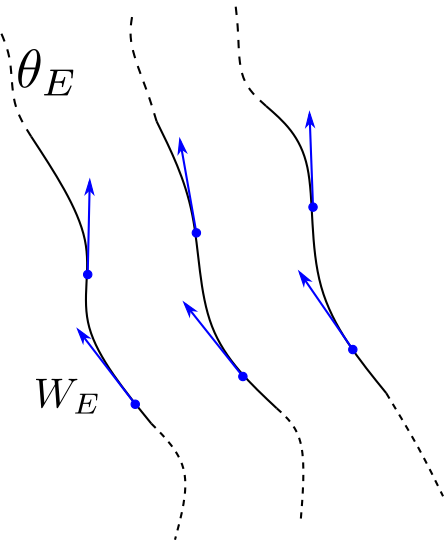}
\caption{\label{Fig Transport Equations} Diagram of the transport equations on a $(2n-1)$--dimensional kinetic domain $E$ using form submanifolds.
The details of submanifolds are described in \cite{gratus_pictorial_2017}.
The form manifolds of $\theta_E\in \Gamma\Lambda^{2n-2} E$ (black lines) do not terminate ($d\theta=0$) and are tangent to the vector field $W_E\in\Gamma TE$, represented by blue arrows ($i_{W_E}\theta_E=0$).}
\end{figure}

The primary object of interest in this paper is the Vlasov bivector, $\Psi$.
The Vlasov bivector is constructed on the conic bundle $U$ as oppose to a given kinematic domain.
Informally we may consider these to be the generalisation of a Vlasov field $W_E$ on $E$ to a geometric object on $U$.
Although we use the language of bivectors here, one can use the equivalent language of foliations to describe a Vlasov bivector.
Each $\Psi$ is constructed in such a way that the 2--dimensional leaves, 
\cref{fig:intro Leaves} and \cref{Geometric Bivector} that compose their foliation intersect any given kinematic domain $E$ to produce 1--dimensional curves, \cref{Magic Formula Diagram}.
These curves are exactly the integral curves of a Vlasov field associated with $\Psi$ which is tangent to $E$.

\begin{figure}[tb]
\centering
\includegraphics[scale=0.6]{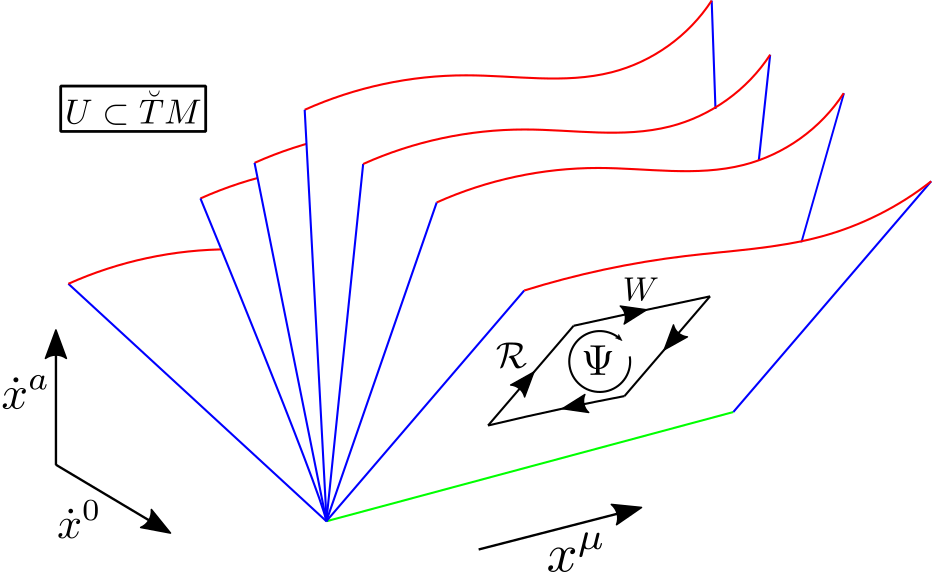}
\caption{\label{Geometric Bivector} 
Sketch of an integrable Vlasov bivector $\Psi$. Here, a possible form for $\Psi$ is $\Psi=\mathcal{R}\wedge W$. 
Notice that since $\Psi$ is integrable, the bivectors `knit-together' to from the leaves of a foliation. 
The ambient space is $U\subset \TMSlit$. The green line is to indicate the absence of 0-vectors in our space.
The density of the leaves correspondence to the velocity density of our one particle distributions function with higher density towards the middle and lower density towards the sides.
Viewing this diagram as the particle density form $\theta$, then the observation that the leaves are tangent to $\Psi$ and that they have no boundary, is equivalent to the transport equations given in \cref{PF Transport Equations}.}
\end{figure}

Vlasov bivectors also have the added benefit that they account for all projectively related Vlasov fields.
Consequently, the Vlasov bivector accounts for all parameterisations in tandem without explicit reference to any of them.
Furthermore, the equation for transforming Vlasov fields between kinematic domains can be easily derived from the Vlasov bivector, hence the choice to describe our formalism in terms of bivectors instead of foliations.
Vlasov bivectors are the main topic of discussion in \cref{Sec Vlasov Bivectors}.

In general it is not informative to attempt to extend the particle density function $f_E$ to $U$. Instead, we first replace $f_E$ with the particle density 6-form $\theta_E$. This is depicted in \cref{Fig Transport Equations}. This 6-form has two conditions, equivalent to \cref{Intr Vlasov Eqn on E} which together we call the {\em transport equations}. The first is that it is closed, $d\theta_E=0$ which corresponds to the lines not terminating (i.e. particles being created to lost). The second is that it is tangent to the Vlasov field $i_{W_E}\theta_E=0$. 
Furthermore, the transport equations apply to a more general set of theories than simply the Vlasov equation.
The transport equations are discussed in \cref{Subsec Transport Equations}.

The geometric notion of the transport equations can be more easily translated into our formalism.
In our case, the particle density from on $E$ $\theta_E$ is replaced with the particle density form $\theta\in\Gamma\Lambda^{2n-2}U$. This $(2n{-}2)$--form on a $2n$--dimensional manifold is also depicted in figure \cref{fig:intro Leaves}, using the idea of form-submanifolds described in \cite{gratus_pictorial_2017}. The closure of $\theta$ corresponds to the fact that the form-submanifolds do not have boundaries in $U$, they are also tangent to $W$. The regions of higher particle density correspond to the leaves being denser.
In \cref{Subsect Transport Equations on U} the transport equations are generalised to $U$ in terms of the Vlasov bivector and the particle density form on $U$.
These transport equations on $U$ can be reduced to transport equations on any given $E$ provided suitable conditions are satisfied.

Having established the Vlasov bivector, $W$, and the particle density form $\theta_E$, we show how to relate them to the corresponding Vlasov vector $W_E$ and density form $\theta_E$. Of course this is only possible if we are dealing with a time orientable system. We say that the conic bundle is time orientable if we can write $U$ in terms of a disjoint union of future pointing and past pointing vectors, discussed in section \cref{Subsec Specific Notation}. When $U$ is time orientable, we show how to translate between $W$ and $W_E$, and between $\theta$ and $\theta_E$.


\subsection{Signposting}

The remainder of section 1 is devoted to the notation and conventions used throughout this paper.
In section 2 we begin by introducing the kinematic domain and its properties in \cref{Subsect KD}: a generalisation of the mass shell and lab time bundles upon which we can perform plasma kinematics.
In this subsection we also introduce the Vlasov field $W$ on the conic bundle $U$ and explore its relationship with the Vlasov field $W_E$ on kinematic domains.
A useful tool for describing a kinematic domain is a \Fkin, these are introduced and discussed in \cref{Subsec Kinematic Indicators}.
Integral curves of Vlasov fields are discussed in \cref{Subsec ICs}.
We also discuss the conditions for which two different Vlasov fields correspond to the same trajectories.
In \cref{Sec Transforming Between Domains} we discuss how to transform Vlasov fields between kinematic domains in such a way that the trajectories of particles are preserved. 
In \cref{Subsec Transport Equations} we introduce the transport equations, a geometric method of interpreting the Vlasov equation on $E$.
We also discuss some advantages of this approach.

In \cref{Sec Vlasov Bivectors} we introduce the Vlasov bivector $\Psi$.
First, the necessary properties to define Vlasov bivectors are discussed in \cref{Subsect Bivectors} and \cref{Subsect Horizontal Bivectors}.
Vlasov bivectors themselves, their benefits and geometric interpretation are given in \cref{Subsect Vlasov Bivectors}. 
We also illustrate the correspondence between bivectors and foliations here.

Section 4 deals with the particle density form on $U$ and its applications.
The transport equations are generalised to the conic bundle using the Vlasov bivector in \cref{Subsect Transport Equations on U}.
We also discuss the geometric interpretation of these new transport equations here.
In \cref{Subsec PDF on U into E}, we explore the conditions necessary to define a particle density form on $U$ given a particle density from on a kinematic domain $E$ and vice versa.
We then apply the particle density on $U$ to define a current $(n{-}1)$--form on $M$ in \cref{Subsec Current Density}.
We also show that this current form is the same as the current form typically defined on $E$. In section \cref{subsect Stress energy} we define the stress-energy 3-forms. We see that unlike the case of the current, these depend on choice of kinematic domain $E$. 
We conclude in \cref{Sec Conclusion}.

\begin{figure}[tb]
\centering
\includegraphics[scale=0.55]{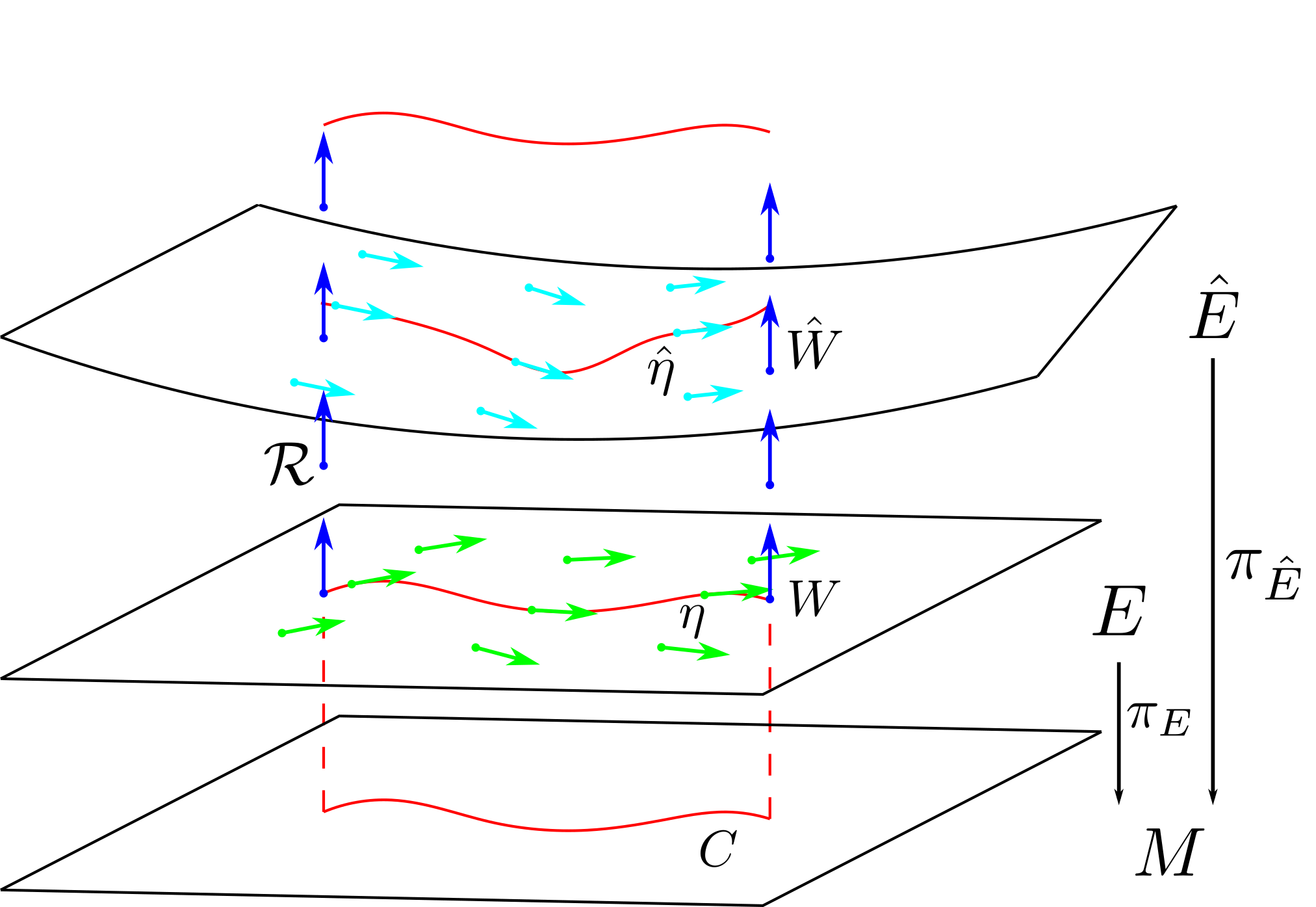}
\caption{\label{Magic Formula Diagram}
Given a two dimensional manifold $M$ and two different kinematic domains $E$ and $\hat{E}$, we can observe the relationship between two Vlasov fields $W$ and $\hat{W}$ related by \cref{Thm Magic Formula}.
The Vlasov field $W$ (resp. $\hat{W}$), depicted by green (light blue) arrows, are tangent to $E$ ($\hat{E}$) and generate integral curves $\eta$ ($\hat{\eta}$), depicted by red lines. Both integral curves project down into the same curve on $M$, denoted $C$, also a red line.
The radial vector field $\mathcal{R}$ is denoted by dark blue arrows.
This diagram can be identified with \cref{Fig Hyperboloid and Lab Bundles}.}
\end{figure}

\subsection{Notation and Conventions for general manifolds}
\begin{table*}
\begin{center}
\begin{tabular}{|c|c|c|c|}
\hline 
Object Type & 
    Set of Objects
& Generic & Specific\\
\hline
Manifolds & N.A. & $N,P,K$ & $M,E,U$\\
\hline
Scalar Fields & $\Gamma\Lambda^0N$ & $h,k$ & $f,F,g$\\
\hline
Vector Fields & $\Gamma TN$ & $X,Y,Z,V$ & $W$\\
\hline
$p$-forms & $\Gamma\Lambda^pN$ & $\alpha,\beta,\gamma,\omega$ & $\theta,\theta_E,\chi,\mathcal{J},\mathcal{J}_E$\\
\hline
Bivectors & $\Gamma\mathcal{B}^2(N)$ & $\Phi$ & $\Psi$\\
\hline
\end{tabular}
\end{center}
\caption{
Table of symbols use throughout.
The objects in the column marked specific are reserved for specific objects e.g. $W\in\Gamma TU$ is reserved for the Vlasov field, $\theta$ is used for the particle density form etc.}
\label{tab of symbols}
\end{table*}

Throughout this paper Greek indices will run from 0 to $n-1$ ($\mu,\nu=0,1,2,...,n-1$) and Latin indices will run from 1 to $n-1$ ($a,b=1,2,...,n-1$), unless specified otherwise.
We will use $M$ to denote a connected $n$-dimensional manifold and $TM$ to denote the tangent bundle over $M$ (all manifolds are assumed to be smooth).
Points in $M$ will generally be denote $p,q\in M$.
Vectors on $M$ which are points in $U$ will be denoted $\underline{u}\in U$.
The space of vector fields over an arbitrary smooth manifold $N$ of dimension $\ell$ is denoted $\Gamma TN$ and its elements are denoted $X,Y\in\Gamma TN$.
Maps between abstract manifolds are denoted $\Theta:N\rightarrow P$.
The space of scalar fields is denoted $\Gamma\Lambda^0N$ and we denote its elements $f,g,h,F,G,H\in\Gamma\Lambda^0N$ depending on context.
The space of $r$--forms is given by $\Gamma\Lambda^r N$ and we denote its elements by $\alpha,\beta\in\Gamma\Lambda^r N$. Although we consider both vectors at points and vector fields, we only consider $r$--form fields. Hence when referring to $r$--forms, the word field is implicit.
A field evaluated at a point will be denoted by $u\vert_{\underline{p}}$ and the action of a vector field on a scalar field $u\langle h\rangle$. A table of the generic as specific symbols is given in table \ref{tab of symbols}.

The internal contraction of an $r$--form $\alpha$ by a vector field $X$ is given by $i_X\alpha$ (or $\alpha{:} X$ for the special case when $\alpha$ is a 1--form), and the Lie derivative of $\alpha$ along $X$ is denoted by $L_X\alpha$.
Contraction by coordinate partial derivatives are denoted $i_\mu^{(x)}=i_{\partial_\mu^{(x)}}$ etc.
A similar convention is used for Lie derivatives along a coordinate vector field where we denote $L_{\partial_\mu^{(x)}}=L_\mu^{(x)}$ etc.

\begin{definition}[Pullback and Pushforward]\label{Def Pull and Push}
Given a map between manifolds $\Theta\colon N\rightarrow P$ , then the \textit{pushforward} of a vector at a point $\underline{v}\in T_p N$ is given by
\begin{equation}\label{Eqn Pushgorward}
\left( \Theta_\ast \underline{v} \right)\langle h \rangle = \underline{v}\langle h\circ \Theta\rangle.
\end{equation}
The \textit{pullback} of a scalar field $h\in\Gamma\Lambda^0P$ is given by
\begin{equation}
\Theta^\ast h= (h\circ \Theta).
\end{equation}
We may also pullback for forms of arbitrary degree $r$ according to the following rules:
\begin{align}
    \Theta^\ast (d h)=& d (\Theta^\ast h),\\
    \Theta^\ast(\alpha\wedge\beta)=& \Theta^\ast\alpha\wedge \Theta^\ast\beta,
\end{align}
for $h\in\Gamma\Lambda^0N,\; \alpha\in\Gamma\Lambda^1 N,\; \beta\in\Gamma\Lambda^{r-1}N$.
The pullback and pushforward satisfy the compatibility property:
\begin{equation}
\alpha\colon \Theta_\ast \underline{v} = \Theta^\ast\alpha \colon \underline{v},
\end{equation}
for any 1--form $\alpha\in\Gamma\Lambda^1N$ and point vector $\underline{v}\in T_pN$.
Observe that we pushforward vectors at points, while we pull back $r$--forms. 
\end{definition}

\begin{definition}[Tangential Vector Fields]\label{Def Tangent}
Given an embedding $\Theta\colon K\hookrightarrow N$,
a vector field $X\in\Gamma TN$ is \textit{tangent} to the submanifold $K$ if, there exists a vector field $Y\in\Gamma TK$ such that
\begin{equation}\label{Eqn Tangent}
\Theta_\ast(Y\vert_{p})= X\vert_{\Theta(p)}, \; \forall p\in K.
\end{equation}
Note that if $Y$ exists then it is unique.
In this case, we say that $Y\in\Gamma TK$ is induced by $X$.
Given an $r$--form $\alpha\in\Gamma\Lambda^r N$ then
\begin{align}
\Theta^\ast (i_X \alpha) 
=
i_{Y} (\Theta^\ast \alpha ).
\end{align}
\end{definition}

\begin{definition}[Scalar Lift]\label{Def Scalar Lift}
Given a scalar field $h\in\Gamma\Lambda^0N$, the scalar lift defines a \textit{scalar field} $\dot{h}\in\Gamma\Lambda^0TN$ given by
\begin{equation}
\dot{h}\vert_{\underline{u}}= \underline{u} \langle h \rangle.
\end{equation}
\end{definition}

Given a coordinate system $(x^0,\ldots,x^{n-1})$ for the patch $K\subset N$, then this induces a coordinate system $(\bar{x}^0,\ldots,\bar{x}^{n-1},\dot{x}^0,\ldots,\dot{x}^{n-1})$ for $\pi^{-1}(K)\subset TN$, where $\bar{x}^\mu=\pi^\ast x^\mu$ and $\dot{x}^\mu$ is the scalar lift. In this coordinate system an arbitrary vector field $X\in\Gamma TN$ can be written 
\begin{equation}\label{Eqn Vector Field in Coords}
\begin{split}
    X= 
\bar{X}^\mu \frac{\partial \;}{\partial \bar{x}^\mu}+ 
\hat{X}^\mu\frac{\partial \;}{\partial \dot{x}^\mu}
\quad\text{where}\\
\quad
\bar{X}^\mu = X\langle{\bar{x}^\mu}\rangle
\text{ and }
\hat X^\mu = X\langle{\dot{x}^\mu}\rangle
\end{split}
\end{equation}
Every vector field over $TN$ can be defined using \cref{Eqn Vector Field in Coords} and consequently can be defined entirely by its action on scalar fields of type $\pi^\ast h$ and $\dot{h}$ for all $h\in\Gamma\Lambda^0N$.

Throughout we will use the shorthand 
\begin{equation}
\partial_\mu^{(x)}= \frac{\partial\;}{\partial x^\mu},\quad 
\partial_\mu^{(\bar{x})}= \frac{\partial\;}{\partial \bar{x}^\mu},\quad
\partial_\mu^{(\dot{x})}= \frac{\partial\;}{\partial \dot{x}^\mu}.
\end{equation}
Note that $\partial_\mu^{(x)}\in\Gamma TM$ while $\partial_\mu^{(\bar{x})}\in\Gamma TTM$ and  $\partial_\mu^{(\dot{x})}\in\Gamma TTM$.
In induced coordinates, the scalar lift of a scalar field $h\in\Gamma\Lambda^0M$ is given by
\begin{equation}
\dot{h}= \dot{x}^\mu  \partial^{(x)}_\mu h .
\end{equation}


\subsection{Notation and Conventions specific for this article}\label{Subsec Specific Notation}

Let $M$ be a spacetime manifold and $\pi:TM\to M$, the projection from the tangent bundle to the base manifold. The slit tangent bundle $\TMSlit=TM\backslash\{0\}$ is the subset of $TM$ which excludes the zero vectors. 

\begin{definition}[Conic Sub-Bundle]\label{Def Conic Bundle}
A \textit{conic sub-bundle} is a subset of the slit tangent bundle $U\subset \TMSlit$, which satisfies the following properties:
\begin{enumerate}
\item $\pi(U)=M$
\item If $\underline{u}\in U$ then for any $\lambda\neq 0$ we have $\lambda\underline{u}\in U$.
\end{enumerate}
We also use $\pi$ to also denote the restriction $\pi:U\to M$
A formal discussion of conic bundles can be found in \cite{javaloyes_definition_2020}.
We additionally assume the properties necessary for $U$ to have a smooth sub-bundle structure.
\end{definition}

\begin{definition}[Causal Indicator]\label{Def Causal Indicator}
We say $U$ is {\it time orientable} if $U=U^+\cup U^-$ where $U^+\cap U^-=\varnothing$ and $\underline{u}\in U^+$ if and only if $-\underline{u}\in U^-$. We call $\underline{u}\in U^+$ {\it future pointing}.
The \textit{causal indicator} is the scalar field  $\sigma\in\Gamma\Lambda^0U$ given by
\begin{equation}
\sigma\vert_{\underline{u}}=
\begin{cases}
1, & \mbox{if }\underline{u}\in U^+
\\
-1, & \mbox{if } \underline{u}\in U^-
\end{cases}
\end{equation}
\end{definition}

Examples of structures described by conic bundles include the collection of timelike vectors over a spacetime manifold, the light cone, and the causal region of a time orientable manifold.
In the case where consider only the timelike conic bundle the corresponding conic bundle has an open sub-bundle structure.
In the case where we consider the causally connected regions of spacetime (the former case plus the light cone), the corresponding conic bundle is a sub-bundle with a boundary. However it is not a closed set since it excludes the zero vectors.

As stated in the introduction, an important example of conic bundle which is not time orientable is the entire $U=\TMSlit$, for any manifold of dimension greater than 1. This is because each fibre, $\breve{T_p}M$, is connected. 

As discussed in the introduction, when dealing with kinematic domains 
we restrict our attention to time orientable conic bundles. By contrast when dealing with the Vlasov bivector approach we do not have to make this restriction. Thus the bivector approach can be applied to the case when $U=\TMSlit$.

\begin{definition}[Radial Vector Field]
The \textit{radial vector field}, denoted by $\mathcal{R}\in \Gamma TU$, is the unique vector field that satisfies the following condition: for any $h\in\Gamma \Lambda^0M$, $\underline{u}\in U$ and $\pi\colon TM\rightarrow M$ we have
\begin{equation}
\mathcal{R}\langle \pi^\ast h\rangle =0, \quad \mbox{ and } \quad \mathcal{R}\langle \dot{h} \rangle \vert_{\underline{u}}= \dot{h}\vert_{\underline{u}}= \underline{u}\langle h \rangle .
\end{equation}
In local induced coordinates $\mathcal{R}$ is given by
\begin{equation}
\mathcal{R}= \dot{x}^\mu \partial^{(\dot{x})}_\mu.
\end{equation}
This can be seen by acting $\mathcal{R}$ on the induced coordinates: $\mathcal{R}\langle \bar{x}^\mu\rangle=0$, $\mathcal{R}\langle \dot{x}^\mu\rangle = \dot{x}^\mu$.
Within the literature, the radial vector field is known by many names, perhaps most common of which is the vertical vector field. 
We have opted to refer to it as the radial vector field as it exhibits many useful radial properties.
\end{definition}

\begin{definition}[Homogeneity]
A scalar field $G\in\Gamma\Lambda^0U$ is said to be \textit{homogeneous of degree} $k$ (also known as radially homogeneous and fibre-wise homogeneous)  if \begin{equation}
    G\vert_{\lambda{\underline{u}}}= \lambda^k G\vert_{\underline{u}}
    \quad\text{for all}\quad\lambda\in\mathbb{R}\setminus\{ 0 \}.
\end{equation}
\end{definition}
Note that by Euler's theorem of homogeneous functions, a function $G\in\Gamma \Lambda^0U$ is (fibre-wise) $k$--homogeneous if and only if it satisfies
\begin{equation}\label{Eqn Euler Thm}
\mathcal{R}\langle G\rangle =kG.
\end{equation}

\section{Vlasov Systems and Kinematic Domains}
\label{Sect Vlasov E}

\subsection{Kinematic Domains and the Vlasov Picture}\label{Subsect KD}

In order to discuss the Vlasov field we must first establish the space over which it is defined. 
Conventionally, when we perform kinematics with the Vlasov equation, this is done on a choice of 7-dimensional submanifold $E\subset U$.
We call such submanifolds kinematic domains.
For this section we assume that $U$ is time orientable in the sense of \cref{Def Causal Indicator}, and that $E\subset U^+$, that is all points in $E$ are future pointing. As stated in the introduction, this constraint is not needed for the generalisation to Vlasov bivectors, discussed below.
We assume that each spacetime manifold $M$ (and hence $TM$) is connected.

\begin{definition}[Kinematic Domain]\label{Def Kinematic Domain}
A \textit{Kinematic domain} is a $(2n{-}1)$--dimensional submanifold $E\subset U^+$ which satisfies the following properties:
\begin{enumerate}
\item $\pi_E:E\rightarrow M$ is surjective,
\item $E$ is connected,
\item For any $\underline{u}\in U$ there exists a unique $\underline{v}\in E$ and $\lambda\neq 0$ such that $\underline{u}=\lambda \underline{v}$.
\end{enumerate}
\end{definition}

For example, the unit mass shell, or as we refer to it throughout the remainder of this paper, the upper unit hyperboloid, given in \cref{Eqn Intro EH}, is a kinematic domain. We can write this as
\begin{equation}\label{Unit Hyperboloid Bundle}
\EH=\{ \underline{u}\in U:  \FkinUH=1 \text{ and } \sigma|_{\underline{u}}=1 \},
\end{equation}
where
\begin{equation}\label{Eqn Fkin UH}
\FkinUH\colon U\rightarrow \mathbb{R};\; \underline{u}\mapsto -g_{\pi(\underline{u})}(\underline{u},\underline{u}),
\end{equation}
Another example of a kinematic domain would be the lab time bundle, \cref{Eqn Intro ELab}. 
An example from Finsler geometry\footnote{Recall, examples from Finsler geometry may be ignored for readers unfamiliar with this geometry. For this reason we do not give definitions of standard Finsler objects.} is defined in \cite{hohmann_non-metric_2016}, where the Finsler spacetime $(M,L,F)$ admits an observer bundle
\begin{equation}
    \mathcal{O}= \bigcup_{p\in M} \mathcal{S}_p,
 \end{equation}
where
\begin{equation}
    \mathcal{O}_p= \left\{ 
    \begin{array}{c}
       \underline{u}\in T_pM \; \colon \; L\vert_{\underline{u}}=\pm1, \\
       g^{(L)}_{\mu\nu}\vert_{\underline{u}} 
       \mbox{ has signature } (L,-L,-L,-L) 
    \end{array}
    \right\},
\end{equation}
and $\mathcal{S}_p$ is a non-empty closed connected component of $\mathcal{O}_p$.
Here $g_{\mu\nu}^{(L)}$ is the metric induced by the fundamental function $L$.

We refer to two types of Vlasov fields throughout this paper: Vlasov fields on kinematic domains (denoted $W_E\in\Gamma TE$ for $E\subset U$), and Vlasov fields in $U$ (denoted $W\in\Gamma TU$).

\begin{definition}[Vlasov field on $E$]
Given a kinematic domain $E$, a \textit{Vlasov field on} $E$ then $W_E\in\Gamma TE$, is a Vlasov vector field, if it satisfies the horizontal condition
\begin{equation}
\pi_\ast(W_E\vert_{\underline{v}})=\underline{v}, \mbox{  or equivalently,  }  W_E\langle \pi^\ast h \rangle =\dot{h},
\end{equation}
for any $\underline{v}\in E$ and $h\in\Gamma\Lambda^0E$.
\end{definition}
Given a particle distribution function $f_E\in\Gamma\Lambda^0E$, the Vlasov field $W_E\in\Gamma TE$ satisfies \cref{Intr Vlasov Eqn on E}.
The flow of the Vlasov field defines a set of integral curves which trace out the paths of particles through phase space (see \cref{Subsec ICs}).
The projections of these integral curves into the base space $M$ are the trajectories of the particles described by $f_E$.

An example of the Vlasov field on $\EH$ is the Lorentz force Vlasov equation.
In local coordinates $(t,x^a,v^a)$ on $\EH$, define $v^0$ by solving the constraint $g_{\mu\nu}v^\mu v^\nu=-1$ and choosing the root, such that $v^\mu\partial_\mu\in U^+$. 
In this coordinate system, the Lorentz force Vlasov field is given by
\begin{equation}
W_{\EH}= v^0 \partial_t+ v^a\partial_{a}^{(x)}- \left(\Gamma^{\mu}_{\nu\rho}v^{\nu}v^{\rho}
-\frac{q}{m}g^{\mu\nu}\mathcal{F}_{\nu\rho}v^\rho\right) \partial_{a}^{(v)}.
\end{equation}
Here $\mathcal{F}\in\Gamma\Lambda^2M$ is the Faraday 2--form  which satisfies the Maxwell equations.

\begin{definition}[Vlasov Field on $U$]\label{Def Vlasov Field on U}
Denoted $W\in \Gamma TU$, a \textit{Vlasov field on} $U$ is a vector field with the following defining properties:
\begin{enumerate}
\item $W$ is horizontal,
\begin{equation}\label{W horizontal}
\pi_\ast (W\vert_{\underline{u}})=\underline{u}, \mbox{  or equivalently,  }  W\langle \pi^\ast h \rangle =\dot{h} ,
\end{equation}
for any point vector $\underline{u}\in U$ and $h\in\Gamma\Lambda^0M$;
\item $W$ is radially quadratic, 
\begin{equation}\label{Eqn Radially Quadratic}
W\langle \dot{h}\rangle \vert_{\lambda \underline{u}} = \lambda^2 W\langle \dot{h}\rangle \vert_{\underline{u}},
\end{equation}
for any $h\in\Gamma\Lambda^0M$, $\underline{u}\in U$, and $\lambda\in\mathbb{R}^+$.
\end{enumerate}
From \cref{Eqn Radially Quadratic} and \cref{Eqn Euler Thm} the radially quadratic property can equivalently be stated as 
\begin{equation}\label{Eqn VQ Vlasov}
\mathcal{R}\langle W\langle \dot{h} \rangle\rangle =2W\langle \dot{h}\rangle.
\end{equation}
\end{definition}

\begin{lemma}\label{Lem VQ iff Com}
Let $W\in\Gamma TU$ be horizontal. Then $W$ is radially quadratic and hence a Vlasov field if and only if 
\begin{equation}\label{Involution}
[\mathcal{R},W]=W.
\end{equation}
\end{lemma}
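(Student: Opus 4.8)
The plan is to prove the identity of vector fields $[\mathcal{R}, W] = W$ by testing both sides against scalar fields, using the fact recorded after \cref{Eqn Vector Field in Coords} that a vector field on $U$ is completely determined by its action on scalars of the two types $\pi^\ast h$ and $\dot h$ with $h \in \Gamma\Lambda^0 M$ (these are precisely the induced coordinate functions $\bar x^\mu$ and $\dot x^\mu$). Since $\mathcal{R}$ and $W$ are both tangent to $U$, so is their commutator, and hence it suffices to compare $[\mathcal{R},W]\langle k\rangle$ with $W\langle k\rangle$ for $k = \pi^\ast h$ and $k = \dot h$. Throughout I use the derivation identity $[\mathcal{R},W]\langle k\rangle = \mathcal{R}\langle W\langle k\rangle\rangle - W\langle \mathcal{R}\langle k\rangle\rangle$.

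First I would dispatch the $\pi^\ast h$ case, which holds unconditionally from horizontality. Taking $k = \pi^\ast h$ and using $W\langle\pi^\ast h\rangle = \dot h$ from \cref{W horizontal}, together with the defining properties $\mathcal{R}\langle\pi^\ast h\rangle = 0$ and $\mathcal{R}\langle\dot h\rangle = \dot h$, the commutator collapses to $[\mathcal{R},W]\langle\pi^\ast h\rangle = \mathcal{R}\langle\dot h\rangle - W\langle 0\rangle = \dot h = W\langle\pi^\ast h\rangle$. So on this class the two vector fields always agree, and the radially quadratic hypothesis plays no role.

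The content of the lemma therefore sits entirely in the $\dot h$ case. Setting $k = \dot h$ and using $\mathcal{R}\langle\dot h\rangle = \dot h$ gives $[\mathcal{R},W]\langle\dot h\rangle = \mathcal{R}\langle W\langle\dot h\rangle\rangle - W\langle\dot h\rangle$. Demanding that this equal $W\langle\dot h\rangle$ for every $h$ is, after rearranging, exactly the condition $\mathcal{R}\langle W\langle\dot h\rangle\rangle = 2 W\langle\dot h\rangle$, which is the restatement \cref{Eqn VQ Vlasov} of radial quadraticity (equivalently \cref{Eqn Radially Quadratic} via Euler's theorem \cref{Eqn Euler Thm}). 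Reading this as a biconditional quantified over all $h$ yields both directions at once: if $W$ is radially quadratic the equality holds and hence $[\mathcal{R},W] = W$; conversely $[\mathcal{R},W] = W$ forces the displayed equation for all $h$, which is radial quadraticity.

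I do not anticipate a substantive obstacle; the only point requiring care is the justification that agreement on the two function classes $\pi^\ast h$ and $\dot h$ certifies equality of the vector fields. This rests on those functions supplying a full induced coordinate system on $U$ (so that their differentials span each cotangent space) and on $[\mathcal{R},W]$ genuinely lying in $\Gamma TU$, so that the comparison is taking place inside the correct space of vector fields.
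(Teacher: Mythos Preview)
Your proposal is correct and follows essentially the same approach as the paper: test the identity on the two families $\pi^\ast h$ and $\dot h$, use horizontality and $\mathcal{R}\langle\dot h\rangle=\dot h$ to handle the first unconditionally, and reduce the second to $\mathcal{R}\langle W\langle\dot h\rangle\rangle=2W\langle\dot h\rangle$. The only cosmetic difference is that the paper writes out the two implications separately rather than packaging the $\dot h$ case as a single biconditional.
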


\begin{proof}
Let $W$ be radially quadratic then for any $f\in\Gamma\Lambda^0M$, then $W\langle \dot{f} \rangle$ is a 2--homogeneous scalar field.
It follows that $\mathcal{R}\langle W\langle \dot{f}\rangle \rangle= 2W\langle \dot{f}\rangle$ by \cref{Eqn VQ Vlasov}.
Hence $[\mathcal{R},W]\langle \dot{f}\rangle= \mathcal{R}\langle W\langle \dot{f}\rangle \rangle-W\langle \mathcal
R\langle\dot{f}\rangle\rangle= W\langle \dot{f}\rangle$.
From \cref{W horizontal}, we also have 
\begin{equation*}\label{Eqn Phi Vertical Part}
[\mathcal{R},W]\langle \pi^\ast f\rangle =\mathcal{R}\langle \dot{f} \rangle = \dot{f} =W\langle \pi^\ast f\rangle,
\end{equation*}
and hence $[\mathcal{R},W]=W$.

Suppose that $[\mathcal{R},W]=W$, then for any $f\in\Gamma\Lambda^0M$ we have
\begin{equation}
W\langle \dot{f} \rangle= [\mathcal{R},W]\langle \dot{f}\rangle = \mathcal{R}\langle W\langle \dot{f}\rangle\rangle - W\langle \dot{f}\rangle.
\end{equation}
Rearranging the above gives $\mathcal{R}\langle W\langle \dot{f}\rangle \rangle = 2W\langle \dot{f}\rangle$.
Hence $W$ is radially quadratic.
\end{proof}

In figure \cref{Fig Hyperboloid and Lab Bundles} we see two Vlasov fields on $U$, that represent the same trajectories on $M$, just with different parameterisations. 
One is tangent to the unit hyperboloid $E_H$ while the other is tangent to a lab time bundle. In section \cref{Sec Transforming Between Domains} we see the formula for transforming from one Vlasov field to another in such a way that the particle trajectories are unaffected.

The Vlasov field on $U$ can be written in local coordinates as
\begin{equation}\label{Vlasov Coords}
W= \dot{x}^\mu \partial^{(x)}_\mu  +\varphi^\mu \partial_\mu^{(\dot{x})}.
\end{equation}
where $\varphi^\mu$ are 2--homogeneous scalar fields $\varphi^\mu\vert_{\lambda\underline{u}}=\lambda^2\varphi^\mu\vert_{\underline{u}}$.

The Vlasov field $W$ on $U$ can be reduced to a Vlasov field on $E$ provided it is tangent to it.
\begin{definition}[Extension of a Vlasov Field on $E$]\label{Def Extension of WE}
Given a kinematic domain $E$ with inclusion map $\Sigma_E:E\hookrightarrow U$ and a Vlasov field $W_E\in\Gamma TE$, we call $W\in\Gamma TU$ the extension of $W_E$ if $W$ is tangent to $E$ and $W_E$ is induced by $W$, as in \cref{Eqn Tangent},
\begin{equation}\label{Eqn Vlasov Pushforward}
\Sigma_{E}{}_\ast \left( W_{E}\vert_{\underline{v}} \right)= W \vert_{\Sigma_E(\underline{v})}.
\end{equation}

\end{definition}

\begin{lemma}\label{Lem Homogeneity of W<G>}
Let $W$ be a Vlasov field and $G$ be any $k$--homogeneous scalar field.
Then,
\begin{equation*}
W\vert_{\lambda\underline{u}}\langle G\rangle
=
\lambda^{k+1} W\vert_{\underline{u}}\langle G\rangle .  
\end{equation*}
\end{lemma}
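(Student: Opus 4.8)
The plan is to reframe the pointwise scaling law as an infinitesimal homogeneity statement and then read it off from the commutator identity already established for Vlasov fields. Concretely, by Euler's theorem in the form \cref{Eqn Euler Thm}, the claim $W\vert_{\lambda\underline{u}}\langle G\rangle = \lambda^{k+1}\, W\vert_{\underline{u}}\langle G\rangle$ is equivalent to asserting that the scalar field $W\langle G\rangle\in\Gamma\Lambda^0 U$ is homogeneous of degree $k+1$, i.e. $\mathcal{R}\langle W\langle G\rangle\rangle = (k+1)\,W\langle G\rangle$. I would therefore establish this last identity rather than integrating the scaling flow by hand.

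First I would apply \cref{Lem VQ iff Com}, which guarantees that the Vlasov field $W$ satisfies $[\mathcal{R},W]=W$. Evaluating both sides on $G$ gives
\[
\mathcal{R}\langle W\langle G\rangle\rangle - W\langle \mathcal{R}\langle G\rangle\rangle = W\langle G\rangle .
\]
Next, since $G$ is $k$--homogeneous, Euler's theorem \cref{Eqn Euler Thm} gives $\mathcal{R}\langle G\rangle = kG$, and using that $W$ acts as a derivation, $W\langle kG\rangle = k\,W\langle G\rangle$. Substituting and rearranging yields $\mathcal{R}\langle W\langle G\rangle\rangle = (k+1)\,W\langle G\rangle$. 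Finally I would invoke the converse direction of \cref{Eqn Euler Thm}, now applied to the scalar field $W\langle G\rangle$, to conclude the stated scaling.

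I do not anticipate a substantive obstacle: once reframed infinitesimally, the lemma is a one-line consequence of the Leibniz rule together with $[\mathcal{R},W]=W$. The only points requiring a little care are bookkeeping ones---ensuring that $W\langle G\rangle$ is genuinely a scalar field on $U$ so that $\mathcal{R}$ may act on it, and that Euler's theorem is used as a true equivalence (as it is stated in \cref{Eqn Euler Thm}) so that the infinitesimal identity really does deliver the global scaling factor $\lambda^{k+1}$. A more pedestrian alternative would be to differentiate the defining radially quadratic relation and the homogeneity of $G$ directly in induced coordinates, but routing the argument through the commutator lemma is cleaner and avoids coordinate computations entirely.
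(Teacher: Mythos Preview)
Your argument is correct but proceeds by a genuinely different route from the paper. The paper works entirely in induced coordinates: it computes how $\partial_\mu^{(x)}G$ and $\partial_\mu^{(\dot{x})}G$ scale under $\underline{u}\mapsto\lambda\underline{u}$ (degrees $k$ and $k-1$ respectively), then combines these with the known scaling of $\dot{x}^\mu$ and $\varphi^\mu$ (degrees $1$ and $2$) in the coordinate expression $W=\dot{x}^\mu\partial_\mu^{(x)}+\varphi^\mu\partial_\mu^{(\dot{x})}$ to read off degree $k+1$ directly. You instead reduce the claim to the infinitesimal Euler condition $\mathcal{R}\langle W\langle G\rangle\rangle=(k+1)W\langle G\rangle$ and obtain it in one line from $[\mathcal{R},W]=W$ (\cref{Lem VQ iff Com}) together with $\mathcal{R}\langle G\rangle=kG$. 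Your approach is coordinate-free and conceptually cleaner, and it makes transparent that the result is nothing more than the Leibniz rule applied to the commutator identity; the paper's approach is more self-contained in that it does not invoke \cref{Lem VQ iff Com}, and it makes the mechanism of the scaling visible term by term. The caveat you flag about the converse direction of Euler's theorem is the right one to note, but since the paper states \cref{Eqn Euler Thm} as an equivalence you are entitled to use it.
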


\begin{proof}
Let $G$ be a $k$--homogeneous function.
First note that
\begin{equation*}
    \partial_\mu^{(\dot{x})}\left( G\vert_{\lambda\underline{u}} \right)= \lambda \left( \partial_\mu^{(\dot{x})} G \right)\big{\vert}_{\lambda\underline{u}},
\end{equation*}
by the chain rule.
Hence we have
\begin{equation*}
    \begin{split}
        \left( \partial_\mu^{(\dot{x})} G \right)\big{\vert}_{\lambda\underline{u}}
    &= \lambda^{-1} \partial_\mu^{(\dot{x})}\left( G\vert_{\lambda\underline{u}} \right)\\ 
    {}&= \lambda^{-1} \partial^{(\dot{x})}_\mu \left( \lambda^k G\vert_{\underline{u}} \right)\\
    {}&= \lambda^{k-1} \left( \partial_\mu^{(\dot{x})} G \right)\big{\vert}_{\underline{u}}.
    \end{split}
\end{equation*}
We also have
\begin{equation*}
    \left(\partial_\mu^{(x)}G  \right) \big{\vert}_{\lambda\underline{u}}=  
 \partial_\mu^{(x)}\left(\lambda^k G\vert_{\underline{u}}  \right) =\lambda^k \left(\partial_\mu^{(x)}G  \right) \big{\vert}_{\underline{u}}.
\end{equation*}
By expanding the Vlasov field in coordinates we get
\begin{equation*}
\begin{split}
    W\vert_{\lambda\underline{u}}\langle G\rangle & = \left( \dot{x}^\mu\partial_\mu^{(x)}G+ \varphi^\mu\partial_\mu^{(\dot{x})}G  \right)\big{\vert}_{\lambda \underline{u}}\\
    {}&= \lambda\underline{u} \lambda^k \left(\partial_\mu^{(x)}G  \right) \big{\vert}_{\underline{u}}\\
    {}& \quad +\lambda^2 \varphi^\mu\vert_{\underline{u}} \lambda^{k-1} \left( \partial_\mu^{(\dot{x})} G \right)\big{\vert}_{\underline{u}} \\
    {}&= \lambda^{k+1} \left( \dot{x}^\mu\partial_\mu^{(x)}G+ \varphi^\mu\partial_\mu^{(\dot{x})}G  \right)\big{\vert}_{\underline{u}}\\
    {}&= \lambda^{k+1} W\vert_{\underline{u}} \langle G\rangle,
\end{split}
\end{equation*}
hence the result.
\end{proof}

It is worth noting at this point that what we call a Vlasov fields on $U$ are referred to as sprays and $W_E$ are referred to as semi-sprays in the literature (see \cite{shen_differential_2001} or \cite{lang_vector_1999} for an overview of the theory of sprays and semi-sprays).
The correspondence between these objects is explored in \cref{Spray Section}.

We now give the example of the Lorentz force equation and the corresponding Vlasov fields adapted to the unit hyperbolid $\EH$ and the lab time $\ELab{t}$. Here $U$ is the conic bundle of timelike vectors.
The Vlasov field on $U$ adapted to $\EH$ is given by
\begin{equation}\label{Eqn LF Vlasov}
W=\dot{x}^\mu \partial^{(x)}_\mu
+ \left(\frac{q}{m} \sigma \sqrt{ \FkinUH } g^{\mu\nu}\mathcal{F}_{\nu\rho}\dot{x}^\rho- \Gamma^\mu_{\nu\rho}\dot{x}^\nu\dot{x}^\rho
 \right) \partial^{(\dot{x})}_\mu,
\end{equation}
where $\FkinUH$ is as defined in \cref{Eqn Fkin UH}.
The inclusion of the factor $\sigma \sqrt{\FkinUH}$ is to ensure the $\partial^{(\dot{x})}_\mu$ term is 2-homogeneous.
The parameterisation associated with the trajectories of this Vlasov field is proper time $\tau$. See \cref{Subsec ICs} for an overview of the  parameterisation of the integral curves of a Vlasov field.
The Vlasov field describing particles subject to the Lorentz force in the lab time bundle $W_{\ELab{t}}$ with associated lab time $t$ is given by
\begin{align}
\begin{split}
    \hspace{-0.6em}
    \hat{W} &= \dot{x}^\mu \partial^{(x)}_\mu
    + \bigg{(} \frac{q}{m} \sigma \sqrt{ \FkinUH } g^{\mu\nu}\mathcal{F}_{\nu\rho}\dot{x}^\rho-\Gamma^\mu_{\nu\rho}\dot{x}^\nu\dot{x}^\rho
    \bigg{)} \partial^{(\dot{x})}_\mu\\
    {}&- \bigg{(}  \frac{q}{m} \sigma \sqrt{\FkinUH} g^{\lambda\nu}\mathcal{F}_{\nu\rho}\dot{x}^\rho \frac{\partial t}{\partial x^\lambda} 
    - \Gamma^\lambda_{\nu\rho}\dot{x}^\nu\dot{x}^\rho \frac{\partial t}{\partial x^\lambda}\\
    {}& +\dot{x}^\nu\dot{x}^\rho \frac{\partial t}{\partial x^\nu\partial x^\rho}   \bigg{)} \frac{\dot{x}^\mu}{\dot{t}}  \partial_\mu^{(\dot{x})}.\label{Eqn LT Vlasov}
\end{split}
\end{align}
When using a lab time bundle it is often convenient to choose a coordinate system $(t,x^1,x^2,x^3)$ adapted to the lab time $t$. In this coordinate system \cref{Eqn LT Vlasov} becomes
\begin{align}
\begin{split}
    W_{\ELab{t}} &= \dot{x}^\mu \partial^{(x)}_\mu
    + \bigg{(} \frac{q}{m} \sigma \sqrt{ \FkinUH } g^{\mu\nu}\mathcal{F}_{\nu\rho}\dot{x}^\rho
    -\Gamma^\mu_{\nu\rho}\dot{x}^\nu\dot{x}^\rho
    \bigg{)} \partial^{(\dot{x})}_\mu\\
    {}&- \left(  \frac{q}{m} \sigma \sqrt{\FkinUH} g^{0\nu}\mathcal{F}_{\nu\rho}\dot{x}^\rho- \Gamma^0_{\nu\rho}\dot{x}^\nu\dot{x}^\rho   \right) \frac{\dot{x}^\mu}{\dot{t}}\partial_\mu^{(\dot{x})},\label{Eqn LT Vlasov Coord}
\end{split}
\end{align}
where $x^0=t$.
We see below in subsection \ref{subsec Vlasov LF}, that we can demonstrate the transformation between \cref{Eqn LF Vlasov} and \cref{Eqn LT Vlasov} using the underlying 2nd order ODE. However it is much easier to calculate these transformation after we have defined the kinematic indicator.


\subsection{Kinematic Indicators}\label{Subsec Kinematic Indicators}

A kinematic domain can be defined in terms of a scalar field on $U$, we call a kinematic indicator.

\begin{definition}[Kinematic Indicator]\label{Def Kinematic Indicator}
A \textit{\Fkin} for $E$, is a non-vanishing scalar field $F\in\Gamma\Lambda^0 U$ with non zero integer degree of homogeneity $k$, $F\vert_{\lambda \underline{v}}=\lambda^k F\vert_{\underline{v}}$, such that $F$ is positively valued on $U^+$, and $E$ is given by 
\begin{equation}\label{Eqn E hat}
E=\{ \underline{u}\in U \colon F\vert_{\underline{u}}=a \; \& \; \sigma\vert_{\underline{u}}=1\}
\end{equation}
for some positive number $a$.
Note that if $k$ is odd the inclusion of the $\sigma\vert_{\underline{u}}=1$ condition is unnecessary.
\end{definition}
For example $\FkinUH$, given in \cref{Eqn Fkin UH} is a kinematic indicator for $\EH$, the unit hyperboloid. A second example $\dot{t}$, is a kinematic indicator for $\ELab{t}$, the lab time bundle.

\begin{lemma}\label{Lem F from E}
Given an arbitrary kinematic domain $E$, there exists a unique 1--homogeneous \Fkin $F\in\Gamma\Lambda^0 U$ such that $E=\{ \underline{u}\in U : F\vert_{\underline{u}}=1 \}.$

For each $\underline{u}\in U$ this is given by 
\begin{equation}
    F\vert_{\underline{u}}=\lambda \quad \mbox{where $\;\underline{u}=\lambda\underline{v}\;$  for a unique $\underline{v}\in E$.}
    \label{Eqn F11}
\end{equation}
\end{lemma}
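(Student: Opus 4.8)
The plan is to \emph{define} $F$ by the formula \cref{Eqn F11} and then verify it has each property demanded of a \Fkin. Given $\underline{u}\in U$, the third condition of \cref{Def Kinematic Domain} supplies a unique $\underline{v}\in E$ and a $\lambda\neq 0$ with $\underline{u}=\lambda\underline{v}$. I would first note that $\lambda$ is itself unique: since $\underline{v}$ lies in $\TMSlit$ it is nonzero, so $\lambda\underline{v}=\lambda'\underline{v}$ forces $\lambda=\lambda'$. Setting $F|_{\underline{u}}=\lambda$ therefore gives a well-defined function $F\colon U\to\mathbb{R}$. The identity $E=\{\underline{u}\in U : F|_{\underline{u}}=1\}$ is then immediate: if $\underline{u}\in E$ then $\underline{u}=1\cdot\underline{u}$ is its own unique decomposition, so $F|_{\underline{u}}=1$, while conversely $F|_{\underline{u}}=1$ means $\underline{u}=1\cdot\underline{v}=\underline{v}\in E$.

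Next I would check the algebraic properties a \Fkin must satisfy. For homogeneity, take $\mu\neq 0$ and write $\underline{u}=\lambda\underline{v}$; then $\mu\underline{u}=(\mu\lambda)\underline{v}$ with $\underline{v}\in E$ and $\mu\lambda\neq 0$, so uniqueness of the decomposition gives $F|_{\mu\underline{u}}=\mu\lambda=\mu F|_{\underline{u}}$, i.e. $F$ is homogeneous of degree $k=1$. Non-vanishing is clear since $\lambda\neq 0$ always. For positivity on $U^+$ I would use \cref{Def Causal Indicator}: if $\underline{u}\in U^+$ and $\underline{u}=\lambda\underline{v}$ with $\underline{v}\in E\subset U^+$, then $\lambda<0$ would give $\underline{u}=|\lambda|(-\underline{v})$ with $-\underline{v}\in U^-$, and since the time orientation is preserved under positive rescalings this would place $\underline{u}$ in $U^-$, a contradiction; hence $F|_{\underline{u}}=\lambda>0$. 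As $k=1$ is odd, the closing remark of \cref{Def Kinematic Indicator} confirms that the cut $\{F=1\}$ alone recovers $E$, consistent with the above.

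The one genuinely analytic point --- and the step I expect to be the main obstacle --- is smoothness of $F$, which is required for $F\in\Gamma\Lambda^0U$. My approach would exploit that the flow of the radial vector field $\mathcal{R}$ is exactly rescaling: from $\mathcal{R}\langle\bar{x}^\mu\rangle=0$ and $\mathcal{R}\langle\dot{x}^\mu\rangle=\dot{x}^\mu$ the flow is $\phi_s(\underline{u})=e^s\underline{u}$, so the orbits of $\mathcal{R}$ are precisely the positive rays. The third condition of \cref{Def Kinematic Domain} then says $E$ meets each orbit exactly once, i.e. $E$ is a global cross-section of this flow, and the scaling map $\Phi\colon (\mathbb{R}\setminus\{0\})\times E\to U$, $(\lambda,\underline{v})\mapsto\lambda\underline{v}$, is a smooth bijection whose inverse's first component is $F$. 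I would show $\Phi$ is a diffeomorphism by the inverse function theorem: its differential is an isomorphism precisely when $\mathcal{R}$ is transverse to $E$. The subtlety is that transversality cannot be read off from $\mathcal{R}\langle F\rangle=F\neq 0$ (this is just \cref{Eqn Euler Thm} and presupposes $F$ smooth, so it would be circular); it must instead be extracted from $E$ being an embedded $(2n{-}1)$--submanifold that is a section of the one-dimensional radial foliation, via the standard flowout/cross-section theorem. Granting transversality, $\Phi$ is a diffeomorphism and $F$ is smooth.

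Finally, uniqueness is short: if $\tilde{F}$ is any $1$--homogeneous \Fkin with $E=\{\tilde{F}=1\}$, then for $\underline{u}=\lambda\underline{v}$ with $\underline{v}\in E$ homogeneity gives $\tilde{F}|_{\underline{u}}=\lambda\,\tilde{F}|_{\underline{v}}=\lambda\cdot 1=\lambda=F|_{\underline{u}}$, so $\tilde{F}=F$.
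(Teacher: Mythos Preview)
Your proposal is correct and follows the same approach as the paper: define $F$ by \cref{Eqn F11}, verify the kinematic-indicator properties, and deduce uniqueness from $1$--homogeneity. Your treatment is in fact considerably more thorough than the paper's own proof, which asserts smoothness in a single clause (``since each component of $E$ is connected $F$ is smooth'') and omits the explicit checks of homogeneity, positivity on $U^+$, and the description $E=\{F=1\}$ that you supply.
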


\begin{proof}
For each $\underline{u}\in U$, there exists a unique $\lambda\in\mathbb{R}\setminus\{0\}$ and $\underline{v}\in E$ such that $\underline{u}=\lambda\underline{v}$ by the conic properties of $U$ (\cref{Def Conic Bundle}). 
We define then $F\vert_{\underline{u}}=\lambda$.
Since each component of $E$ is connected $F$ is smooth and hence $F\in\Gamma\Lambda^0 U$.
To see that $F$ is unique, suppose there is another scalar $F'$ which is 1--homogeneous and is such that such that $E=\{ \underline{u}\in U : F\vert_{\underline{u}}=1 \}$ and pick any $\underline{u}\in U$. 
Then $F'\vert_{\underline{u}}=F'\vert_{\lambda\underline{v}}=\lambda=F\vert_{\underline{u}}$.
\end{proof}

Given a \Fkin $F$ for a kinematic domain $E$ as defined in \cref{Lem F from E}, we may define another scalar for some $k\in\mathbb{Z}$, $k\ne0$ and $a\in\mathbb{R}^+$ 
\begin{equation}\label{Eqn F hat}
\hat{F}= aF^k.
\end{equation}
Not only is $\hat{F}$ a $k$--homogeneous \Fkin for $E$ such that
it is also unique for the chosen $k$ and $a$.

\begin{lemma}
Let $F$ be a \Fkin for $E$ as given by \cref{Lem F from E} and let $\hat{F}$ be given by \cref{Eqn F hat}. 
$\hat{F}$ is a \Fkin for $E$.
\end{lemma}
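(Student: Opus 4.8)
The plan is to verify directly that $\hat{F}=aF^k$ satisfies every clause of \cref{Def Kinematic Indicator}, drawing on the properties of the $1$--homogeneous indicator $F$ from \cref{Lem F from E}. Four things must be checked: that $\hat{F}$ is smooth and non-vanishing, that it is homogeneous of nonzero integer degree, that it is positive on $U^+$, and that the appropriate level set recovers exactly $E$.

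First I would record that $F$ never vanishes on $U$: by \cref{Eqn F11} we have $F|_{\underline{u}}=\lambda$ with $\underline{u}=\lambda\underline{v}$, $\underline{v}\in E$, and $\lambda\neq0$ because $\underline{u}$ lies in the slit bundle. Hence $F^k$ is well defined and smooth for every nonzero integer $k$ (for $k<0$ the non-vanishing is essential, since then $F^k=1/F^{|k|}$), and as $a>0$ we obtain $\hat{F}\in\Gamma\Lambda^0U$ nowhere zero. Homogeneity is a one-line computation: for $\lambda\neq0$,
\[
\hat{F}|_{\lambda\underline{u}}=a\bigl(F|_{\lambda\underline{u}}\bigr)^k=a\bigl(\lambda F|_{\underline{u}}\bigr)^k=\lambda^k\hat{F}|_{\underline{u}},
\]
where extracting $\lambda^k$ is legitimate since $k$ is an integer and $\lambda^k$ is unambiguous for all $\lambda\neq0$; thus $\hat{F}$ has homogeneity degree $k$. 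For positivity, if $\underline{u}\in U^+$ then the scaling factor in \cref{Eqn F11} is positive, so $F|_{\underline{u}}>0$, and a positive number raised to an integer power times $a>0$ stays positive.

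The remaining and most delicate step is to show $E=\{\underline{u}\in U:\hat{F}|_{\underline{u}}=a \,\&\, \sigma|_{\underline{u}}=1\}$, i.e. that the constant in \cref{Eqn E hat} may be taken to be $a$ itself. The forward inclusion is immediate: on $E$ we have $F=1$ and $\sigma=1$ (since $E\subset U^+$), so $\hat{F}|_E=a\cdot1^k=a$. For the reverse inclusion, suppose $\hat{F}|_{\underline{u}}=a$ and $\sigma|_{\underline{u}}=1$; then $F^k=1$, and the real work is ruling out roots of $x^k=1$ other than $1$. This is exactly where the causal clause $\sigma|_{\underline{u}}=1$ earns its keep: it forces $\underline{u}\in U^+$, hence $F|_{\underline{u}}>0$ by the positivity above, and the unique positive real solution of $x^k=1$ for nonzero integer $k$ is $x=1$. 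Therefore $F|_{\underline{u}}=1$ and $\underline{u}\in E$ by \cref{Lem F from E}, giving the equality. This also explains the parenthetical remark in \cref{Def Kinematic Indicator} that the $\sigma=1$ clause is redundant for odd $k$: when $k$ is odd, $x^k=1$ already has $x=1$ as its only real root, whereas for even $k$ the spurious root $x=-1$ (living in $U^-$) is precisely what $\sigma=1$ excludes. Uniqueness of $\hat{F}$ for fixed $k$ and $a$ then follows from the uniqueness clause of \cref{Lem F from E}.
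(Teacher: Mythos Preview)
Your proof is correct and follows the same direct-verification approach as the paper, only much more thoroughly: the paper's argument checks only the level-set equality (and even then dismisses the reverse inclusion with ``similarly''), whereas you explicitly verify non-vanishing, homogeneity, positivity on $U^+$, and the role of the $\sigma=1$ clause in killing the spurious root for even $k$. Your final sentence on uniqueness is extraneous here---it is the content of the \emph{next} lemma in the paper, not this one---so you may drop it.
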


\begin{proof}
Let $\hat{E}$ be a contour of $\hat{F}$ as given by \cref{Eqn E hat}.
For and $\underline{v}\in E$ notice that $F^k\vert_{\underline{v}}=1$ so that
$\hat{F}\vert_{\underline{v}}=aF^k\vert_{\underline{v}}=a$.
Hence $\underline{v}\in E$ implies $\underline{v}\in\hat{E}$.
The converse can be proved similarly to show $\underline{v}\in\hat{E}$ implies $\underline{v}\in E$.
Hence $\hat{F}$ is a \Fkin for $E$.
\end{proof}

\begin{lemma}
Let $a\in\mathbb{R}^+$ and $k\in\mathbb{N}$.
If $\hat{F}$ is a $k$--homogeneous \Fkin for $E$ such that \cref{Eqn E hat} is satisfied, then $\hat{F}$ is uniquely given by \cref{Eqn F hat}.
\end{lemma}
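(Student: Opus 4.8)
The plan is to show that any $k$--homogeneous \Fkin $\hat{F}$ for $E$ satisfying \cref{Eqn E hat} agrees pointwise with $aF^k$, where $F$ is the unique $1$--homogeneous \Fkin supplied by \cref{Lem F from E}; establishing this equality at every $\underline{u}\in U$ simultaneously proves both that $\hat{F}$ has the claimed form and that it is the only such indicator for the given $a$ and $k$. The key mechanism is the conic decomposition of $U$: by property 3 of \cref{Def Kinematic Domain}, each $\underline{u}\in U$ can be written uniquely as $\underline{u}=\lambda\underline{v}$ with $\underline{v}\in E$ and $\lambda\neq0$, and \cref{Lem F from E} then identifies this $\lambda$ as $F\vert_{\underline{u}}=\lambda$.

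First I would fix an arbitrary $\underline{u}\in U$ and take this decomposition $\underline{u}=\lambda\underline{v}$. Because $\hat{F}$ is a \Fkin for $E$ in the sense of \cref{Eqn E hat}, membership $\underline{v}\in E$ forces $\hat{F}\vert_{\underline{v}}=a$. Applying the degree-$k$ homogeneity of $\hat{F}$, valid for all nonzero $\lambda$, then gives $\hat{F}\vert_{\underline{u}}=\hat{F}\vert_{\lambda\underline{v}}=\lambda^k\,\hat{F}\vert_{\underline{v}}=a\lambda^k$. On the other hand $aF^k\vert_{\underline{u}}=a\,(F\vert_{\underline{u}})^k=a\lambda^k$. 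Hence $\hat{F}\vert_{\underline{u}}=aF^k\vert_{\underline{u}}$, and since $\underline{u}$ was arbitrary we conclude $\hat{F}=aF^k$, which is exactly \cref{Eqn F hat}.

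The argument reduces to a one-line computation once the decomposition is in place, so I do not anticipate a substantive obstacle; the only points demanding care are bookkeeping ones. I would be careful to invoke the homogeneity relation over the full range $\lambda\in\mathbb{R}\setminus\{0\}$ so that the identity also holds on $U^-$ (where $\lambda<0$), and note that this creates no inconsistency: for even $k$ the factor $\lambda^k$ stays positive, which is precisely why the auxiliary $\sigma=1$ condition appears in \cref{Eqn E hat}, whereas for odd $k$ the sign of $\hat{F}$ tracks that of $\lambda$, consistent with the remark following \cref{Def Kinematic Indicator} that the $\sigma$ condition is then redundant. Finally I would appeal to \cref{Lem F from E} to confirm that $F$ is well defined and nonvanishing on all of $U$, so that $aF^k$ is a legitimate scalar field to compare against.
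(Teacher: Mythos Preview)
Your proposal is correct and follows essentially the same approach as the paper: both arguments use the conic decomposition $\underline{u}=\lambda\underline{v}$ with $\underline{v}\in E$, evaluate $\hat{F}$ at $\underline{v}$ to obtain $a$, and then invoke $k$-homogeneity together with $F\vert_{\underline{u}}=\lambda$ to conclude $\hat{F}\vert_{\underline{u}}=a\lambda^k=aF^k\vert_{\underline{u}}$. Your presentation is in fact slightly cleaner, as the paper inserts an intermediate identity involving the causal indicator $\sigma$ that is not actually needed in the final step.
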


\begin{proof}
$\hat{F}$ satisfies the following properties:
\begin{align*}
\hat{F}\vert_{\underline{v}}=a,& \quad \forall \underline{v}\in E\\
\hat{F}\vert_{\lambda\underline{u}}= \lambda^k \hat{F}\vert_{\underline{u}},& \quad \forall \underline{u}\in U \; \& \; \lambda\neq 0,
\end{align*}
These two functions can be related by
\begin{equation*}
    \hat{F}\vert_{\underline{v}}= a =a F\vert_{\underline{v}}, \quad \forall \underline{v}\in E
\end{equation*}
\begin{align}
\frac{(\sigma\hat{F})\vert_{\lambda\underline{u}}}{\hat{F}\vert_{\underline{u}}}= \sigma\vert_{\lambda\underline{u}}\lambda^k=\sigma\vert_{\lambda\underline{u}}\left( \frac{F\vert_{\lambda\underline{u}}}{F\vert_{\underline{u}}}  \right)^k ,&{}\label{Eqn Fhat and F}
\end{align}
for all $\underline{u}\in U$ and $\lambda\neq 0$.

By the conic property, for any $\underline{u}\in U$ there exists a unique $\underline{v}\in E$ and $\lambda \neq0$ such that $\underline{u}=\lambda\underline{v}$.
By plugging such a $\underline{v}$ into \cref{Eqn Fhat and F} we get
\begin{equation*}
\frac{\hat{F}\vert_{\lambda\underline{v}}}{a}= \left(  \frac{F\vert_{\lambda\underline{v}}}{1}  \right)^k \implies \hat{F}\vert_{\underline{u}}= aF^k\vert_{\underline{u}}.
\end{equation*}
Hence $\hat{F}=aF^k$.
\end{proof}

In the case of the upper unit hyperboloid, the \Fkin is 2--homogeneous.
The equivalent 1--homogeneous \Fkin is given by
\begin{equation}
\hat{F}_H\vert_{\underline{u}}= \sigma\vert_{\underline{u}} \sqrt{\FkinUH\vert_{\underline{u}}},
\end{equation}
where $\FkinUH$ is as given by \cref{Eqn Fkin UH} and $\sigma$ is given by \cref{Def Causal Indicator}.

At this point we may note some similarities with other foliations of the tangent bundle based on Finsler metrics \cite{bejancu_finsler_2006}.
The function $F$ which defines $E$ can be a Finsler metric, however it is not a necessary condition.
The only requirement of $F$ is that it be homogeneous of some degree.

\begin{definition}[Compatible Vlasov Field]
Given a \Fkin $F$, a Vlasov field $W\in\Gamma TU$ is said to be compatible with $F$ if it satisfies $W\langle F\rangle=0$.
The restriction of $W$ to a compatible $E$ is then denoted $W_E\in\Gamma TE$ as given by \cref{Eqn Vlasov Pushforward}.
\end{definition}

Note that by \cref{Lem WE Tangent}, $W$ is tangent to a kinematic domain $E$ with \Fkin $F$ if and only if $W$ is compatible with $F$.

\begin{lemma}\label{Lem WE to W}
Given a kinematic domain $E$ and a Vlasov field $W_E\in\Gamma TE$, there exists a unique Vlasov field $W\in\Gamma TU$ $W$ which is an extension of $W_E$, or equivalently $W_E$ is induced by $W$.
\end{lemma}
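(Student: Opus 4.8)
The plan is to build $W$ by ``radially transporting'' $W_E$ off the kinematic domain using the fibre-wise scalings generated by $\mathcal{R}$, and then to read off uniqueness from the fact that every Vlasov field is rigidly tied to these scalings. Let $m_\lambda\colon U\to U$, $m_\lambda(\underline{w})=\lambda\underline{w}$, denote fibre-wise scaling; for $\lambda>0$ this is the time-$\ln\lambda$ flow of $\mathcal{R}$. By \cref{Lem F from E} there is a unique $1$--homogeneous \Fkin $F$ with $E=\{F=1\}$, so by the conic and kinematic-domain properties (\cref{Def Kinematic Domain}) the map $(\lambda,\underline{v})\mapsto\lambda\underline{v}$ is a diffeomorphism from $(\mathbb{R}\setminus\{0\})\times E$ onto $U$, with inverse $\underline{u}\mapsto(F\vert_{\underline{u}},\,\underline{u}/F\vert_{\underline{u}})$. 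This furnishes a smooth global ``radial coordinate'' on $U$ in which to define $W$.

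For existence I would set, for $\underline{u}=\lambda\underline{v}$ with $\underline{v}\in E$,
\begin{equation*}
W\vert_{\underline{u}}=\lambda\,(m_\lambda)_\ast\big(\Sigma_{E\ast}(W_E\vert_{\underline{v}})\big),
\end{equation*}
which is well defined by uniqueness of the decomposition and smooth because $F$, the projection onto $E$, and $(m_\lambda)_\ast$ all depend smoothly on $\underline{u}$. Three checks then remain. Horizontality (\cref{W horizontal}) follows from $\pi\circ m_\lambda=\pi$ and $\pi\circ\Sigma_E=\pi_E$, which give $\pi_\ast(W\vert_{\underline{u}})=\lambda\,\pi_{E\ast}(W_E\vert_{\underline{v}})=\lambda\underline{v}=\underline{u}$, using that $W_E$ is itself horizontal. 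The extension property (\cref{Def Extension of WE}, \cref{Eqn Vlasov Pushforward}) is immediate: on $E$ we have $\lambda=1$ and $m_1=\mathrm{id}$, so $W\vert_{\underline{v}}=\Sigma_{E\ast}(W_E\vert_{\underline{v}})$, i.e.\ $W$ is tangent to $E$ and induces $W_E$. For the radially quadratic property I would first establish the finite scaling identity $W\vert_{\mu\underline{u}}=\mu\,(m_\mu)_\ast(W\vert_{\underline{u}})$, which is forced by the construction since $m_\mu\circ m_\lambda=m_{\mu\lambda}$; contracting with $\dot{h}$ and using $\dot{h}\circ m_\mu=\mu\dot{h}$ together with \cref{Eqn Pushgorward} then yields $W\langle\dot{h}\rangle\vert_{\mu\underline{u}}=\mu^2\,W\langle\dot{h}\rangle\vert_{\underline{u}}$, which is exactly \cref{Eqn Radially Quadratic}. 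Equivalently one checks $[\mathcal{R},W]=W$ directly and invokes \cref{Lem VQ iff Com}.

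For uniqueness I would suppose $W'$ is any Vlasov field extending $W_E$. Both $W$ and $W'$ agree with $\Sigma_{E\ast}W_E$ on $E$, and both, being radially quadratic, obey the scaling identity $W'\vert_{\lambda\underline{v}}=\lambda\,(m_\lambda)_\ast(W'\vert_{\underline{v}})$; this is the finite form of $[\mathcal{R},W']=W'$ (\cref{Involution}), obtained by integrating along the flow $m_\lambda$ of $\mathcal{R}$. Since every $\underline{u}\in U$ equals $\lambda\underline{v}$ for a unique $\underline{v}\in E$, agreement on $E$ propagates to all of $U$, giving $W'=W$.

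The main obstacle is passing from the infinitesimal homogeneity condition $[\mathcal{R},W]=W$ (\cref{Involution}) to the finite scaling identity $W\vert_{\lambda\underline{u}}=\lambda\,(m_\lambda)_\ast W\vert_{\underline{u}}$ and deploying it uniformly, together with verifying smoothness of the transported field across fibres. The flow of $\mathcal{R}$ realises only positive scalings, so the relation emerges most cleanly for $\lambda>0$ (covering the future cone over $E$); care is needed to extend it to all $\lambda\neq0$, and this is precisely where the full $2$--homogeneity of the $\varphi^\mu$ in \cref{Vlasov Coords}, rather than mere positive homogeneity, is invoked to reach the past-pointing fibres and to secure uniqueness there.
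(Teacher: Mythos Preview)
Your proposal is correct and is essentially the same construction as the paper's, just packaged geometrically via the pushforward by the fibre-scaling map $m_\lambda$ rather than, as the paper does, by prescribing the action of $W$ on the test scalars $\pi^\ast h$ and $\dot h$ (namely $\pi_\ast W\vert_{\underline{u}}=\underline{u}$ and $W\vert_{\underline{u}}\langle\dot f\rangle=\lambda^2 W_E\vert_{\underline{v}}\langle\dot f\rangle$). Unwinding your formula $W\vert_{\underline{u}}=\lambda\,(m_\lambda)_\ast(\Sigma_{E\ast}W_E\vert_{\underline{v}})$ against $\pi^\ast h$ and $\dot h$ reproduces exactly those two conditions, and your uniqueness via the finite scaling identity is the integrated form of the paper's one-line uniqueness (set $X=\hat W-W$ and observe $X$ annihilates all test scalars). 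Your explicit flagging of the negative-$\lambda$ issue is apt: the paper's construction also silently uses full $2$--homogeneity for $\lambda\neq0$ to reach $U^-$, even though its definition of ``radially quadratic'' is stated only for $\lambda>0$.
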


\begin{proof}
First observe that for any $f\in\Gamma\Lambda^0M$ and $\underline{v}\in E$,
\begin{equation*}
\left( \dot{f}\circ\Sigma_E \right) \vert_{\underline{v}}= \dot{f}\vert_{\Sigma_E(\underline{v})}=\dot{f}\vert_{\underline{v}}.
\end{equation*}
It follows that $\dot{f}\circ\Sigma_E= \dot{f}$.

For all $\underline{u}\in U$ and $f\in\Gamma\Lambda^0M$ define $W\in\Gamma TU$ by
\begin{equation*}\label{Eqn W from WE I}
    \pi^\ast W\vert_{\underline{u}}
    = \underline{u} \;\mbox{ and }\;
    W\vert_{\underline{u}}\langle \dot{f}\rangle 
    = \lambda^2 W_E\vert_{\underline{v}}\langle \dot{f}\rangle,
\end{equation*}
where $\underline{u}=\lambda\underline{v}$ by the conic property for $\underline{v}\in E$ and $\lambda\neq 0$.
By construction $W$ is radially quadratic and horizontal so it remains to show that it is unique.

Let $W$ and $\hat{W}$ both induce $W_E$ as defined above and set $X=\hat{W}-W$.
Observe that $\pi_\ast X=0$ and
\begin{equation*}
X\vert_{\underline{u}}\langle \dot{f}\rangle = \lambda^2(\hat{W}_E-W_E)\vert_{\underline{v}}\langle \dot{f}\rangle =0.
\end{equation*}
Hence it follows that $X=0$ and $W$ is unique.

To see that $W$ is compatible with $E$ (and hence tangent to it by \cref{Lem WE Tangent}) consider the following.
Without loss of generality we may assume $F$ is 1--homogeneous. 
By \cref{Lem Homogeneity of W<G>} we have $W\vert_{\underline{u}}\langle F\rangle= \lambda^2 W\vert_{\underline{v}}\langle F\rangle$.
It follows that
\begin{equation*}
    \begin{split}
        W\vert_{\underline{u}}\langle F\rangle &= \lambda^2 W\vert_{\underline{v}} \langle F\rangle\\
        {}&= \lambda^2 (\Sigma_E{}_\ast W_E\vert_{\underline{v}}) \langle F\rangle\\
        {}&= \lambda^2 W_E\vert_{\underline{v}} \langle F\circ \Sigma_E \rangle\\
        {}&= \lambda^2 W_E\vert_{\underline{v}} \langle 1 \rangle =0.
    \end{split}
\end{equation*}
\end{proof}

\subsection{Trajectories, Prolongations and the Horizontal Condition on \texorpdfstring{$U$}{TEXT}}\label{Subsec ICs}
The notion of integral curves for Vlasov fields on kinematic domains can be extended to integral curves on $U$.
The Vlasov field $W$ on $U$ generates a set of integral curves on $U$, denoted $\eta$.
The projections of these curves onto $M$ are exactly the trajectories of the particles and are denoted $C$.
Since we restrict our attention here to integral curves of horizontal vector fields, the terms prolongation and integral curve can be used interchangeably.

Trajectories can be considered in terms of maps from intervals of the real line $\mathcal{I}\subset \mathbb{R}$ into the spacetime manifold $M$:
\begin{equation}
    C\colon  \mathcal{I}\hookrightarrow M.
\end{equation}
These trajectories can be parameterised by choosing a parameter, $t\in\Gamma\Lambda^0\mathcal{I}$ with $dt\neq0$.
We then write the parameterised trajectories as $C(t)$.

\begin{definition}[Prolongations]\label{Def Parametrised Integral Curves}
The \textit{prolongation} of a curve $C\colon\mathcal{I}\hookrightarrow M$ is given by
\begin{equation}
\eta:\mathcal{I}\hookrightarrow U,\; \eta(t_0)=C_\ast(\partial_t\vert_{t_0}),\; \forall t_0\in \mathcal{I}.
\end{equation}
\end{definition}

\begin{lemma}
$\eta:\mathcal{I}\hookrightarrow U$ is the prolongation of some trajectory $C$ if and only if
\begin{equation}
(\pi\circ\eta)_\ast(\partial_t)=\eta.
\label{Eqn Prolongation Property}
\end{equation}
\end{lemma}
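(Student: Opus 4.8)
The plan is to prove both implications by identifying the projected curve with $\pi\circ\eta$ and then unwinding \cref{Def Parametrised Integral Curves}. The whole argument hinges on one elementary observation: for any tangent vector $\underline{w}\in T_pM$ we have $\pi(\underline{w})=p$, since $\pi$ sends a tangent vector to its base point. Consequently, whenever $\eta(t_0)=C_\ast(\partial_t\vert_{t_0})$, the vector $\eta(t_0)$ is based at $C(t_0)$, so $\pi(\eta(t_0))=C(t_0)$; that is, $\pi\circ\eta=C$. This identity is the engine behind both directions.

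For the forward direction, I would assume $\eta$ is the prolongation of a trajectory $C$ and first record $\pi\circ\eta=C$ by the observation above. Then $(\pi\circ\eta)_\ast(\partial_t)=C_\ast(\partial_t)=\eta$, where the final equality is exactly the defining property of the prolongation in \cref{Def Parametrised Integral Curves}. I would note in passing that both sides of the claimed identity are, at each $t_0$, vectors in $T_{C(t_0)}M$, so the equation is between objects of matching type and nothing further needs to be checked for it to make sense.

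For the converse, I would simply set $C:=\pi\circ\eta\colon\mathcal{I}\to M$. The hypothesis $(\pi\circ\eta)_\ast(\partial_t)=\eta$ then reads $C_\ast(\partial_t)=\eta$, which is precisely the statement that $\eta$ is the prolongation of $C$ in the sense of \cref{Def Parametrised Integral Curves}. The only point requiring a little care is that $C$ is a genuine trajectory, i.e.\ an immersion $\mathcal{I}\hookrightarrow M$: this follows because $\eta$ takes values in $U\subset\TMSlit$, so $C_\ast(\partial_t\vert_{t_0})=\eta(t_0)\neq 0$ for every $t_0$, and hence $C$ is an immersion with the required range.

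I expect no serious obstacle, as the lemma is essentially a repackaging of the definition of prolongation together with the fact $\pi\circ\eta=C$. The only places demanding attention are verifying that the two sides of \cref{Eqn Prolongation Property} lie in the same fibre $T_{C(t_0)}M$, and confirming in the converse that the projected curve inherits the immersion property from $\eta$ avoiding the zero section of $TM$.
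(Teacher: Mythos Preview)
Your proposal is correct and follows essentially the same approach as the paper: both directions proceed by identifying $C$ with $\pi\circ\eta$ and then reading off the definition of prolongation. You add two small refinements the paper omits---the type-check that both sides live in $T_{C(t_0)}M$, and the observation that $C$ is an immersion because $\eta$ avoids the zero section---but the core argument is identical.
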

\begin{proof}

First suppose $\eta$ is the prolongation of $C$. Then $C_\ast(\partial_t|_{t_0})\in T_{C(t_0)}M$. Hence $\pi\big(\eta(t_0)\big) = \pi\big(C_\ast(\partial_t|_{t_0})\big)=C(s)$, i.e. $\pi\circ\eta=C$ and hence $(\pi\circ\eta)_\ast(\partial_t)=C_\ast(\partial_t)=\eta$.

Conversely assuming \cref{Eqn Prolongation Property}, then let $C=\pi\circ\eta$. Thus
$C_\ast(\partial_t) = (\pi\circ\eta)_\ast(\partial_t) = \eta$.
\end{proof}

Let $\hat{\mathcal{I}}\subset\mathbb{R}$ be coordinated by $\hat{t}$ and let $\hat{C}:\hat{\mathcal{I}}\hookrightarrow M$ be an alternative parameterisation of $C$. I.e. there exists a diffeomorphism $\hat{t}=\hat{t}(t)$ such that
\begin{equation}
    \hat{C}\big(\hat{t}(t)\big)=C(t).
\end{equation}
Note that although the two parameterisations define the same curve, there prologations $\eta$ and $\hat\eta$ do not coincide.
Consequently, the tangent vectors along the prolongations (i.e. the acceleration) belong to different spaces:
$\eta_\ast(\partial_t\vert_{t_0}) \in T_{\eta(t_0)}U$ and  $\hat{\eta}_\ast(\partial_{\hat{t}}\vert_{t_1}) \in T_{\hat{\eta}(t_1)}U$
even when $C(t_0)=\hat{C}(t_1)$.

\begin{definition}
An integral curve $\eta$ of $W$ is said to lie along $E$ if
\begin{equation}
\eta(t) \in E,\; \forall t\in \mathcal{I}.
\end{equation}
This is a necessary condition $\eta$ to be an integral curve of $W_E$.
\end{definition}
If $W_E$ is induced by $W$ and $\eta$ is an integral curve of $W$ and $\eta$ lies along $E$ then $\eta$ is also an integral curve of $W_E$.

\begin{lemma}
All integral curves $\eta$ of a vector field $X\in\Gamma TU$ are prolongations if and only if $X$ is horizontal. 
\end{lemma}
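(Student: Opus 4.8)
The plan is to reduce both directions to the single chain-rule identity $(\pi\circ\eta)_\ast(\partial_t) = \pi_\ast\big(\eta_\ast(\partial_t)\big)$ for a curve $\eta\colon\mathcal{I}\to U$, combined with the defining property of an integral curve, namely $\eta_\ast(\partial_t\vert_{t_0}) = X\vert_{\eta(t_0)}$. Substituting the latter into the former gives
\begin{equation*}
(\pi\circ\eta)_\ast(\partial_t\vert_{t_0}) = \pi_\ast\big(X\vert_{\eta(t_0)}\big),
\end{equation*}
so the prolongation criterion $(\pi\circ\eta)_\ast(\partial_t)=\eta$ from \cref{Eqn Prolongation Property} holds exactly when $\pi_\ast(X\vert_{\eta(t_0)}) = \eta(t_0)$ for every $t_0$. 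Everything then turns on reading this equivalence in the two directions.

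For the forward direction I would assume $X$ is horizontal, so that $\pi_\ast(X\vert_{\underline{u}})=\underline{u}$ for all $\underline{u}\in U$. Taking $\underline{u}=\eta(t_0)$ yields $\pi_\ast(X\vert_{\eta(t_0)})=\eta(t_0)$ immediately, for any integral curve $\eta$ and any $t_0\in\mathcal{I}$. By the identity above this is precisely the prolongation property, so every integral curve of $X$ is a prolongation.

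For the converse I would fix an arbitrary point $\underline{u}\in U$ and invoke the standard local existence theorem for integral curves of a smooth vector field to produce an integral curve $\eta$ of $X$ with $\eta(0)=\underline{u}$. By hypothesis $\eta$ is a prolongation, hence
\begin{equation*}
\pi_\ast(X\vert_{\underline{u}}) = \pi_\ast\big(X\vert_{\eta(0)}\big) = \eta(0) = \underline{u}.
\end{equation*}
Since $\underline{u}$ was arbitrary, $X$ is horizontal.

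The only nontrivial ingredient, and the step I would flag as the potential obstacle, is the converse's reliance on the existence of an integral curve through every point of $U$; this is where smoothness of $X$ (built into working with $X\in\Gamma TU$) is essential, and it is what lets us promote the pointwise consequence of the prolongation property into horizontality at every $\underline{u}$. Everything else is a direct manipulation of the chain rule and the definitions, requiring no computation.
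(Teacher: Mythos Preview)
Your proof is correct and is essentially identical to the paper's: both directions reduce to the chain-rule identity $(\pi\circ\eta)_\ast(\partial_t)=\pi_\ast\big(\eta_\ast(\partial_t)\big)$ together with the integral-curve condition $\eta_\ast(\partial_t\vert_{t_0})=X\vert_{\eta(t_0)}$, and the existence theorem you flag is exactly what the paper invokes implicitly when it writes ``let $\eta$ be an integral curve of $X$ \ldots\ such that $\eta(t_0)=\underline{u}$''. The only cosmetic difference is that you and the paper label opposite implications as the ``forward'' direction.
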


\begin{proof}
Let $\underline{u}\in U$ and let $\eta$ be an integral curve of $X$ i.e. $X\vert_{\eta(t_1)}= \eta_\ast(\partial_t\vert_{t_1})$ for all $t_1\in\mathcal{I}$, such that $\eta(t_0)=\underline{u}$ for some $t_0\in\mathcal{I}$.

Suppose first that $\eta$ is a prolongation.
Then
\begin{equation}
\begin{split}
    \pi_\ast X\vert_{\underline{u}} &= \pi_\ast X\vert_{\eta(t_0)} = \pi_\ast\eta_\ast (\partial_t\vert_{t_0})\\
    {}& = (\pi\circ\eta)_\ast (\partial_t\vert_{t_0})= \eta(t_0)= \underline{u}.
\end{split}
 \end{equation}
Hence $X$ is horizontal.

Suppose now that $X$ is horizontal.
We have that $\pi_\ast X\vert_{\underline{u}}= \underline{u}= \eta(t_0)$.
Then,
\begin{equation}
\eta(t_0)= \pi_\ast X\vert_{\eta(t_0)} = \pi_\ast\eta_\ast (\partial_t\vert_{t_0}) =(\pi\circ\eta)_\ast (\partial_t\vert_{t_0}).
\end{equation}
Hence $\eta$ is a prolongation.
\end{proof}

The integral curves $\eta$ of Vlasov fields $W\in \Gamma TU$ correspond to trajectories $C$ in the base space $M$ through the following relation:
\begin{equation}\label{Eqn Trajectories}
C=\pi\circ\eta.
\end{equation}
In local coordinates $(x^\mu,\dot{x}^\mu)$ 
let $C^\mu(t_0)=x^\mu\vert_{C(t_0)}$ then
\begin{align}
\dot{C}^\mu(t_0)= \frac{dC^\mu}{dt} \bigg{\vert}_{t_0} \; \mbox{ and } \;
\ddot{C}^\mu(t_0)= \frac{d^2C^\mu}{dt^2}\bigg{\vert}_{t_0}.
\end{align}

These trajectories can be expressed in terms of a parametrised system of second order differential equations in terms of the coefficients of $W$:
\begin{equation}\label{Eqn Trajectory ODE}
\ddot{C}^\mu\vert_{t_0}=\varphi^\mu\vert_{\dot{C}(t_0)},
\end{equation}
where $t$ is a parameter which corresponds to $W$.
Due to the projection in \cref{Eqn Trajectories}, there is a class of integral curves that produce the same trajectories.
The vector fields which produce these curves differ only by a term $k\mathcal{R}$ where $k\in\Gamma\Lambda^0U$ is a 1--homogeneous function.
The converse is also true and a full proof of this statement can be found in \cite{shen_differential_2001}.
These vector fields are said to be projectively related in the literature (see \cref{Def Projectively Related} for a definition).
We present here a proof of only the former statement to illustrate this result in our mathematical notation.

\begin{lemma}\label{Lem Trajectories}
Let $W,\hat{W}\in\Gamma TU$ be Vlasov fields and let $k\in\Gamma\Lambda^0U$ be a 1--homogeneous scalar field. 
If $W$ and $\hat{W}$ are related by 
\begin{equation}\label{Eqn W and W hat}
\hat{W}=W+k\mathcal{R}
\end{equation}
then they have the same trajectories up to a parameterisation.
That is, if the trajectories $C$ of $W$ are parametrised by $t$ then $\hat{W}$ has the same trajectories $C$ only parametrised by $s$ where
\begin{equation}\label{Eqn Reparameterisation}
\frac{d^2s}{dt^2}\bigg{\vert}_{t_0}+ k\vert_{\underline{u}} \frac{ds}{dt}\bigg{\vert}_{t_0}=0,\; \frac{ds}{dt}>0,
\end{equation}
and $\underline{u}=C_\ast(\partial_t\vert_{t_0})\in U$.
\end{lemma}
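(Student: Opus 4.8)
The plan is to reduce the statement to a single chain--rule computation once everything is written in induced coordinates, with the homogeneity degrees doing all the bookkeeping. First I would expand both Vlasov fields using \cref{Vlasov Coords}: writing $W = \dot{x}^\mu \partial^{(x)}_\mu + \varphi^\mu \partial^{(\dot{x})}_\mu$ and recalling $\mathcal{R} = \dot{x}^\mu \partial^{(\dot{x})}_\mu$, relation \cref{Eqn W and W hat} gives $\hat{W} = \dot{x}^\mu \partial^{(x)}_\mu + \hat{\varphi}^\mu \partial^{(\dot{x})}_\mu$ with $\hat{\varphi}^\mu = \varphi^\mu + k\,\dot{x}^\mu$. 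Since $\varphi^\mu$ is $2$--homogeneous and $k\,\dot{x}^\mu$ is a product of two $1$--homogeneous scalars, $\hat{\varphi}^\mu$ is again $2$--homogeneous, confirming $\hat{W}$ is itself a Vlasov field.

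Next I would take $C$ a trajectory of $W$ parametrised by $t$, so by \cref{Eqn Trajectory ODE} its coordinates satisfy $\ddot{C}^\mu = \varphi^\mu|_{\underline{u}}$ where $\underline{u}$ is the prolongation with components $\dot{C}^\mu$. Introducing a reparametrisation $s=s(t)$ with $\tfrac{ds}{dt}>0$, let $\tilde{C}$ denote the same base curve parametrised by $s$, so that $\tilde{C}(s(t))=C(t)$ as maps into $M$. Setting $\lambda = \tfrac{ds}{dt}$, the chain rule yields $\tfrac{dC^\mu}{dt} = \lambda\,\tfrac{d\tilde{C}^\mu}{ds}$ and $\tfrac{d^2 C^\mu}{dt^2} = \lambda^2\,\tfrac{d^2\tilde{C}^\mu}{ds^2} + \tfrac{d^2 s}{dt^2}\,\tfrac{d\tilde{C}^\mu}{ds}$. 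The geometric content of the first identity is that the prolongation $\underline{u}$ of $C$ and the prolongation $\tilde{\underline{u}}$ of $\tilde{C}$ at the matching point obey $\underline{u} = \lambda\,\tilde{\underline{u}}$.

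The homogeneity properties then close the argument. The $2$--homogeneity of $\varphi^\mu$ gives $\varphi^\mu|_{\underline{u}} = \lambda^2\,\varphi^\mu|_{\tilde{\underline{u}}}$, and the $1$--homogeneity of $k$ gives $k|_{\underline{u}} = \lambda\,k|_{\tilde{\underline{u}}}$. Substituting the chain--rule expression for $\ddot{C}^\mu$ into $\ddot{C}^\mu = \lambda^2 \varphi^\mu|_{\tilde{\underline{u}}}$ and dividing by $\lambda^2$, I would obtain
\[
\frac{d^2\tilde{C}^\mu}{ds^2} = \varphi^\mu|_{\tilde{\underline{u}}} - \frac{1}{\lambda^2}\,\frac{d^2 s}{dt^2}\,\frac{d\tilde{C}^\mu}{ds}.
\]
This coincides with the $\hat{W}$--trajectory equation $\tfrac{d^2\tilde{C}^\mu}{ds^2} = \varphi^\mu|_{\tilde{\underline{u}}} + k|_{\tilde{\underline{u}}}\,\tfrac{d\tilde{C}^\mu}{ds}$ exactly when $-\tfrac{1}{\lambda^2}\tfrac{d^2 s}{dt^2} = k|_{\tilde{\underline{u}}}$. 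Rewriting $k|_{\tilde{\underline{u}}} = \tfrac{1}{\lambda} k|_{\underline{u}}$ and $\lambda = \tfrac{ds}{dt}$, this collapses precisely to $\tfrac{d^2 s}{dt^2} + k|_{\underline{u}}\tfrac{ds}{dt} = 0$, which is \cref{Eqn Reparameterisation}. Hence $\tilde{C}$, parametrised by any $s$ solving \cref{Eqn Reparameterisation}, is a trajectory of $\hat{W}$ tracing the same base curve.

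The step I would flag as the main obstacle is the careful tracking of the homogeneity degrees alongside the rescaling $\underline{u} = \lambda\,\tilde{\underline{u}}$ of the prolongation under reparametrisation: it is the interplay of the $2$--homogeneity of $\varphi^\mu$ with the $1$--homogeneity of $k$ that produces exactly the single surplus power of $\lambda = \tfrac{ds}{dt}$ needed to match \cref{Eqn Reparameterisation}; everything else is the chain rule. I would also note that the condition $\tfrac{ds}{dt}>0$ ensures the reparametrisation is an orientation--preserving diffeomorphism, so $\lambda>0$ and $\underline{u},\tilde{\underline{u}}$ lie in the same component, making the application of homogeneity along positive rescalings legitimate.
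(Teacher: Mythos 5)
Your proposal is correct and follows essentially the same route as the paper's proof: expand both fields in induced coordinates so that $\hat{\varphi}^\mu=\varphi^\mu+k\dot{x}^\mu$, apply the chain rule to the reparametrised curve, and use the $2$--homogeneity of $\varphi^\mu$ together with the $1$--homogeneity of $k$ (legitimate since $\lambda=ds/dt>0$) to match the resulting ODE with the $\hat{W}$--trajectory equation. The only cosmetic difference is direction: the paper substitutes \cref{Eqn Reparameterisation} into the chain-rule identity and verifies the result equals $\hat{\varphi}^\mu$, whereas you derive \cref{Eqn Reparameterisation} as the matching condition, which is logically the same computation.
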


\begin{proof}
Notice that if $W$ and $\hat{W}$ are related by \cref{Eqn W and W hat} then in local coordinates we have
\begin{equation*}
\hat{\varphi}^\mu=\varphi^\mu+ k\dot{x}^\mu,
\end{equation*}
where $\varphi^\mu$ (resp. $\hat{\varphi}^\mu$) are the $\partial_\mu^{(\dot{x})}$ coefficients of $W$ ($\hat{W}$), see \cref{Vlasov Coords}.
Let the trajectories of $W$ be denoted by $C(t)$.
These trajectories satisfy \cref{Eqn Trajectory ODE}.
Define a new parameter $s=s(t)$ by \cref{Eqn Reparameterisation} and set $s_0=s(t_0)$.

The trajectory $C(s(t))$ can be shown to satisfy
\begin{equation*}
\frac{d^2C^\mu}{dt^2}\bigg{\vert}_{t_0}= \frac{d^2s}{dt^2}\bigg{\vert}_{t_0}\frac{dC^\mu}{ds}\bigg{\vert}_{s_0} +\left( \frac{ds}{dt}\bigg{\vert}_{t_0} \right)^2 \frac{d^2C^\mu}{ds^2}\bigg{\vert}_{s_0}.
\end{equation*}
We therefore have
\begin{equation*}
\begin{split}
\frac{d^2C^\mu}{ds^2}&\bigg{\vert}_{s_0}= \left( \frac{ds}{dt}\bigg{\vert}_{t_0} \right)^{-2} \! \left( \frac{d^2C^\mu}{dt^2}\bigg{\vert}_{t_0}- \frac{d^2s}{dt^2}\bigg{\vert}_{t_0} \frac{dC^\mu}{ds}\bigg{\vert}_{s_0} \right)\\
{}&= \left( \frac{ds}{dt}\bigg{\vert}_{t_0} \right)^{-2}  \bigg{(} \varphi^\mu\vert_{C_\ast(\partial_t\vert_{t_0})}\\ 
{}& \hspace{8em} +k\vert_{C_\ast(\partial_t\vert_{t_0})} \frac{ds}{dt}\bigg{\vert}_{t_0} \frac{dC^\mu}{ds}\bigg{\vert}_{s_0} \bigg{)}\\
{}&= \varphi^\mu\vert_{C_\ast(\partial_s\vert_{s_0})}+ k\vert_{C_\ast(\partial_s\vert_{s_0})} \frac{dC^\mu}{ds}\bigg{\vert}_{s_0}\\
{}&= \left( \varphi^\mu+k\dot{x}^\mu \right)\vert_{C_\ast(\partial_s\vert_{s_0})}\\
{}&= \hat{\varphi}^\mu\vert_{C_\ast(\partial_s\vert_{s_0})}.
\end{split}
\end{equation*}
The third line is due to the 2-homogeneity of $\varphi^\mu$ and 1-homogeneity of $k$.
Hence $C(s)$ are the trajectories associated with $\hat{W}$.
Hence, both $W$ and $\hat{W}$ have the same trajectories up to a reparameterisation.
\end{proof}

\subsection{Transforming between Kinematic Domains}\label{Sec Transforming Between Domains}
Suppose we are given a Vlasov field $W_E$ on a kinematic domain $E$ with \Fkin $F$.
Using this data we can construct a new Vlasov field $\hat{W}$ which is compatible with a new kinematic domain $\hat{E}$ with \Fkin $\hat{F}$.
This defines a new Vlasov field $\hat{W}_{\hat{E}}$ on $\hat{E}$ which produces the same trajectories in the base space $M$ as the initial Vlasov field $W_E$.

Once we have promoted $W_E$ into a Vlasov field $W$ on $U$ using \cref{Lem WE to W}, we can use it to construct a new Vlasov field $\hat{W}$ which is tangent to another kinematic domain $\hat{E}$.
An illustration of the proceeding lemma can be found in \cref{Magic Formula Diagram}.

\begin{theorem}\label{Thm Magic Formula}
Let $E$ and $\hat{E}$ be kinematic domains with \Fkin $F$ and $\hat{F}$ respectively.
Given a Vlasov field $W\in\Gamma TU$ which is compatible with $E$ we may construct a new Vlasov field $\hat{W}\in\Gamma TU$ given by
\begin{equation}\label{Magic Formula}
\hat{W}=W-\frac{W\langle \hat{F}\rangle}{\mathcal{R}\langle \hat{F}\rangle} \mathcal{R}
\end{equation}
which is compatible with $\hat{E}$ and corresponds to the same trajectories as $W$ (up to parameterisation).
\end{theorem}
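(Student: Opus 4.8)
The plan is to recognise that \cref{Magic Formula} writes $\hat{W}$ in precisely the form $\hat{W}=W+k\mathcal{R}$ studied in \cref{Lem Trajectories}, with scalar coefficient
\[
k=-\frac{W\langle\hat{F}\rangle}{\mathcal{R}\langle\hat{F}\rangle}.
\]
Almost everything then reduces to invoking earlier results, provided I first pin down the homogeneity degree of $k$. So the first thing I would do is this degree count, which is the crux of the whole argument. Write $m\neq 0$ for the homogeneity degree of the \Fkin $\hat{F}$. By Euler's theorem \cref{Eqn Euler Thm} the denominator is $\mathcal{R}\langle\hat{F}\rangle=m\hat{F}$, which is $m$--homogeneous and, being a nonzero multiple of the nonvanishing $\hat{F}$, never vanishes; hence $k$ is well defined and smooth. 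Applying \cref{Lem Homogeneity of W<G>} to the $m$--homogeneous scalar $\hat{F}$ shows the numerator $W\langle\hat{F}\rangle$ is $(m+1)$--homogeneous, so $k$ is $(m+1)-m=1$--homogeneous.

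Next I would check that $\hat{W}$ is a genuine Vlasov field in the sense of \cref{Def Vlasov Field on U}. Horizontality is immediate: since $\mathcal{R}$ is vertical, $\mathcal{R}\langle\pi^\ast h\rangle=0$, so $\hat{W}\langle\pi^\ast h\rangle=W\langle\pi^\ast h\rangle=\dot{h}$ for all $h\in\Gamma\Lambda^0M$, inheriting \cref{W horizontal} from $W$. For the radially quadratic property I would use the commutator characterisation of \cref{Lem VQ iff Com} and compute
\[
[\mathcal{R},\hat{W}]=[\mathcal{R},W]+[\mathcal{R},k\mathcal{R}]=W+\big(\mathcal{R}\langle k\rangle\big)\mathcal{R},
\]
using $[\mathcal{R},W]=W$ for the Vlasov field $W$ and $[\mathcal{R},\mathcal{R}]=0$. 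Because $k$ is $1$--homogeneous, Euler's theorem \cref{Eqn Euler Thm} gives $\mathcal{R}\langle k\rangle=k$, whence $[\mathcal{R},\hat{W}]=W+k\mathcal{R}=\hat{W}$, and \cref{Lem VQ iff Com} certifies that $\hat{W}$ is radially quadratic.

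The remaining two claims are then short. Compatibility with $\hat{E}$ is a one-line substitution,
\[
\hat{W}\langle\hat{F}\rangle=W\langle\hat{F}\rangle-\frac{W\langle\hat{F}\rangle}{\mathcal{R}\langle\hat{F}\rangle}\,\mathcal{R}\langle\hat{F}\rangle=0,
\]
so $\hat{W}$ is compatible with $\hat{F}$ and therefore tangent to $\hat{E}$ by \cref{Lem WE Tangent}. Finally, since $k$ is $1$--homogeneous, \cref{Lem Trajectories} applies verbatim to $\hat{W}=W+k\mathcal{R}$ and yields that $\hat{W}$ produces the same base-space trajectories as $W$, reparameterised via \cref{Eqn Reparameterisation}. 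I do not expect any genuine obstacle here: the one load-bearing step is the homogeneity count establishing that $k$ has degree exactly $1$, since that single fact is simultaneously what makes the commutator in \cref{Lem VQ iff Com} close up to give $\hat{W}$ and what licenses the application of \cref{Lem Trajectories}. The rest is bookkeeping with the vertical/horizontal splitting of $\mathcal{R}$ and $W$.
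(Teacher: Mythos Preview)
Your proof is correct and follows essentially the same overall route as the paper: verify that $\hat W$ is a Vlasov field, check $\hat W\langle\hat F\rangle=0$, and then invoke \cref{Lem Trajectories} using the $1$--homogeneity of $k$. The one minor difference is that you establish the radially quadratic property via the commutator criterion \cref{Lem VQ iff Com} (computing $[\mathcal{R},\hat W]=\hat W$ from $\mathcal{R}\langle k\rangle=k$), whereas the paper verifies \cref{Eqn Radially Quadratic} directly by expanding $\hat W\vert_{\lambda\underline u}\langle\dot h\rangle$ and using the explicit homogeneity degrees of numerator and denominator; your route is arguably cleaner since the $1$--homogeneity of $k$ does double duty for both this step and the invocation of \cref{Lem Trajectories}.
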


\begin{proof}
It is clear that $\hat{W}\langle \hat{F}\rangle=0$.
To see that the resultant vector field is horizontal observe that
\begin{equation*}
\begin{split}
\pi_\ast(\hat{W}\vert_{\underline{u}})& = \pi_\ast(W\vert_{\underline{u}}) -\frac{W\langle \hat{F}\rangle}{\mathcal{R}\langle \hat{F}\rangle} \bigg{\vert}_{\underline{u}} \pi_\ast(\mathcal{R}\vert_{\underline{u}})\\
{}&=  \pi_\ast(W\vert_{\underline{u}})= \underline{u}
\end{split}
\end{equation*}

To see that $\hat{W}$ is radially quadratic observe that
\begin{equation*}
\begin{split}
\hat{W}\vert_{\lambda \underline{v}} \langle \dot{h}\rangle=& W\vert_{\lambda \underline{v}} \langle \dot{h}\rangle- \frac{W\langle \hat{F}\rangle}{\mathcal{R}\langle \hat{F}\rangle} \bigg{\vert}_{\lambda\underline{v}}  \mathcal{R}\vert_{\lambda \underline{v}} \langle \dot{h}\rangle\\
=& \lambda^2 W\vert_{\underline{v}} \langle \dot{h}\rangle- \frac{\lambda^{k+1}W\vert_{\underline{v}}\langle \hat{F}\rangle }{\lambda^k \mathcal{R}\vert_{\underline{v}} \langle \hat{F}\rangle}   (\lambda \underline{v} \langle \dot{h}\rangle)\\
=& \lambda^2 \bigg{(} W\vert_{\underline{v}} \langle \dot{h}\rangle- \frac{W \langle \hat{F}\rangle }{ \mathcal{R} \langle \hat{F}\rangle}\bigg{\vert}_{\underline{v}} \mathcal{R}\vert_{\underline{v}} \langle \dot{h}\rangle  \bigg{)}\\
=& \lambda^2 \hat{W}\vert_{\underline{v}} \langle \dot{h} \rangle,
\end{split}
\end{equation*}
for any $h\in\Gamma\Lambda^0 M$, $\underline{v}\in U$, $\lambda\neq0$.
Hence $\hat{W}$ is a valid Vlasov field.
To see that $\hat{W}$ corresponds to the same set of trajectories as $W$, notice that $W\langle \hat{F}\rangle / \mathcal{R}\langle \hat{F}\rangle $ in \cref{Magic Formula} is a 1--homogeneous function.
Hence by \cref{Lem Trajectories}, $W$ and $\hat{W}$ have the same trajectories up to parameterisation.
\end{proof}

Recall that \cref{Eqn LT Vlasov} consisted of three terms: the first two can be identified with the terms in \cref{Eqn LF Vlasov}, while the third can be show to be
\begin{equation}
    \begin{split}
        \frac{W_{\EH}\left\langle \dot{t} \right\rangle }{\mathcal{R}\left\langle \dot{t} \right\rangle }\mathcal{R} 
    = & \frac{q}{m} \sqrt{\FkinUH} g^{\lambda\nu}\mathcal{F}_{\nu\rho}\dot{x}^\rho \frac{\partial t}{\partial x^\lambda} \frac{\dot{x}^\mu}{\dot{t}}\\
    {}&- \Gamma^\lambda_{\nu\rho}\dot{x}^\nu\dot{x}^\rho \frac{\partial t}{\partial x^\lambda} \frac{\dot{x}^\mu}{\dot{t}}\\
    {}& +\dot{x}^\nu\dot{x}^\rho \frac{\partial t}{\partial x^\nu\partial x^\rho} \frac{\dot{x}^\mu}{\dot{t}}.
    \end{split}
\end{equation}
This is exactly the additional term from \cref{Magic Formula}.

\subsection{The Transport Equations on Kinematic Domains}\label{Subsec Transport Equations}
A method of interpreting the Vlasov equation in a kinematic domain $E$ can be given in terms of the transport equations \cite{gratus_distributional_2010}.
The transport equations are written in terms of a particle density form which codifies the phase space trajectories corresponding to a given Vlasov field.
With some additional structure (a choice of non-vanishing top form, otherwise known as a measure) we can recover the usual Vlasov equation (i.e. a particle density function $f_E\in\Gamma\Lambda^0 E$ such that $W_E\langle f_E \rangle =0$) from the transport equations on our preferred choice of kinetic domain $E$. 

In section \cref{Subsec Current Density}, we show how to calculate the current on $M$, and show it is independent of the choice of $E$. While in section \cref{subsect Stress energy}, we give the formula for the stress-energy tensor, which does depend on $E$. 

\begin{definition}[Particle Density Form]
    A \textit{particle density form} on $E$, $\theta_E\in\Gamma\Lambda^{2n-2}E$ is a $2n-2$--form that satisfies the transport equations. 
\end{definition}

\begin{definition}[Transport Equations on $E$]\label{Def Transport Equations E}
Consider a kinetic domain $E\subset U$, a particle density 6--form $\theta_E\in\Gamma\Lambda^{2n-2}E$, and a Vlasov field $W_E\in\Gamma TE$.
The \textit{transport equations on} $E$ are written
\begin{equation}
d\theta_E=0,\; i_{W_E}\theta_E=0.
\label{Eqn Transport Equations E}
\end{equation}
A visualisation of the transport equations is given in \cref{Fig Transport Equations}.
\end{definition}

Let $E$ be a kinematic domain, $W_E\in\Gamma TE$ a Vlasov field on $E$, and let $\Omega_E\in\Gamma\Lambda^{2n-1}E$ be a measure on $E$ such that
\begin{equation}\label{Eqn LWE Omega =0}
L_{W_E}\Omega_E=0.
\end{equation}
We can relate a particle density $(2n{-}2)$--form $\theta_E\in\Gamma\Lambda^{2n-2}E$ and a particle density function $f_E\in\Gamma\Lambda^0E$ via
\begin{equation}\label{Eqn Theta E I}
\theta_E= f_Ei_{W_E} \Omega_E.
\end{equation}

\begin{lemma}
Given $\theta_E\in\Gamma\Lambda^{2n-2}E$ such that $i_{W_E}\theta_E=0$ then $f_E\in\Gamma\Lambda^0E$, given by \cref{Eqn Theta E I} exists and is unique. 
Furthermore, $\theta_E$ satisfies the transport equations \cref{Eqn Transport Equations E} if and only if $f_E$ satisfies \cref{Intr Vlasov Eqn on E}, that is $W_E\langle f_E\rangle =0$.
\end{lemma}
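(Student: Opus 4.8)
The plan is to prove the two assertions separately, in both cases exploiting that $\Omega_E$ is a nowhere-vanishing top form (a $(2n{-}1)$--form on the $(2n{-}1)$--dimensional manifold $E$) and that the Vlasov field $W_E$ is nowhere zero, so that $i_{W_E}\Omega_E$ is a nowhere-vanishing $(2n{-}2)$--form.

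\textbf{Existence and uniqueness of $f_E$.} First I would recall that contraction against a volume form, $X\mapsto i_X\Omega_E$, is a pointwise isomorphism from $\Gamma TE$ onto $\Gamma\Lambda^{2n-2}E$ (both fibres have dimension $2n{-}1$, and the map is injective since $i_X\Omega_E=0$ with $\Omega_E$ nonvanishing forces $X=0$). Hence the given $\theta_E$ can be written uniquely as $\theta_E=i_X\Omega_E$ for some $X\in\Gamma TE$, and the hypothesis $i_{W_E}\theta_E=0$ becomes $i_{W_E}i_X\Omega_E=0$. The key pointwise linear-algebra fact is that, for a volume form, $i_{W_E}i_X\Omega_E=0$ if and only if $X$ and $W_E$ are linearly dependent at each point; this is checked in a local frame with $W_E=\partial_0$ by a short computation. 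Since $W_E$ is nowhere zero this forces $X=f_E\,W_E$ for a unique $f_E\in\Gamma\Lambda^0E$, whence $\theta_E=i_{f_EW_E}\Omega_E=f_E\,i_{W_E}\Omega_E$, which is exactly \cref{Eqn Theta E I}; uniqueness of $f_E$ is then immediate from $i_{W_E}\Omega_E\neq0$.

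\textbf{The transport equivalence.} With $\theta_E=f_E\,i_{W_E}\Omega_E$ in hand (and $i_{W_E}\theta_E=0$ automatic since $i_{W_E}i_{W_E}\Omega_E=0$), the remaining content is $d\theta_E=0\iff W_E\langle f_E\rangle=0$. I would expand
\begin{equation*}
d\theta_E = df_E\wedge i_{W_E}\Omega_E + f_E\,d\!\left(i_{W_E}\Omega_E\right).
\end{equation*}
The second term vanishes: by Cartan's formula $d\,i_{W_E}\Omega_E = L_{W_E}\Omega_E - i_{W_E}d\Omega_E$, where $d\Omega_E=0$ because $\Omega_E$ is a top form and $L_{W_E}\Omega_E=0$ by \cref{Eqn LWE Omega =0}. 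For the first term I would use the identity $\alpha\wedge i_{W_E}\Omega_E=(i_{W_E}\alpha)\,\Omega_E$, valid for any $1$--form $\alpha$ and obtained by contracting the vanishing top-degree form $\alpha\wedge\Omega_E$ with $W_E$. Taking $\alpha=df_E$ gives the clean identity
\begin{equation*}
d\theta_E = \left(W_E\langle f_E\rangle\right)\Omega_E,
\end{equation*}
so that, $\Omega_E$ being nowhere zero, $d\theta_E=0$ precisely when $W_E\langle f_E\rangle=0$; that is, $\theta_E$ satisfies the transport equations \cref{Eqn Transport Equations E} if and only if $f_E$ satisfies the Vlasov equation \cref{Intr Vlasov Eqn on E}.

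\textbf{Main obstacle.} The only non-formal step is the existence (surjectivity) half of the first claim: showing that the constraint $i_{W_E}\theta_E=0$ forces $\theta_E$ to be a scalar multiple of $i_{W_E}\Omega_E$, which I reduce to the pointwise statement that $i_{W_E}i_X\Omega_E=0$ iff $X\parallel W_E$. Once this characterisation of the kernel of $i_{W_E}$ on $(2n{-}2)$--forms is secured, everything else is routine Cartan calculus together with the nonvanishing of $\Omega_E$ and $i_{W_E}\Omega_E$.
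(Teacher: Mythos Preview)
Your proposal is correct and follows essentially the same route as the paper: both arguments write $\theta_E=i_X\Omega_E$ (the paper does this in an adapted coordinate frame with $W_E=\partial_0$, which is precisely the ``local frame'' check you allude to), deduce $X\parallel W_E$ from $i_{W_E}i_X\Omega_E=0$, and then obtain $d\theta_E=W_E\langle f_E\rangle\,\Omega_E$ via the identity $df_E\wedge i_{W_E}\Omega_E=(i_{W_E}df_E)\,\Omega_E$. Your write-up is slightly more explicit in justifying $d(i_{W_E}\Omega_E)=0$ via Cartan's formula, which the paper leaves implicit.
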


\begin{proof}
Let $(x^0,...,x^{2n-2})$ define a coordinate system on $E$ such that $\partial_0^{(x)}=W_E$.
Then,
\begin{equation*}
    \Omega_E=\Omega_0dx^0\wedge \cdots \wedge dx^{2n-2}, \mbox{ and } 
    \theta_E= \theta_E^\mu i_\mu^{(x)}\Omega_E.
\end{equation*}
We then have
\begin{equation*}
    \begin{split}
        i_{W_E}\theta_E &= i_0^{(x)}\theta_E\\
        {}& = \theta_E^0 i_0^{(x)}i_0^{(x)} \Omega + \theta_E^a i_0^{(x)} i_a^{(x)} \Omega\\
        {}&=0.
    \end{split}
\end{equation*}
Hence $\theta_E^a=0$. 
It follows that $\theta_E= \theta_E^0i_0^{(x)}\Omega$ i.e. $\theta_E=f_E i_{W_E}\theta_E$.

From \cref{Eqn Theta E I}, we have
\begin{equation*}
\begin{split}
d\theta_E=& df_E\wedge i_{W_E}\Omega_E= i_{W_E}df \wedge \Omega_E\\
=& W_E\langle f_E\rangle \Omega_E, 
\end{split}
\end{equation*}
hence the transport equations for $\theta_E$ are equivalent to \cref{Intr Vlasov Eqn on E}.
\end{proof}

An example of the transport equations can be seen in \cite{hohmann_non-metric_2016}.
These equations however are subtly different in that a form $\omega_E=i_{W_E}\Omega_E$ with $L_{W_E}\Omega_E =0$ is defined and the transport equations are cast as $d\omega_E=0$ and $i_{W_E}\omega_E=0$.
Note that the particle density function $f_E$ is absent from this definition.

Given a volume form $\Omega\in\Gamma\Lambda^{2n} U$ and kinematic domain $E\subset U$ with the map $\Sigma_E: E\hookrightarrow U$, the typical choice of top form on $E$ is given by 
\begin{equation}\label{Eqn Omega E from Omega}
\Omega_E= \Sigma_E^\ast (i_{\mathcal{R}}\Omega).
\end{equation}

\begin{lemma}
    Let $\Omega\in\Gamma\Lambda^nU$ such that
    $L_W\Omega=0$. Let $\Omega_E\in\Gamma\Lambda^{n-1}E$ be defined by \cref{Eqn Omega E from Omega} then
    \cref{Eqn LWE Omega =0} holds.
\end{lemma}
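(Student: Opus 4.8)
The plan is to push the entire computation up to $U$, where the hypothesis $L_W\Omega=0$ lives, carry it out there, and only pull back at the very end, exploiting the fact that a top-degree form on $U$ restricts to zero on the lower-dimensional kinematic domain $E$. The organising principle is that the Lie derivative is natural under the tangent restriction $\Sigma_E$.

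First I would rewrite $L_{W_E}\Omega_E$ using Cartan's magic formula, $L_{W_E}=d\,i_{W_E}+i_{W_E}\,d$, and then move each operation through the pullback $\Sigma_E^\ast$. Since $W$ is tangent to $E$ with induced field $W_E$ (\cref{Def Extension of WE}), \cref{Def Tangent} gives $i_{W_E}(\Sigma_E^\ast\alpha)=\Sigma_E^\ast(i_W\alpha)$, while \cref{Def Pull and Push} gives $d\,\Sigma_E^\ast=\Sigma_E^\ast\,d$. Applying these to $\Omega_E=\Sigma_E^\ast(i_{\mathcal R}\Omega)$ and using Cartan's formula on $U$ collapses the expression to
\begin{equation*}
L_{W_E}\Omega_E=\Sigma_E^\ast\big(L_W(i_{\mathcal R}\Omega)\big).
\end{equation*}

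Next I would evaluate $L_W(i_{\mathcal R}\Omega)$ on $U$ via the commutator identity $[L_W,i_{\mathcal R}]=i_{[W,\mathcal R]}$. By \cref{Lem VQ iff Com} we have $[\mathcal R,W]=W$, hence $[W,\mathcal R]=-W$, and by hypothesis $L_W\Omega=0$, so
\begin{equation*}
L_W(i_{\mathcal R}\Omega)=i_{\mathcal R}(L_W\Omega)+i_{[W,\mathcal R]}\Omega=-\,i_W\Omega.
\end{equation*}
Combining the two displays gives $L_{W_E}\Omega_E=-\,\Sigma_E^\ast(i_W\Omega)$.

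Finally I would close the argument by a dimension count. Applying \cref{Def Tangent} once more, $\Sigma_E^\ast(i_W\Omega)=i_{W_E}(\Sigma_E^\ast\Omega)$; but $\Omega$ is a $2n$--form whereas $E$ is only $(2n{-}1)$--dimensional, so $\Sigma_E^\ast\Omega=0$ identically and therefore $L_{W_E}\Omega_E=0$. The only genuinely delicate point is the first step: justifying that $d$, $i_{W_E}$ and the pullback interchange as claimed, which rests precisely on $W$ being $\Sigma_E$--related to $W_E$ (tangency); everything afterwards is routine Cartan calculus together with the vanishing of a top-degree form pulled back to a hypersurface.
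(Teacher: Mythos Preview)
Your proof is correct and follows essentially the same route as the paper: pull the Lie derivative through $\Sigma_E^\ast$ via naturality, use the commutator identity $L_W i_{\mathcal R}=i_{\mathcal R}L_W+i_{[W,\mathcal R]}$ together with $[\mathcal R,W]=W$ and $L_W\Omega=0$, and finish by observing that $\Sigma_E^\ast\Omega=0$ for degree reasons. The paper does this in a single chain of equalities; you additionally spell out the naturality step via Cartan's formula and track the sign $[W,\mathcal R]=-W$ correctly (the paper's displayed computation silently drops this sign, though it is immaterial since the term vanishes after pullback anyway).
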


\begin{proof}
    Observe that
    \begin{equation*}
        \begin{split}
        L_{W_E}\Omega_E&= L_{W_E} \left( \Sigma_E^\ast i_{\mathcal{R}}\Omega\right)
        = \Sigma_E^\ast\left( L_Wi_{\mathcal{R}}\Omega\right)\\
        {}&= \Sigma_E^\ast\left(i_{[W,\mathcal{R}]}\Omega + i_{\mathcal{R}} L_W\Omega \right)
        = \Sigma_E^\ast i_W\Omega\\
        {}&= i_{W_E} \Sigma_E^\ast \Omega = 0,
        \end{split}
    \end{equation*}
    since the degree of $\Omega$ is greater than the dimension of $E$.
\end{proof}

In the instance that the chosen kinematic domain is the upper unit hyperboloid $\EH$ as given in \cref{Unit Hyperboloid Bundle}, then the volume form typically chosen is
\begin{equation}\label{Eqn Vol EH}
\Omega_{\EH}= -\frac{\det (g)}{\dot{x}_0} dx^0\wedge\cdots\wedge dx^{n-1}\wedge d\dot{x}^1\wedge\cdots\wedge d\dot{x}^{n-1}.
\end{equation}
Note that a volume form on $E\subset U$ can be constructed by choosing a volume form on $\Omega\in\Gamma\Lambda^{2n}U$ and then pulled back onto $E$.
Furthermore, if $W_{\EH}$ is the Vlasov field corresponding to the Lorentz force it can be shown that $L_{W_{\EH}}\Omega_{\EH}=0$ and hence the transport equations equipped with $\Omega_E$ are equivalent to the standard Vlasov equation.

There are instances however when it is advantageous to cast the Vlasov equation in terms of the transport equations even when the flow of the measure is not preserved by the Vlasov field (i.e. $L_{W_E}\Omega_E\neq 0$).
In \cite{noble_kinetic_2013} an attempt is made to account for the radiation reaction within the Vlasov field over a 4-dimensional spacetime manifold $M$.
To accomplish this, the Vlasov equation is cast as 
\begin{equation}
L_{W_Q}(f_Q\omega_Q)=0.
\end{equation}
Here the 10-dimensional manifold $Q$ is a subset of the double copy of the tangent bundle $\subset TM \oplus TM$ where the first copy of the tangent bundle contains normalised timelike velocity vectors (as on the upper unit hyperboloid), and the second copy contains acceleration vectors which are orthogonal to these velocities.
The measure $\omega_Q\in\Gamma\Lambda^{10}Q$ is formed on $Q$ by pulling back a specific 10 form related to the measure on $TM \oplus TM$.
In this case, the flow of $\omega_Q$ along $W_Q$ is not conserved:
\begin{equation}
L_{W_Q}\omega_Q= \frac{3}{\tau} \omega_Q, \mbox{ or equivalently } W_Q \langle f_Q\rangle + \frac{3}{\tau}f_Q=0,
\end{equation}
where $\tau=q^2/6\pi m$.
The latter equation is similar to \cref{Intr Vlasov Eqn on E}, but the additional term on the LHS is to account for losses due to radiation.
Note that, although the Vlasov equation is altered, the transport equations remain the same:
\begin{equation}
    d \theta_Q=0, \quad i_{W_Q}\theta_Q=0,
\end{equation}
where
\begin{equation}
    \theta_Q= i_{W_Q}(f_Q\omega_Q).
\end{equation}


\subsection{Example: The unit hyperboloid and lab time Vlasov vector fields for the Lorentz force equation.}
\label{subsec Vlasov LF}

In \cref{Eqn LF Vlasov}, \cref{Eqn LT Vlasov} and \cref{Eqn LT Vlasov Coord} we gave the Vlasov vector field adapted to the upper unit hyperboloid, an arbitrary time slicing and a time slicing with respect to coordinate time. We stated that one method to find the transformation is to calculate 2nd order ODES, perform the reparameterisation, and then recalculate the corresponding Vlasov vector field, which we do here. 

\begin{lemma}
    Equation \cref{Eqn LF Vlasov} can be transformed into equation \cref{Eqn LT Vlasov}, by reparameterising the Lorentz force ODE. 
\end{lemma}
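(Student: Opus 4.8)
The plan is to start from the second-order ODE encoded in \cref{Eqn LF Vlasov}, reparameterise it from proper time to lab time by hand, and then reconstruct the corresponding Vlasov field on $U$, checking that it reproduces \cref{Eqn LT Vlasov}. First I would read off from \cref{Eqn LF Vlasov} the proper-time trajectory equation in the form \cref{Eqn Trajectory ODE}, namely $\ddot{x}^\mu = \varphi^\mu$ with $\varphi^\mu = \frac{q}{m}\sigma\sqrt{\FkinUH}\,g^{\mu\nu}\mathcal{F}_{\nu\rho}\dot{x}^\rho - \Gamma^\mu_{\nu\rho}\dot{x}^\nu\dot{x}^\rho$, which is $2$--homogeneous in $\dot{x}$. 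Introducing a new parameter $s$ (destined to be lab time) via $\tau=\tau(s)$ and applying the chain rule to $\tfrac{d^2 x^\mu}{ds^2}$, the $2$--homogeneity of $\varphi^\mu$ lets me absorb the Jacobian factor $(ds/d\tau)^2$ into the velocity argument, yielding the reparameterised ODE $\tfrac{d^2 x^\mu}{ds^2} = \varphi^\mu + k\,\dot{x}^\mu$ — exactly the shape anticipated by \cref{Lem Trajectories}, with $k = -(ds/d\tau)^{-2}\,d^2s/d\tau^2$.

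The key step is to pin down $k$ from the defining property of the lab time bundle, namely that along the reparameterised trajectory $\dot{t} = \tfrac{\partial t}{\partial x^\lambda}\dot{x}^\lambda = 1$. Differentiating this constraint once more with respect to $s$ gives $\tfrac{\partial^2 t}{\partial x^\nu\partial x^\lambda}\dot{x}^\nu\dot{x}^\lambda + \tfrac{\partial t}{\partial x^\lambda}\tfrac{d^2 x^\lambda}{ds^2} = 0$; substituting the reparameterised ODE and using $\tfrac{\partial t}{\partial x^\lambda}\dot{x}^\lambda = 1$ then solves algebraically for $k = -\tfrac{\partial^2 t}{\partial x^\nu\partial x^\lambda}\dot{x}^\nu\dot{x}^\lambda - \tfrac{\partial t}{\partial x^\lambda}\varphi^\lambda$. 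Expanding $\tfrac{\partial t}{\partial x^\lambda}\varphi^\lambda$ with the explicit form of $\varphi^\lambda$ produces precisely the three terms in the bracket on the last line of \cref{Eqn LT Vlasov}.

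The one genuinely delicate point, and the step I expect to be the main obstacle, is the passage from the on-shell reparameterised ODE (valid only where $\dot{t}=1$) back to a globally defined Vlasov field on $U$. A Vlasov field requires its $\partial^{(\dot{x})}_\mu$ coefficients to be $2$--homogeneous (\cref{Def Vlasov Field on U}), whereas the value of $k$ derived above is itself $2$--homogeneous, so the naive correction $k\,\dot{x}^\mu$ is $3$--homogeneous. The resolution is that, since $\dot{t}=1$ on $\ELab{t}$, one may divide $k$ by $\dot{t}$ without altering its on-shell value; this produces the unique $1$--homogeneous multiplier that \cref{Lem Trajectories} demands for $k\mathcal{R}$, and hence a $2$--homogeneous coefficient. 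With the factor $\dot{x}^\mu/\dot{t}$ reinstated, the correction reads $-\tfrac{\dot{x}^\mu}{\dot{t}}\big(\tfrac{q}{m}\sigma\sqrt{\FkinUH}\,g^{\lambda\nu}\mathcal{F}_{\nu\rho}\dot{x}^\rho\,\tfrac{\partial t}{\partial x^\lambda} - \Gamma^\lambda_{\nu\rho}\dot{x}^\nu\dot{x}^\rho\,\tfrac{\partial t}{\partial x^\lambda} + \tfrac{\partial^2 t}{\partial x^\nu\partial x^\rho}\dot{x}^\nu\dot{x}^\rho\big)$, which is termwise identical to the extra term in \cref{Eqn LT Vlasov}; adding it to \cref{Eqn LF Vlasov} completes the identification. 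I would close by remarking that this is consistent with the reparameterisation relation \cref{Eqn Reparameterisation} and reproduces the same correction obtained directly from \cref{Thm Magic Formula}.
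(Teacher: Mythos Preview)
Your proposal is correct and follows essentially the same route as the paper: start from the proper-time Lorentz force ODE, reparameterise to lab time using the chain rule and the $2$--homogeneity of $\varphi^\mu$, identify the extra $k\,\dot{x}^\mu$ term, and then rebuild the Vlasov field on $U$. The only minor variation is in how you extract $k$: the paper computes $d^2t/d\tau^2 = W\vert_{\eta}\langle\dot{t}\rangle$ directly in the $\tau$--parameterisation and then converts via homogeneity (so the $1/\dot{t}$ factor appears automatically from $(dt/d\tau)^{-3}\,d^2t/d\tau^2$), whereas you differentiate the constraint $\dot{t}=1$ in the $s$--parameterisation and then restore the $1/\dot{t}$ by hand to fix the homogeneity degree. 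These are the same calculation in a different order, and your observation that the homogeneous extension is the subtle point is well taken.
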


\begin{proof}
Let $t$ be the parameterisation of the trajectories $C$ of the Vlasov field $W_{\EH}$ (i.e. \cref{Eqn LF Vlasov}.
The Lorentz force ODE is given in coordinates by
\begin{equation*}
    \frac{d^2 C^\mu}{d\tau^2}
    = \frac{q}{m}\sigma\sqrt{\FkinUH} g^{\mu\nu}\mathcal{F}_{\nu\rho} \frac{dC^\rho}{d\tau}
    - \Gamma^\mu_{\nu\rho} \frac{dC^\nu}{d\tau} \frac{dC^\rho}{d\tau}.
\end{equation*}
Let $W_{\ELab{t}}$ (i.e. \cref{Eqn LT Vlasov}) be parametrised by $s$.
Assuming the existence of a relationship $t=t(\tau)$ the above ODE can be reparametrised as
\begin{equation}\label{Eqn Reparametrised LF ODE}
     \begin{split}
         \frac{d^2 C^\mu}{d   t^2}
     =& \frac{q}{m}\sigma\sqrt{\FkinUH} g^{\mu\nu}\mathcal{F}_{\nu\rho} \frac{dC^\rho}{d t} 
     - \Gamma^\mu_{\nu\rho} \frac{dC^\nu}{d t} \frac{dC^\rho}{d t}\\
     {}&- \left( \frac{dt}{d\tau} \right)^{-3} \frac{d^2t}{d\tau^2} \frac{dC^\mu}{dt}.
     \end{split}
\end{equation}
Let the integral curves of $W_{\EH}$ and $W_{\ELab{t}}$ be given by $\eta_{\EH}$ and $\eta_{\ELab{t}}$ respectively.
Let $t_0=t(\tau_0)$ then we have the following relations:
\begin{align*}
    \frac{dt}{d\tau} \bigg{\vert}_{\tau_0}=& \dot{t}\vert_{\eta_{\EH}(\tau_0)},
\end{align*}
\begin{equation*}
    \begin{split}
        \frac{d^2t}{d\tau^2}& \bigg{\vert}_{\tau_0}
        = \eta_{\EH}{}_\ast(\partial_\tau\vert_{\tau_0}) \langle \dot{t}\rangle
        = W_{\EH}\vert_{\eta_{\EH}(\tau_0)} \langle \dot{t}\rangle\\
        =&  W_{\EH}\vert_{\eta_{\EH}(\tau_0)}\left\langle \dot{x}^\nu \frac{\partial t}{\partial x^\nu} \right\rangle \\
        =& \bigg{(}  \frac{q}{m} \sigma \sqrt{\FkinUH} g^{\lambda\nu}\mathcal{F}_{\nu\rho}\dot{x}^\rho \frac{\partial t}{\partial x^\lambda}\\ 
        {}&- \Gamma^\lambda_{\nu\rho}\dot{x}^\nu\dot{x}^\rho \frac{\partial t}{\partial x^\lambda} 
        +\dot{x}^\nu\dot{x}^\rho \frac{\partial t}{\partial x^\nu\partial x^\rho}   \bigg{)} \bigg{\vert}_{\eta_{\EH}(\tau_0)}\\
        =& \frac{q}{m} \left( \sigma \sqrt{\FkinUH} g^{\lambda\nu}\mathcal{F}_{\nu\rho} \right)\big{\vert}_{\eta_{\EH}(\tau_0)} \frac{d C^\rho}{d\tau}\bigg{\vert}_{\tau_0} \frac{\partial t}{\partial x^\lambda}\bigg{\vert}_{\tau_0}\\ 
        {}&- \Gamma^\lambda_{\nu\rho} \big{\vert}_{\eta_{\EH}(\tau_0)} \frac{d C^\nu}{d\tau}\bigg{\vert}_{\tau_0} \frac{d C^\rho}{d\tau}\bigg{\vert}_{\tau_0} \frac{\partial t}{\partial x^\lambda} \bigg{\vert}_{\tau_0}\\
         {}& +\frac{d C^\nu}{d\tau}\bigg{\vert}_{\tau_0} \frac{d C^\rho}{d\tau}\bigg{\vert}_{\tau_0} \frac{\partial t}{\partial x^\nu\partial x^\rho}\bigg{\vert}_{\tau_0}\\
        =& \left( \frac{dt}{d\tau}\bigg{\vert}_{\tau_0} \right)^2 \bigg{(}
        \frac{q}{m} \left( \sigma \sqrt{\FkinUH} g^{\lambda\nu}\mathcal{F}_{\nu\rho} \right)\big{\vert}_{\eta_{\ELab{t}}(t_0)} \frac{d C^\rho}{dt}\bigg{\vert}_{t_0} \frac{\partial t}{\partial x^\lambda}\bigg{\vert}_{t_0} \\
        {}& - \Gamma^\lambda_{\nu\rho} \big{\vert}_{\eta_{\ELab{t}}(t_0)} \frac{d C^\nu}{dt}\bigg{\vert}_{t_0} \frac{d C^\rho}{dt}\bigg{\vert}_{t_0} \frac{\partial t}{\partial x^\lambda} \bigg{\vert}_{t_0}\\
        {}&\quad +\frac{d C^\nu}{dt}\bigg{\vert}_{t_0} \frac{d C^\rho}{dt}\bigg{\vert}_{t_0} \frac{\partial t}{\partial x^\nu\partial x^\rho}\bigg{\vert}_{t_0}  \bigg{)}
    \end{split}
\end{equation*}
Inserting the above relationships into \cref{Eqn Reparametrised LF ODE} we get
\begin{equation}\label{Eqn Reparametrised LF ODE II}
    \begin{split}
        \frac{d^2 C^\mu}{d   t^2}
     =& \frac{q}{m}\sigma\sqrt{\FkinUH} g^{\mu\nu}\mathcal{F}_{\nu\rho} \frac{dC^\rho}{d t} 
     - \Gamma^\mu_{\nu\rho} \frac{dC^\nu}{d t} \frac{dC^\rho}{d t}\\
     {}& -\left( \dot{t}\right)^{-1}  \bigg{(}
        \frac{q}{m}  \sigma \sqrt{\FkinUH} g^{\lambda\nu}\mathcal{F}_{\nu\rho}  \frac{d C^\rho}{dt} \frac{\partial t}{\partial x^\lambda}\\ 
    {}& \qquad - \Gamma^\lambda_{\nu\rho} \frac{d C^\nu}{dt} \frac{d C^\rho}{dt} \frac{\partial t}{\partial x^\lambda}\\ 
    {}& \qquad +\frac{d C^\nu}{dt} \frac{d C^\rho}{dt} \frac{\partial t}{\partial x^\nu\partial x^\rho}  \bigg{)} \frac{dC^\mu}{dt}.
    \end{split}
\end{equation}

We may identify this with the $\partial^{(\dot{x})}_\mu$ term in \cref{Eqn LT Vlasov} using \cref{Eqn Trajectory ODE}.
Hence $W_{\ELab{t}}$ corresponds to the reparametrised Lorentz force equation above.
If we were to choose an adapted coordinate system $(x^0=t,x^a,\dot{x}^\mu)$ then we would have
\begin{equation}
    \frac{\partial t}{\partial x^\mu}=\delta^0_\mu, \quad \frac{\partial t}{\partial x^\mu \partial^\nu}=0.
\end{equation}
Plugging these into \cref{Eqn Reparametrised LF ODE II} gives the $\partial^{(\dot{x})}_\mu$ coefficient of \cref{Eqn LT Vlasov Coord}.
\end{proof}

Alternatively we can use the transformation formula \cref{Magic Formula}. 
\begin{lemma}
    Equation \cref{Eqn LF Vlasov} can be transformed into equations \cref{Eqn LT Vlasov} and hence \cref{Eqn LT Vlasov Coord}, using the transformation formula \cref{Magic Formula}. 
\end{lemma}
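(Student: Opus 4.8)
The plan is to apply \cref{Thm Magic Formula} directly, taking $W=W_{\EH}$ from \cref{Eqn LF Vlasov} and choosing the target kinematic indicator to be $\hat F=\dot{t}$, which is a $1$--homogeneous \Fkin for the lab time bundle $\ELab{t}$. The formula then reads $\hat W = W_{\EH} - \big(W_{\EH}\langle\dot{t}\rangle/\mathcal{R}\langle\dot{t}\rangle\big)\,\mathcal{R}$, so the entire proof reduces to evaluating the two scalars $\mathcal{R}\langle\dot{t}\rangle$ and $W_{\EH}\langle\dot{t}\rangle$ in induced coordinates and recognising the outcome.

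First I would compute the denominator: since $\dot{t}=\dot{x}^\nu\,\partial^{(x)}_\nu t$ is linear in the fibre coordinates it is $1$--homogeneous, so Euler's theorem \cref{Eqn Euler Thm} gives $\mathcal{R}\langle\dot{t}\rangle=\dot{t}$ at once. Next I would expand $W_{\EH}$ via \cref{Vlasov Coords} and let it act on $\dot{t}$. The horizontal part $\dot{x}^\mu\partial^{(x)}_\mu$ differentiates only the position-dependent coefficient $\partial t/\partial x^\nu$, producing the Hessian contribution $\dot{x}^\mu\dot{x}^\nu\,\frac{\partial^2 t}{\partial x^\mu\partial x^\nu}$, while the vertical part $\varphi^\mu\partial^{(\dot{x})}_\mu$ simply contracts the coefficient $\varphi^\mu$ of \cref{Eqn LF Vlasov} against $\partial t/\partial x^\mu$, giving the Faraday and Christoffel terms. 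This reproduces precisely the expression for $W_{\EH}\langle\dot{t}\rangle$ already displayed just after \cref{Magic Formula}.

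It then remains only to assemble the correction term. Writing $\mathcal{R}=\dot{x}^\mu\partial^{(\dot{x})}_\mu$ and dividing by $\mathcal{R}\langle\dot{t}\rangle=\dot{t}$, the subtracted vector becomes $\big(W_{\EH}\langle\dot{t}\rangle/\dot{t}\big)\,\dot{x}^\mu\,\partial^{(\dot{x})}_\mu$, whose coefficient carries the overall factor $\dot{x}^\mu/\dot{t}$; matching this term by term against the final bracketed term of \cref{Eqn LT Vlasov} shows they agree, so $\hat W=W_{\ELab{t}}$. Specialising to a coordinate system adapted to the lab time, where $\partial t/\partial x^\mu=\delta^0_\mu$ and the Hessian vanishes, then collapses \cref{Eqn LT Vlasov} to \cref{Eqn LT Vlasov Coord}. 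I expect no genuine obstacle here: the only nontrivial computation, that of $W_{\EH}\langle\dot{t}\rangle$, is already carried out in the text, and the point of the lemma is precisely that \cref{Thm Magic Formula} delivers the lab-time Vlasov field in a single line, in contrast to the lengthy ODE reparameterisation of the preceding proof.
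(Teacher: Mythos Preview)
Your proposal is correct and follows exactly the approach the paper intends: the paper's own proof is the single sentence ``This is just a trivial calculation,'' and what you have written is precisely that calculation spelled out. Indeed, the key identity $\frac{W_{\EH}\langle\dot t\rangle}{\mathcal{R}\langle\dot t\rangle}\mathcal{R}$ equalling the extra term in \cref{Eqn LT Vlasov} is already displayed in the paragraph following \cref{Thm Magic Formula}, so nothing more is required.
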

\begin{proof}
    This is just a trivial calculation.
\end{proof}

Thus we see that the use of the transformation formula greatly simplifies the task of re-adapting the Vlasov vector field from one kinematic domain to another. Furthermore we will see below, in  that equation \cref{Magic Formula} can be trivially derived from the Vlasov bivector. This is shown in \cref{Lem MAgic Fromula From Psi} below.


\section{The Vlasov Bivectors}\label{Sec Vlasov Bivectors}
Here we present the Vlasov Bivector, the key goal of this article. The advantages of this approach have been listed in section \cref{Sec: Intro Advantages}. 
The Vlasov bivector $\Psi$ can always be expressed as 
\begin{equation}\label{Vlasov Bivector}
\Psi= \mathcal{R} \wedge W,
\end{equation}
for any appropriate $W\in\Gamma TU$.
It is trivial to see that any Vlasov fields related by \cref{Eqn W and W hat} and \cref{Magic Formula} produce the same Vlasov bivector, and so represents an entire class of equivalent Vlasov fields.
Throughout this section, $\Psi$ will be reserved for Vlasov bivectors while arbitrary bivectors will be denoted $\Phi$.

In \cref{Subsect Bivectors} we present the necessary results for arbitrary bivectors, not necessarily the Vlasov Bivector. This is followed in \cref{Subsect Horizontal Bivectors} with more conditions on a general bivector, until in \cref{Subsect Vlasov Bivectors} we impose all the conditions required on bivector to be Vlasov.

\subsection{Bivectors}\label{Subsect Bivectors}

\begin{definition}[Bivector]
A \textit{bivector} over a manifold $N$ is an exterior product of vector fields $X_\mu,Y_\mu\in\Gamma\ TN$ over $N$, denoted by $\sum_\mu X_\mu\wedge Y_\mu$. Here $\mu$ is taken to be an arbitrary summation index. These follow the standard rules for exterior products, namely `f'--linearity and antisymmetry.
\end{definition}

\begin{definition}[Simple Bivectors]
A bivector over $N$, $\Phi$ is said to be \textit{simple} if there exists $X,Y\in\Gamma TN$ such that $\Phi=X\wedge Y$.
The space of all simple bivectors (fields) over $N$ is denoted by $\Gamma\mathcal{B}^2(N)$.
\end{definition}

We require that a Vlasov bivector be simple and take the form \cref{Vlasov Bivector}.
In order to formally write the definition of $\Psi$ we first observe the following property of simple bivectors.

\begin{definition}[Bivector Pairing]
A bivector $\sum_\mu X_\mu \wedge Y_\mu$ acts on a pair of scalar fields $F,G\in\Gamma\Lambda^0U$ according to
\begin{equation}
\sum_\mu\big(X_\mu\wedge Y_\mu)\langle F,G\rangle= \sum_\mu( X_\mu\langle F\rangle Y_\mu\langle G\rangle- X_\mu\langle G\rangle Y_\mu\langle F\rangle \big).
\end{equation}
Similarly we may define a vector field using
\begin{equation}
\sum_\mu(X_\mu\wedge Y_\mu)\langle F,\bullet\rangle= \sum_\mu\big( X_\mu\langle F\rangle Y_\mu- \ Y_\mu\langle F\rangle X_\mu\big).
\end{equation}
In the case that we have a simple bivector acting on a scalar field we write
\begin{equation}\label{Eqn Biv on Scalar}
(X\wedge Y)\langle F,\bullet \rangle = X\langle F\rangle Y- Y\langle F\rangle X.
\end{equation}
\end{definition}

\begin{lemma}\label{Lem X wedge Phi}
Given a bivector $\Phi$ and a non-zero vector field $X\in\Gamma TN$, $X\wedge\Phi=0$ if and only if there exists some $Y\in\Gamma\ TN$ such that $\Phi=X\wedge Y$.
\end{lemma}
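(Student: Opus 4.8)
The plan is to prove the two implications separately, with all the content living in the factorisation direction. The reverse implication is immediate: if $\Phi = X\wedge Y$ for some $Y\in\Gamma TN$, then $X\wedge\Phi = X\wedge X\wedge Y = 0$ because $X\wedge X = 0$ by antisymmetry of the exterior product. So the real task is to show that the single hypothesis $X\wedge\Phi = 0$ forces $\Phi$ to factor through $X$. Before starting I would make explicit that ``non-zero'' means \emph{nowhere vanishing}; this is the relevant reading here (the intended $X$ is the radial field, which is nowhere zero on $U$), and it cannot be weakened, since if $X$ vanished at a point $p$ with $\Phi|_p\neq0$ then $X\wedge\Phi$ would still vanish at $p$ while no factorisation could hold there.

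For the factorisation direction I would construct $Y$ explicitly rather than argue abstractly, so that smoothness and globality come for free. Since $X$ is nowhere vanishing, fix any auxiliary Riemannian metric $g$ on $N$ and set $\alpha = g(X,\cdot)/g(X,X)\in\Gamma\Lambda^1 N$. This is a globally defined smooth $1$--form with $\alpha{:}X = 1$, the denominator $g(X,X)$ being strictly positive everywhere precisely because $X\neq0$. Writing $\Phi = \sum_\mu A_\mu\wedge B_\mu$ as in the definition of a bivector, I then define
\begin{equation*}
Y = \iota_\alpha\Phi = \sum_\mu\big((\alpha{:}A_\mu)\,B_\mu - (\alpha{:}B_\mu)\,A_\mu\big),
\end{equation*}
the contraction of $\Phi$ in one slot by $\alpha$ (the operation adjoint to the $i_X$ of the preliminaries, now with a one--form contracting a multivector). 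This $Y$ is a genuine global smooth vector field, being a $C^\infty(N)$--linear expression in the smooth data $\alpha$ and $\Phi$.

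The key step is the identity that $\iota_\alpha$ is an antiderivation of degree $-1$ on multivector fields, which applied to $X\wedge\Phi$ gives $\iota_\alpha(X\wedge\Phi) = (\alpha{:}X)\,\Phi - X\wedge\iota_\alpha\Phi = \Phi - X\wedge Y$; this is a routine term-by-term computation on each summand $X\wedge A_\mu\wedge B_\mu$ using $\alpha{:}X=1$. Under the hypothesis $X\wedge\Phi = 0$ the left-hand side vanishes, so rearranging yields $\Phi = X\wedge Y$, which is exactly the desired factorisation.

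I expect the main obstacle to be not the pointwise linear algebra but the passage to a \emph{single global} $Y$. Pointwise the statement is transparent: completing $X$ to a basis and reading off the components of $\Phi$ shows that the coefficients not involving $X$ must all vanish. However, a frame-based argument only produces $Y$ chart by chart and would then demand a partition-of-unity gluing, during which the non-uniqueness of $Y$ (one may add any multiple of $X$) becomes a nuisance. Packaging the construction as the contraction $\iota_\alpha\Phi$ sidesteps this difficulty completely, which is why I would organise the proof around the antiderivation identity rather than around local frames.
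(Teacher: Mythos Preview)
Your proof is correct. The reverse implication is identical to the paper's, and for the factorisation direction your contraction argument via a globally defined $\alpha$ with $\alpha{:}X=1$ and the antiderivation identity $\iota_\alpha(X\wedge\Phi)=(\alpha{:}X)\,\Phi - X\wedge\iota_\alpha\Phi$ is sound.

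The paper takes the local-frame route you explicitly chose to avoid: it straightens $X$ to $\partial_0$ in a local coordinate system, expands $\Phi=\Phi^{0a}\partial_0\wedge\partial_a+\tfrac12\Phi^{bc}\partial_b\wedge\partial_c$, and observes that $\partial_0\wedge\Phi=0$ kills the $\Phi^{bc}$ pieces, leaving $\Phi=\partial_0\wedge(\Phi^{0a}\partial_a)$. This is shorter and requires no auxiliary structure, but as you point out it only yields $Y$ chart by chart and does not discuss patching; the paper is content with the local statement since in its applications the lemma is used pointwise. Your approach costs an auxiliary Riemannian metric (which always exists by paracompactness) but in exchange produces a single globally smooth $Y=\iota_\alpha\Phi$ with no gluing step, and makes explicit that ``non-zero'' must mean nowhere vanishing. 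Both arguments are perfectly adequate for the use the paper makes of the lemma.
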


\begin{proof}
Suppose that $X\wedge \Phi=0$.
Define a local coordinate system such that $X=\partial_0$ when we may write $\Phi=\Phi^{0a}\partial_0\wedge \partial_a+\frac{1}{2} \Phi^{bc}\partial_b \wedge \partial_c$.
Since $\partial_0\wedge \Phi= \frac{1}{2} \Phi^{bc}\partial_0\wedge\partial_b\wedge\partial_c=0$, we must have that each $\Phi^{bc}=0$.
Hence $\Phi= \partial_0\wedge (\Phi^{0a}\partial_a)=X\wedge Y$.
The converse holds by the properties of the exterior product ($X\wedge X=0$).
\end{proof}

\begin{definition}[Specially Related Pairs of Vector Fields] \label{Linearly Related}
Two pairs of vector fields $X_1,X_2\in \Gamma TN$ and $Y_1,Y_2\in \Gamma TN$ are said to be \textit{specially related} if there exists $\alpha,\beta,\gamma,\delta\in\Gamma\Lambda^0N$ satisfying
\begin{equation*}
( \alpha\delta- \beta\gamma )\vert_p=1\; \forall p\in N,
\end{equation*}
such that
\begin{equation*}
\begin{split}
Y_1=&\alpha X_1+\beta X_2,\\
Y_2=&\gamma X_1+\delta X_2.
\end{split}
\end{equation*}
\end{definition}

\begin{lemma}\label{Lem Lin Rel Lemma}
Two pairs of vector fields, $X_1,X_2\in\Gamma TN$ and $Y_1,Y_2\in\Gamma TN$, are specially related if and only if $X_1\wedge X_2= Y_1\wedge Y_2$. 
\end{lemma}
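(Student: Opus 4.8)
The plan is to prove the two implications separately; the forward direction is a one-line computation, while the converse carries the genuine content.

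For the forward direction, suppose $X_1,X_2$ and $Y_1,Y_2$ are specially related as in \cref{Linearly Related}. Then I would simply expand, using antisymmetry ($X_i\wedge X_i=0$) and the normalisation $\alpha\delta-\beta\gamma=1$,
\begin{equation*}
Y_1\wedge Y_2 = (\alpha X_1+\beta X_2)\wedge(\gamma X_1+\delta X_2) = (\alpha\delta-\beta\gamma)\,X_1\wedge X_2 = X_1\wedge X_2 .
\end{equation*}

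For the converse, assume $X_1\wedge X_2=Y_1\wedge Y_2$ and take this common bivector to be non-vanishing (equivalently, that $X_1,X_2$, and hence $Y_1,Y_2$, are pointwise linearly independent; this is the only regime compatible with $\alpha\delta-\beta\gamma=1$, and it is exactly the situation of the Vlasov bivector $\mathcal{R}\wedge W$, where $\mathcal{R}$ is radial and $W$ horizontal). First I would observe that $Y_1\wedge X_1\wedge X_2 = Y_1\wedge Y_1\wedge Y_2 = 0$, so by \cref{Lem X wedge Phi} (with $X=Y_1$ and $\Phi=X_1\wedge X_2$) one may write $X_1\wedge X_2 = Y_1\wedge Y'$ for some $Y'\in\Gamma TN$; in particular $Y_1$ lies pointwise in the plane spanned by $X_1,X_2$. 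The identical argument applied to $Y_2$ places $Y_2$ in that same plane. Hence there are unique pointwise coefficients with $Y_1=\alpha X_1+\beta X_2$ and $Y_2=\gamma X_1+\delta X_2$. Smoothness then follows from Cramer-type identities obtained by wedging: $Y_1\wedge X_2=\alpha\,X_1\wedge X_2$ and $X_1\wedge Y_1=\beta\,X_1\wedge X_2$ (and likewise for $\gamma,\delta$), exhibiting each coefficient as the ratio of two everywhere-proportional sections of $\Lambda^2 TN$ with nowhere-vanishing denominator, hence a smooth scalar field; uniqueness guarantees these local expressions patch to global elements of $\Gamma\Lambda^0N$. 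Finally, re-expanding $Y_1\wedge Y_2=(\alpha\delta-\beta\gamma)\,X_1\wedge X_2$ and comparing with the hypothesis $Y_1\wedge Y_2 = X_1\wedge X_2\neq0$ forces $\alpha\delta-\beta\gamma=1$, which is precisely the normalisation demanded by \cref{Linearly Related}.

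The main obstacle I anticipate is not any hard calculation but the non-degeneracy caveat in the converse: the equivalence genuinely fails when $X_1\wedge X_2=0$ (for instance $X_1=X_2\neq0$ with $Y_1=Y_2=0$ yields equal, vanishing bivectors yet admits no determinant-one relation), so the statement must be read under the standing assumption that the bivector is simple and non-zero. Granting that, the only real subtlety is confirming that the coefficient functions are smooth and globally defined, which the wedge-ratio formulas above resolve cleanly, with the containment step phrased directly in the paper's own language via \cref{Lem X wedge Phi}.
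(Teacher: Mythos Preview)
Your proof is correct and the forward direction matches the paper's exactly. For the converse, however, you take a genuinely different route. The paper extends $X_1,X_2$ to a full local frame $\{X_1,\ldots,X_n\}$, writes $Y_1=\alpha^iX_i$, $Y_2=\beta^iX_i$, and then argues by explicit linear algebra on the coefficients: comparing $X_i\wedge X_j$--components of $Y_1\wedge Y_2$ yields $\alpha^1\beta^2-\alpha^2\beta^1=1$ together with the relations $\alpha^i\beta^j=\alpha^j\beta^i$ for $i$ or $j>2$, and a short matrix argument then forces $\alpha^i=\beta^i=0$ for $i>2$. Your argument instead uses the single observation $Y_k\wedge X_1\wedge X_2=0$ to place each $Y_k$ in the plane of $X_1,X_2$ directly, then reads off the determinant condition. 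This is cleaner and avoids the frame extension entirely; you also make explicit two points the paper leaves implicit, namely the smoothness of the coefficients (via your Cramer--type wedge identities) and the necessity of the non-degeneracy hypothesis $X_1\wedge X_2\neq0$. One small comment: the detour through \cref{Lem X wedge Phi} is not really needed---the vanishing $Y_1\wedge X_1\wedge X_2=0$ already gives $Y_1\in\mathrm{span}(X_1,X_2)$ pointwise by elementary exterior algebra, without passing through the intermediate representation $X_1\wedge X_2=Y_1\wedge Y'$.
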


\begin{proof}
See \cref{Sect Proofs}.
\end{proof}

For a Vlasov bivector to contain all the necessary information to define a system, it must satisfy three conditions: the radial condition, the horizontal condition, and a third condition. There are three equivalent ways to express this third conditions: integrability, being radially cubic and being expressible as in \cref{Vlasov Bivector}. 

We begin by discussing integrability.
A geometric interpretation of a bivector is a network of infinitesimal rectangles whose sides are defined by a pair of vectors.
When these bivectors `knit together' to form smooth surfaces, the vector-distribution spanned by the components of the bivector is integrable. These surfaces are depicted in figure \cref{Geometric Bivector}.
More formally, we may consider integrability using the Frobenius theorem, through the language of vector distributions (a method of smoothly assigning vector subspaces of $T_xM$ to each point $x\in M$).

\begin{definition}[Tangent Bivector]\label{Def Tangent Bivector}
A bivector $\Phi$ is tangent to a surface $K\subset N$ if, given a representation for the bivector $\Phi=X\wedge Y, \; X,Y\in\Gamma TN$, both $X$ and $Y$ are tangent to $K$ in the sense described in \cref{Def Tangent}.
\end{definition}

From \cref{Lem Lin Rel Lemma} it is clear that \cref{Def Tangent Bivector} is independent of the representation of the bivector: if $\Phi=X\wedge Y$ and $\Phi=Z\wedge V$ and $X$ and $Y$ are tangent to $K$, then by \cref{Lem Lin Rel Lemma} $Z$ and $V$ are linear combinations of $X$ and $Y$ and hence tangent to $K$.

\begin{definition}[Integrability]\label{Def Integrability}
A simple bivector $\Phi\in\Gamma\mathcal{B}^2(N)$ is said to be \textit{integrable} if there exists $X,Y\in\Gamma TN$ and $\alpha,\beta\in\Gamma\Lambda^0N$ such that if $\Phi=X\wedge Y$ then $[X,Y]=\alpha X+ \beta Y$. 
By \cref{Lem Lin Rel Lemma}, if one representation of $\Phi$ satisfies the integrability condition, then so do all other representations of $\Phi$.
\end{definition}

\begin{lemma}\label{Lem Lin Rel == Integrable}
If a bivector $\Phi\in\Gamma\mathcal{B}^2(N)$ is integrable then for any representation $\Phi=X\wedge Y,\; X,Y\in\Gamma TN$, we have
\begin{equation}
[X,Y]=\alpha X+ \beta Y,
\end{equation}
for some $\alpha,\beta\in\Gamma\Lambda^0N$.
\end{lemma}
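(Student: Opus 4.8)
The plan is to exploit the fact that integrability is \emph{defined} through the existence of a single good representation, and then to transport the bracket condition to every other representation using the explicit change-of-frame description supplied by \cref{Lem Lin Rel Lemma}. First I would fix the distinguished representation guaranteed by \cref{Def Integrability}: there exist $X_0,Y_0\in\Gamma TN$ with $\Phi=X_0\wedge Y_0$ and $[X_0,Y_0]=\alpha_0 X_0+\beta_0 Y_0$ for some $\alpha_0,\beta_0\in\Gamma\Lambda^0N$. Now let $\Phi=X\wedge Y$ be an arbitrary representation. Since $X\wedge Y=X_0\wedge Y_0$, \cref{Lem Lin Rel Lemma} provides $\alpha,\beta,\gamma,\delta\in\Gamma\Lambda^0N$ with $\alpha\delta-\beta\gamma=1$ pointwise such that $X=\alpha X_0+\beta Y_0$ and $Y=\gamma X_0+\delta Y_0$.

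Next I would compute $[X,Y]$ by bilinearity together with the Leibniz rule $[fZ,gV]=fg[Z,V]+f\,Z\langle g\rangle\,V-g\,V\langle f\rangle\,Z$. Expanding $[\alpha X_0+\beta Y_0,\,\gamma X_0+\delta Y_0]$ into four brackets, the two ``diagonal'' terms $[\alpha X_0,\gamma X_0]$ and $[\beta Y_0,\delta Y_0]$ contribute $[X_0,X_0]=0$ and $[Y_0,Y_0]=0$ and leave only scalar multiples of $X_0$ and $Y_0$ respectively (coming from the derivative-of-coefficient terms). The two ``cross'' terms $[\alpha X_0,\delta Y_0]$ and $[\beta Y_0,\gamma X_0]$ combine so that the coefficient of $[X_0,Y_0]$ is exactly $\alpha\delta-\beta\gamma=1$, the remaining pieces again being multiples of $X_0$ and $Y_0$. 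Hence
\begin{equation*}
[X,Y]=[X_0,Y_0]+p\,X_0+q\,Y_0
\end{equation*}
for some $p,q\in\Gamma\Lambda^0N$, and substituting the integrability hypothesis $[X_0,Y_0]=\alpha_0 X_0+\beta_0 Y_0$ yields $[X,Y]=\tilde p\,X_0+\tilde q\,Y_0$ for suitable $\tilde p,\tilde q\in\Gamma\Lambda^0N$.

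Finally, because $\alpha\delta-\beta\gamma=1\neq0$ the linear relation is invertible, with $X_0=\delta X-\beta Y$ and $Y_0=-\gamma X+\alpha Y$. Substituting these back into the expression for $[X,Y]$ rewrites it as a linear combination of $X$ and $Y$, giving $[X,Y]=\alpha' X+\beta' Y$ with $\alpha',\beta'\in\Gamma\Lambda^0N$, which is the claim. I expect the only real obstacle to be the bookkeeping in the Leibniz expansion: one must check carefully that all derivative-of-coefficient terms land in $\mathrm{span}\{X_0,Y_0\}$ and that the $[X_0,Y_0]$ coefficient collapses precisely to the determinant. This is routine, and the genuine conceptual content sits entirely in the normalisation $\alpha\delta-\beta\gamma=1$ (which lets the bracket term survive unscaled) combined with the resulting invertibility (which lets us return to the $X,Y$ frame).
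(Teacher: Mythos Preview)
Your proposal is correct and follows essentially the same approach as the paper's own proof: fix the distinguished representation $(X_0,Y_0)$ provided by the definition of integrability, use \cref{Lem Lin Rel Lemma} to write an arbitrary representation $(X,Y)$ as a special linear combination of $(X_0,Y_0)$, expand $[X,Y]$ via the Leibniz rule to see it lands in $\mathrm{span}\{X_0,Y_0\}$, and then invert the change of frame. Your observation that the coefficient of $[X_0,Y_0]$ in the expansion collapses to the determinant $\alpha\delta-\beta\gamma=1$ is a tidy shortcut that the paper does not explicitly isolate (it simply writes out the full expansion), but the underlying argument is identical.
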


\begin{proof}
See \cref{Prf 2}.
\end{proof}

It follows from \cref{Lem Lin Rel Lemma} and \cref{Lem Lin Rel == Integrable} that integrability is a well defined property. 
That is, given a simple bivector, if one representation satisfies the integrability condition, then all representations do.\\

The Frobenius theorem states that a vector distribution over a manifold $M$ is integrable if and only if the Lie bracket of any two vectors within the distribution also lies within the distribution.
An integrable vector distribution then admits a collection of maximal connected integral manifolds which form a foliation of $M$.
Given an integrable bivector $\Phi=X\wedge Y$ we can form a vector distribution over $U$ which is spanned by $X$ and $Y$.
This generates a 2-dimensional foliation of $U$ such that $X$ and $Y$ are tangent to the leaves of our foliation at each point.
We may identify the leaves of this foliation with the 

We also introduce the null condition here.
This is a property an arbitrary bivector may have which is necessary for defining the transport equations on $U$.

\begin{definition}[Null condition] \label{Def Null Condition}
Given a $(2n{-}2)$--form $\alpha$, and a bivector $\Phi\in\Gamma\mathcal{B}^2(N)$, then \textit{the null condition} is given by
\begin{equation}\label{Eqn Null Condition}
\begin{gathered}
\mbox{Null}(\Phi,\alpha) \text{ is true}\iff  i_X\alpha=0\;\text{and}\; i_Y\alpha=0,
\\
\mbox{ for any } X,Y\in\Gamma TN \mbox{ such that } \Phi= X \wedge Y.
\end{gathered}
\end{equation}
\end{definition}

To see that the null condition is well defined notice that if $X\wedge Y= Z\wedge V$ then $\mbox{Null}(X\wedge Y,\alpha)$ holds if and only if $\mbox{Null }(Z\wedge V,\alpha)$ holds.
This is due to \cref{Lem Lin Rel Lemma} and the linearity of the contraction mapping.

\subsection{Horizontal Bivectors}\label{Subsect Horizontal Bivectors}

In order for a bivector to properly define a system we require it to satisfy certain properties.
The first of which are the radial condition and the horizontal condition.
We deal with both of these conditions in tandem.
We now work exclusively work on the submanifold $U\subset \TMSlit$.

\begin{definition}[Radial Bivectors]
A simple bivector $\Phi\in\Gamma\mathcal{B}^2(U)$ is called \textit{radial} if
\begin{equation}
\mathcal{R}\wedge\Phi=0,
\end{equation}
where $\mathcal{R}\in\Gamma TU$ is the radial vector field.
\end{definition}

\begin{definition}[Horizontal Bivectors]
A radial simple bivector $\Phi\in\Gamma\mathcal{B}^2(U)$ is called \textit{horizontal} if for any $f,h\in\Gamma\Lambda^0U$ it satisfies
\begin{equation}
\Phi\langle \pi^\ast f, \dot{h}\rangle = -\dot{f}\dot{h}.
\end{equation}
The space of horizontal bivectors is denoted $\Gamma\mathcal{B}^2_H(U)$.
\end{definition}

\begin{lemma}\label{Psi Structure Thm}
A bivector $\Phi=\mathcal{R}\wedge X$ is horizontal if and only if $X\in\Gamma TU$ is horizontal. 
\end{lemma}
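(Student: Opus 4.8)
The plan is to reduce the bivector condition to the vector condition by a single direct pairing computation, exploiting that the radial vector field annihilates base scalars while fixing their scalar lifts. First I would dispose of radiality: since $\Phi=\mathcal{R}\wedge X$ we have $\mathcal{R}\wedge\Phi=\mathcal{R}\wedge\mathcal{R}\wedge X=0$ automatically, so $\Phi$ is always radial and the only substantive content of the statement is the horizontal pairing identity.

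Next I would expand $\Phi\langle\pi^\ast f,\dot h\rangle$ for arbitrary base scalar fields $f,h\in\Gamma\Lambda^0M$, applying the bivector pairing \cref{Eqn Biv on Scalar} to the representation $\Phi=\mathcal{R}\wedge X$:
\[
\Phi\langle\pi^\ast f,\dot h\rangle
= \mathcal{R}\langle\pi^\ast f\rangle\,X\langle\dot h\rangle
- \mathcal{R}\langle\dot h\rangle\,X\langle\pi^\ast f\rangle .
\]
The defining properties of $\mathcal{R}$ give $\mathcal{R}\langle\pi^\ast f\rangle=0$ and $\mathcal{R}\langle\dot h\rangle=\dot h$, so the first term vanishes and the whole expression collapses to $\Phi\langle\pi^\ast f,\dot h\rangle=-\dot h\,X\langle\pi^\ast f\rangle$. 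This one identity carries both directions of the equivalence. For the ``if'' direction, horizontality of $X$ means $X\langle\pi^\ast f\rangle=\dot f$ by \cref{W horizontal}, and substitution yields $\Phi\langle\pi^\ast f,\dot h\rangle=-\dot f\dot h$, which is exactly the horizontal condition on $\Phi$. For the ``only if'' direction, horizontality of $\Phi$ gives $-\dot h\,X\langle\pi^\ast f\rangle=-\dot f\dot h$ for all $f,h$, and I would cancel the common factor $\dot h$ to recover $X\langle\pi^\ast f\rangle=\dot f$.

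The one point that genuinely needs care — and the only real obstacle — is this final cancellation, since $\dot h$ vanishes on a set and so cannot be divided out as a global scalar field. I would therefore argue pointwise: fix $\underline u\in U$; because $U\subset\TMSlit$ excludes the zero vector we have $\underline u\neq 0$, so there exists some $h\in\Gamma\Lambda^0M$ with $\dot h|_{\underline u}=\underline u\langle h\rangle\neq 0$ (for instance a coordinate function not annihilated by $\underline u$). Evaluating the identity at $\underline u$ with this particular $h$ gives $X\langle\pi^\ast f\rangle|_{\underline u}=\dot f|_{\underline u}$, and since $\underline u$ was arbitrary this holds as an equality of scalar fields on $U$. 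That is precisely the horizontal condition \cref{W horizontal} for $X$, completing the equivalence.
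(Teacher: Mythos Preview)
Your proof is correct and follows essentially the same route as the paper: both expand the pairing $\Phi\langle\pi^\ast f,\dot h\rangle$ using the defining properties of $\mathcal{R}$ to reduce it to $-\dot h\,X\langle\pi^\ast f\rangle$, then read off the equivalence. You are in fact slightly more careful than the paper, which cancels $\dot h$ without comment; your pointwise argument using nonzero $\underline u\in\TMSlit$ is a worthwhile addition.
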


\begin{proof}
Suppose first that $\Phi= \mathcal{R}\wedge X$ for some horizontal $X\in\Gamma TU$. We then have $\mathcal{R}\wedge \Psi=0$ automatically, hence $\Phi$ is radial.
To see that $\Phi$ is horizontal observe that
\begin{equation*}
\begin{split}
\Phi\langle \pi^\ast f,\dot{h}\rangle=& \mathcal{R} \langle \pi^\ast f \rangle X\langle \dot{h} \rangle - \mathcal{R} \langle  \dot{h} \rangle X\langle \pi^\ast f \rangle \\
=& -\dot{f}\dot{h},
\end{split}
\end{equation*}
for any $f,h\in\Gamma\Lambda^0M$. \\

Suppose now that $\Phi$ is horizontal.
By \cref{Lem X wedge Phi} (since $\Phi$ is radial and simple) there exists some $X\in\Gamma TU$ such that
\begin{equation*}
\Phi= \mathcal{R}\wedge X.
\end{equation*}
By the horizontal condition
\begin{equation*}
\begin{split}
-\dot{f}\dot{h}= \Phi\langle \pi^\ast f,\dot{h}\rangle=& \mathcal{R} \langle \pi^\ast f \rangle X\langle \dot{h} \rangle - \mathcal{R} \langle  \dot{h} \rangle X\langle \pi^\ast f \rangle \\
=& -X\langle \pi^\ast f \rangle \dot{h}.
\end{split}
\end{equation*}
Hence $X \langle \pi^\ast f \rangle =\dot{f}$, that is $X$ is horizontal.
\end{proof}

\begin{lemma}\label{Lem X Hor}
Let $X\in\Gamma TU$ and $\Phi\in\Gamma\mathcal{B}^2(U)$ such that $\Phi=\mathcal{R}\wedge X$. If $X$ is horizontal and $\Phi$ is integrable then there exists $\alpha\in\Gamma\Lambda^0U$ such that 
\begin{equation}
[\mathcal{R},X]=X+\alpha\mathcal{R}.
\end{equation}
\end{lemma}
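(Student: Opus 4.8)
The plan is to extract the claimed identity directly from the integrability hypothesis and then pin down the coefficient of $X$ using horizontality. Since $\Phi=\mathcal{R}\wedge X$ is integrable, \cref{Def Integrability} together with \cref{Lem Lin Rel == Integrable} tells us that for the particular representation $\Phi=\mathcal{R}\wedge X$ there exist scalar fields $a,b\in\Gamma\Lambda^0U$ with
\begin{equation*}
[\mathcal{R},X]=a\,\mathcal{R}+b\,X.
\end{equation*}
The entire content of the lemma is then the assertion that $b=1$ identically, after which we simply set $\alpha=a$.

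To fix $b$, I would act both sides on a pulled-back scalar $\pi^\ast h$ for arbitrary $h\in\Gamma\Lambda^0M$ and exploit the defining relations of $\mathcal{R}$, namely $\mathcal{R}\langle\pi^\ast h\rangle=0$ and $\mathcal{R}\langle\dot{h}\rangle=\dot{h}$, together with the horizontality of $X$, $X\langle\pi^\ast h\rangle=\dot{h}$. On the one hand the Lie-bracket definition gives
\begin{equation*}
[\mathcal{R},X]\langle\pi^\ast h\rangle
=\mathcal{R}\langle X\langle\pi^\ast h\rangle\rangle-X\langle\mathcal{R}\langle\pi^\ast h\rangle\rangle
=\mathcal{R}\langle\dot{h}\rangle-X\langle 0\rangle=\dot{h}.
\end{equation*}
On the other hand, substituting the integrability expression and again using $\mathcal{R}\langle\pi^\ast h\rangle=0$ and $X\langle\pi^\ast h\rangle=\dot{h}$ yields $[\mathcal{R},X]\langle\pi^\ast h\rangle=b\,\dot{h}$. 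Equating the two gives $b\,\dot{h}=\dot{h}$ for every $h\in\Gamma\Lambda^0M$.

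The final step is a pointwise cancellation argument. At any point $\underline{u}\in U$ we have $\underline{u}\neq 0$, since $U\subset\TMSlit$ excludes the zero section, so one can choose $h$ with $\dot{h}\vert_{\underline{u}}=\underline{u}\langle h\rangle\neq0$; dividing the relation $b\,\dot{h}=\dot{h}$ by $\dot{h}$ at that point forces $b\vert_{\underline{u}}=1$, and as $\underline{u}$ was arbitrary we conclude $b\equiv 1$. Setting $\alpha=a$ then delivers $[\mathcal{R},X]=X+\alpha\mathcal{R}$. There is no real obstacle in this argument; the only point requiring care is the cancellation of $\dot{h}$, where one must invoke the slit-tangent-bundle structure of $U$ to guarantee that $\dot{h}$ is nonzero somewhere near each point rather than attempting to cancel a scalar that could vanish identically.
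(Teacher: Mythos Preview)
Your proof is correct and follows essentially the same approach as the paper: both use integrability to write $[\mathcal{R},X]=a\mathcal{R}+bX$, act on $\pi^\ast h$, and compare $\dot{h}=b\,\dot{h}$ to force $b=1$. If anything, you are slightly more careful than the paper in justifying the cancellation via the slit-tangent-bundle structure; the paper simply asserts that the comparison ``gives us that $\beta=1$.''
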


\begin{proof}
Since $X$ is horizontal, observe that for any $f\in\Gamma\Lambda^0M$ we have
\begin{equation*}
\begin{split}
[\mathcal{R},X]\langle \pi^\ast f\rangle=& \mathcal{R}\langle X\langle \pi^\ast f\rangle\rangle - X\langle \mathcal{R}\langle \pi^\ast f \rangle \rangle\\
=& \mathcal{R}\langle \dot{f}\rangle =\dot{f}=X\langle \pi^\ast f\rangle.
\end{split}
\end{equation*}
By the integrability of $\Phi$ we also have $[\mathcal{R},X]=\alpha\mathcal{R}+ \beta X $ so that we also have
\begin{equation*}
[\mathcal{R},X]\langle \pi^\ast f\rangle= \alpha\mathcal{R}\langle \pi^\ast f\rangle +\beta X\langle \pi^\ast f \rangle= \beta X\langle \pi^\ast f \rangle.
\end{equation*}
Combining the last two equations gives us that $\beta=1$.
Hence the result holds.
\end{proof}

\subsection{Vlasov Bivectors}\label{Subsect Vlasov Bivectors}

With this we have all the necessary ingredients to define a system in terms of a Vlasov bivector.
There are several equivalent properties which ensure a bivector contains sufficient structure to define a system.
Given a horizontal bivector $\Psi\in\Gamma\mathcal{B}^2_H(U)$, these three properties are, the existence of a Vlasov bivector $W\in\Gamma TU$ such that $\Psi=\mathcal{R}\wedge W$, integrability, and another property we call the radially cubic property.

\begin{definition}[Radially Cubic]\label{Def Radially Cubic}
A \texttt{}horizontal bivector $\Psi$ is \textit{radially cubic} if for any $f,h\in\Gamma\Lambda^0 M$, $\underline{u}\in U$, and $\lambda\neq 0$ we have
\begin{equation}
    \Psi\langle \dot{f},\dot{h} \rangle \vert_{\lambda \underline{u}}= \lambda^3 \Psi\langle \dot{f},\dot{h} \rangle \vert_{\underline{u}} .
\end{equation}
\end{definition}

\def\coord{{\textup{crd}}}

We see below in \cref{Thm Vlasov Bivector} that we can generate a Vlasov field $W$, using a kinematic indicator. A natural choice to use is a "lab time" kinematic indicator $\dot{f}$. However in general there is no $f\in\Gamma\Lambda^0M$ such that $\dot{f}\ne0$ on all of $U$. As an intermediate step, we can always define the following kinematic indicator using the coordinate system $(x^1,\ldots x^n)$ on $M$,
\begin{equation}
    F_\coord= \sum_{\mu=1}^n (\dot{x}^\mu)^2.
\label{F Coord}
\end{equation} 
It is clear this is not a physical kinematic indicator since it depends on the coordinate system. 
It is homogeneous degree 2. Since this is even, one would need the causality indicator to define a kinematic domain, \cref{Eqn E hat}. 

\begin{definition}[Coordinate based Vlasov field]
Let $(x^1,\ldots x^n)$ be a local coordinate system on $M$ and let $\Psi\in\Gamma\mathcal{B}_H^2(U)$, we define the \textit{coordinate based Vlasov field} by
\begin{equation}\label{Eqn W xi}
    W_\coord= \frac{\Psi\langle F_\coord,\bullet \rangle}{2F_\coord}.
\end{equation}
\end{definition}

\begin{lemma}\label{Lem Wf}
Let $\Psi\in\Gamma\mathcal{B}^2_H(U)$ be radially cubic and then $W_\coord$ as given by \cref{Eqn W xi} is a Vlasov field such that 
\begin{equation}
    \Psi=\mathcal{R}\wedge W_\coord.
\end{equation}
Furthermore, given another representation for $\Psi$, say $\Psi=\mathcal{R}\wedge X$ for some $X\in\Gamma TU$, then 
\begin{equation}\label{Eqn Wf to X}
W_\coord =X- \frac{X\langle F_\coord\rangle}{2F_\coord}\mathcal{R}.
\end{equation}
\end{lemma}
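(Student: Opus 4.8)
The plan is to first pin down a concrete representation of $\Psi$, read the formula \cref{Eqn Wf to X} and the identity $\Psi=\mathcal{R}\wedge W_\coord$ straight off it, and then reduce the Vlasov-field conditions to a short homogeneity count, which is the only real work. First I would note that a horizontal bivector is in particular radial and simple, so \cref{Lem X wedge Phi} supplies some $X\in\Gamma TU$ with $\Psi=\mathcal{R}\wedge X$, and \cref{Psi Structure Thm} then forces $X$ to be horizontal. With this representation in hand I would evaluate the defining expression using \cref{Eqn Biv on Scalar},
\[
\Psi\langle F_\coord,\bullet\rangle=(\mathcal{R}\wedge X)\langle F_\coord,\bullet\rangle=\mathcal{R}\langle F_\coord\rangle\,X-X\langle F_\coord\rangle\,\mathcal{R}.
\]
Since $F_\coord$ is homogeneous of degree $2$, Euler's theorem \cref{Eqn Euler Thm} gives $\mathcal{R}\langle F_\coord\rangle=2F_\coord$, and dividing by $2F_\coord$ produces exactly \cref{Eqn Wf to X}. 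As this computation used only $\Psi=\mathcal{R}\wedge X$, it is valid for every such representation, which establishes the final assertion of the lemma.

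Next I would dispatch $\Psi=\mathcal{R}\wedge W_\coord$ and horizontality together, both directly from \cref{Eqn Wf to X}. Wedging that formula with $\mathcal{R}$ and using $\mathcal{R}\wedge\mathcal{R}=0$ annihilates the correction term and leaves $\mathcal{R}\wedge W_\coord=\mathcal{R}\wedge X=\Psi$. For horizontality I would act $W_\coord$ on $\pi^\ast h$: the $\mathcal{R}$ term drops out because $\mathcal{R}\langle\pi^\ast h\rangle=0$, while $X\langle\pi^\ast h\rangle=\dot h$ as $X$ is horizontal, so $W_\coord\langle\pi^\ast h\rangle=\dot h$, which is precisely the horizontal condition \cref{W horizontal}.

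The main obstacle is the radially quadratic condition \cref{Eqn Radially Quadratic}, because the radially cubic hypothesis \cref{Def Radially Cubic} is phrased only for scalar lifts $\dot f,\dot h$, whereas $F_\coord=\sum_\mu(\dot x^\mu)^2$ is not itself a scalar lift. The device I would use is that the bivector pairing is a derivation in each argument, an immediate consequence of its definition together with the Leibniz rule for vector fields on scalars. Writing $\dot x^\mu=\dot{(x^\mu)}$ and applying this derivation property in the first slot gives
\[
\Psi\langle F_\coord,\dot h\rangle=\sum_\mu 2\dot x^\mu\,\Psi\langle \dot x^\mu,\dot h\rangle.
\]
By radial cubicity each factor $\Psi\langle\dot x^\mu,\dot h\rangle$ is homogeneous of degree $3$ and each $\dot x^\mu$ of degree $1$, so the right-hand side is homogeneous of degree $4$. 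Since $W_\coord\langle\dot h\rangle=\Psi\langle F_\coord,\dot h\rangle/(2F_\coord)$ and $F_\coord$ has degree $2$, the quotient is homogeneous of degree $4-2=2$, which is exactly \cref{Eqn Radially Quadratic}. Combined with horizontality, this shows $W_\coord$ is a Vlasov field and completes the proof.
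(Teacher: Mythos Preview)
Your proposal is correct and follows essentially the same route as the paper: obtain a horizontal $X$ with $\Psi=\mathcal{R}\wedge X$ via \cref{Psi Structure Thm}, read off \cref{Eqn Wf to X} from $\Psi\langle F_\coord,\bullet\rangle$ together with Euler's theorem, and then verify the Vlasov-field axioms by expanding $F_\coord=\sum_\mu(\dot x^\mu)^2$ and invoking the radially cubic hypothesis termwise. The only cosmetic difference is that for horizontality you argue from the representation $W_\coord=X-(\cdots)\mathcal{R}$ using $\mathcal{R}\langle\pi^\ast h\rangle=0$, whereas the paper instead computes $\Psi\langle F_\coord,\pi^\ast f\rangle$ directly from the horizontal-bivector condition; both are immediate.
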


\begin{proof}
By \cref{Psi Structure Thm} there exists a horizontal vector field $X\in\Gamma TU$ such that $\Psi=\mathcal{R}\wedge X$.
Then $W_\coord$ and $X$ are related by
\begin{equation*}
\begin{split}
    W_\coord=& \frac{1}{2 F_\coord} \left( \mathcal{R}\langle F_\coord\rangle X - X\langle F_\coord\rangle \mathcal{R} \right)\\
    =& X-\frac{X\langle F_\coord\rangle}{2 F_\coord}\mathcal{R}.
\end{split}
\end{equation*}

To see that $\Psi=\mathcal{R}\wedge W_\coord$ first suppose that $\Psi=\mathcal{R}\wedge X$ for some $X\in\Gamma TU$ then observe that
\begin{equation*}
\begin{split}
\mathcal{R}\wedge \frac{\Psi\langle F_\coord,\bullet\rangle}{2 F_\coord}=& \mathcal{R}\wedge \bigg{(} \frac{\mathcal{R}\langle F_\coord\rangle}{2 F_\coord}X- \frac{X\langle F_\coord\rangle}{2 F_\coord}\mathcal{R} \bigg{)}\\
=&\mathcal{R}\wedge X =\Psi.
\end{split}
\end{equation*}
Define local coordinates $(x^\mu,\dot{x}^\mu)$.
To see that $W_\coord$ is horizontal observe that for any $f\in\Gamma\Lambda^0M$ we have
\begin{equation*}
    \begin{split}
        W_\coord\langle \pi^\ast f\rangle =& \frac{\Psi\langle F_\coord,\pi^\ast f \rangle}{2F_\coord}
        =\sum_\mu \frac{\Psi\langle (\dot{x}^\mu)^2,\pi^\ast f \rangle}{2F_\coord}\\
        =&\sum_\mu \frac{\dot{x}^\mu \Psi\langle \dot{x}^\mu,\pi^\ast f \rangle}{F_\coord}
        =\sum_\mu \frac{(\dot{x}^\mu)^2\dot{f}}{F_\coord}=\dot{f}.
    \end{split}
\end{equation*}
To see that $W_\coord$ is radially quadratic observe that for any $\underline{u}\in U,\; \lambda\in\mathbb{R},\; f\in\Gamma\Lambda^0M$,
\begin{equation*}
    \begin{split}
        W_\coord\vert_{\lambda\underline{u}} \langle \dot{f}\rangle  =& \frac{\Psi\vert_{\lambda\underline{u}}\langle F_\coord,\dot{f} \rangle}{2F_\coord\vert_{\lambda\underline{u}}}\\
        =&\sum_\mu \left( \frac{\dot{x}^\mu}{F_\coord} \right) \bigg{\vert}_{\lambda\underline{u}} \Psi\vert_{\lambda\underline{u}}\langle \dot{x}^\mu,\dot{f} \rangle\\
        =&\frac{\lambda}{\lambda^2} \sum_\mu \left( \frac{\dot{x}^\mu}{F_\coord} \right) \bigg{\vert}_{\underline{u}} \lambda^3 \Psi\vert_{\underline{u}}\langle \dot{x}^\mu,\dot{f} \rangle\\
        =&\lambda^2 \frac{\Psi\vert_{\underline{u}}\langle (\dot{x}^\mu)^2,\dot{f}\rangle}{2F_\coord\vert_{\underline{u}}}
        =\lambda^2W_\coord\vert_{\underline{u}}\langle \dot{f}\rangle.
    \end{split}
\end{equation*}
Hence $W_\coord$ is a Vlasov field.
\end{proof}

\begin{theorem}\label{Thm Vlasov Bivector}
Let $\Psi\in\Gamma\mathcal{B}^2_H(U)$ be a horizontal bivector. The following properties are equivalent:
\begin{enumerate}
\item[\textnormal{(i)}] There exists a Vlasov field $W\in\Gamma TU$ such that $\Psi=\mathcal{R}\wedge W$.
\item[\textnormal{(ii)}] $\Psi$ is integrable (\cref{Def Integrability}).
\item[\textnormal{(iii)}] $\Psi$ is radially cubic (\cref{Def Radially Cubic}).
\end{enumerate}
\end{theorem}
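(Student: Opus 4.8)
The plan is to prove the three-way equivalence as a single cycle of implications, $(\textnormal{i})\Rightarrow(\textnormal{ii})\Rightarrow(\textnormal{iii})\Rightarrow(\textnormal{i})$, leaning on the structural lemmas already in place. The implication $(\textnormal{iii})\Rightarrow(\textnormal{i})$ needs no new work at all: it is exactly \cref{Lem Wf}, which manufactures the Vlasov field $W_\coord$ of \cref{Eqn W xi} satisfying $\Psi=\mathcal{R}\wedge W_\coord$ whenever $\Psi$ is radially cubic. So the real content of the theorem sits in the first two arrows.

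For $(\textnormal{i})\Rightarrow(\textnormal{ii})$ I would start from $\Psi=\mathcal{R}\wedge W$ with $W$ a Vlasov field. Since $W$ is horizontal and radially quadratic, \cref{Lem VQ iff Com} gives $[\mathcal{R},W]=W$, which is a relation of the form $[\mathcal{R},W]=\alpha\mathcal{R}+\beta W$ with $\alpha=0$, $\beta=1$; by \cref{Def Integrability} this makes $\Psi=\mathcal{R}\wedge W$ integrable.

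The bulk of the argument is $(\textnormal{ii})\Rightarrow(\textnormal{iii})$. Using \cref{Psi Structure Thm} I would write $\Psi=\mathcal{R}\wedge X$ for a horizontal $X\in\Gamma TU$; integrability then upgrades, via \cref{Lem X Hor}, to $[\mathcal{R},X]=X+\alpha\mathcal{R}$ for some $\alpha\in\Gamma\Lambda^0U$. For $f,h\in\Gamma\Lambda^0M$, using $\mathcal{R}\langle\dot f\rangle=\dot f$, one expands
\begin{equation*}
\Psi\langle\dot{f},\dot{h}\rangle
=\mathcal{R}\langle\dot{f}\rangle\,X\langle\dot{h}\rangle-\mathcal{R}\langle\dot{h}\rangle\,X\langle\dot{f}\rangle
=\dot{f}\,X\langle\dot{h}\rangle-\dot{h}\,X\langle\dot{f}\rangle.
\end{equation*}
Applying $\mathcal{R}$ and rewriting $\mathcal{R}\langle X\langle\dot h\rangle\rangle=[\mathcal{R},X]\langle\dot h\rangle+X\langle\dot h\rangle=2X\langle\dot h\rangle+\alpha\dot h$ (and symmetrically for $\dot f$), the two $\alpha\,\dot f\dot h$ contributions cancel and one is left with $\mathcal{R}\langle\Psi\langle\dot f,\dot h\rangle\rangle=3\,\Psi\langle\dot f,\dot h\rangle$. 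By Euler's theorem \cref{Eqn Euler Thm} this says precisely that $\Psi\langle\dot f,\dot h\rangle$ is $3$--homogeneous, i.e.\ $\Psi$ is radially cubic in the sense of \cref{Def Radially Cubic}.

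The delicate point, and the step I expect to be the crux, is exactly that cancellation. Because $X$ is merely horizontal and \emph{not} radially quadratic, $X\langle\dot h\rangle$ need not be $2$--homogeneous and the inhomogeneous term $\alpha\mathcal{R}$ genuinely appears in the bracket. What rescues $(\textnormal{ii})\Rightarrow(\textnormal{iii})$ is that antisymmetrising $\dot f\,X\langle\dot h\rangle-\dot h\,X\langle\dot f\rangle$ annihilates precisely the $\alpha$--dependent part, so the degree-$3$ scaling survives independently of which representative $X$ is chosen; this representative-independence is itself guaranteed by \cref{Lem Lin Rel == Integrable}. Once I am confident the $\alpha$ terms drop out cleanly, the rest is routine bookkeeping.
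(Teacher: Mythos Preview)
Your argument is correct. The implications $(\textnormal{i})\Rightarrow(\textnormal{ii})$ and $(\textnormal{iii})\Rightarrow(\textnormal{i})$ coincide exactly with the paper's, but your $(\textnormal{ii})\Rightarrow(\textnormal{iii})$ is a genuinely different link from the paper's $(\textnormal{ii})\Rightarrow(\textnormal{i})$. The paper instead takes the coordinate-based candidate $W_\coord$ from \cref{Eqn W xi}, uses the parts of \cref{Lem Wf} that do not need radial cubicity to get $\Psi=\mathcal{R}\wedge W_\coord$ with $W_\coord$ horizontal, and then verifies $[\mathcal{R},W_\coord]=W_\coord$ by a direct bracket computation (with the $\alpha$ from \cref{Lem X Hor} again cancelling), invoking \cref{Lem VQ iff Com} to conclude $W_\coord$ is radially quadratic and hence Vlasov. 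Your route is arguably cleaner: you never touch the auxiliary $F_\coord$ or $W_\coord$, and the antisymmetric cancellation of the $\alpha\,\dot f\dot h$ terms in $\mathcal{R}\langle\Psi\langle\dot f,\dot h\rangle\rangle$ is transparent. The paper's route, on the other hand, exhibits an explicit Vlasov field directly from integrability without passing through the scalar homogeneity statement, which is closer in spirit to how $W_F$ is produced later from a kinematic indicator.
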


\begin{proof}
Suppose first that there exists a Vlasov field $W$ such that $\Psi= \mathcal{R}\wedge W$.
To see that $\Psi$ is radially cubic observe that, for some $f,h\in\Gamma\Lambda^0M$, any $\underline{u}\in U$ and $\lambda\neq 0$, it follows that
\begin{equation*}
\begin{split}
\Psi\langle \dot{f},\dot{h} \rangle \vert_{\lambda \underline{u}}=& \mathcal{R} \langle \dot{f} \rangle W\langle \dot{h} \rangle \vert_{\lambda \underline{u}}- \mathcal{R} \langle \dot{h} \rangle W\langle \dot{f} \rangle \vert_{\lambda \underline{u}}\\
=& \lambda \mathcal{R} \langle \dot{f} \rangle \lambda^2 W\langle \dot{h} \rangle \vert_{\underline{u}} -\lambda \mathcal{R} \langle \dot{h} \rangle \lambda^2 W_i\langle \dot{f} \rangle \vert_{\underline{u}}\\
=& \lambda^3 \Psi\langle \dot{f},\dot{h} \rangle \vert_{\underline{u}}.
\end{split}
\end{equation*}

Also, since $W$ is radially quadratic we have $[\mathcal{R},W]=W$ by \cref{Lem VQ iff Com}.
Hence $\Psi$ is integrable.
That is, property (i) implies properties (ii) and (iii).

By \cref{Lem Wf}, if $\Psi$ is radially cubic then $W_\coord$ as defined by \cref{Eqn W xi} is a Vlasov field and $\Psi=\mathcal{R}\wedge W_\coord$. Hence property (iii) implies (i).

Suppose now that $\Psi$ is integrable.
Let $W_\coord$ be given by \cref{Eqn W xi}.
By \cref{Lem Wf}, $\Psi=\mathcal{R}\wedge W_\coord$ and $W_\coord$ is horizontal.
It remains to show that $W_\coord$ is radially quadratic.
By \cref{Psi Structure Thm} there exists a horizontal vector field $X\in\Gamma TU$ such that $\Psi=\mathcal{R}\wedge X$ and by \cref{Lem Wf}, $W_\coord$ and $X$ are related by \cref{Eqn Wf to X}.
The integrability of $\Psi$ tells us $[\mathcal{R},X]=\alpha \mathcal{R}+ \beta X$ for some $\alpha,\beta\in\Gamma\Lambda^0 U$.
It follows that $\beta=1$ form \cref{Lem X Hor}.
We then have that
\begin{equation*}
\begin{split}
[\mathcal{R},W_\coord ]=& [\mathcal{R},X]- \left[  \mathcal{R},\frac{X\langle \dot{f}\rangle}{\dot{f}} \mathcal{R} \right]\\
=& [ \mathcal{R},X] -\mathcal{R} \bigg{\langle} \frac{X\langle \dot{f}\rangle}{\dot{f}} \bigg{\rangle} \mathcal{R}\\
=& X+ \alpha\mathcal{R} -\left( \frac{\mathcal{R}\langle X\langle \dot{f}\rangle \rangle}{\dot{f}}- \frac{X\langle \dot{f}\rangle}{\dot{f}} \right) \mathcal{R}\\
=& X+\alpha \mathcal{R} -\frac{1}{\dot{f}} [\mathcal{R},X]\langle \dot{f}\rangle \mathcal{R}\\
=& X+\alpha \mathcal{R} -\frac{1}{\dot{f}} (X\langle \dot{f}\rangle +\alpha\dot{f})\mathcal{R}\\
=&X-\frac{X\langle \dot{f}\rangle}{\dot{f}}\mathcal{R} = W_\coord .
\end{split}
\end{equation*}
Hence $W_\coord$ is radially quadratic by \cref{Lem VQ iff Com}.
It follows that $W_\coord$ is a Vlasov field and property (ii) implies property (i).
\end{proof}

\begin{definition}[Vlasov Bivectors]\label{Def Bivector System}
A horizontal bivector $\Psi\in\Gamma\mathcal{B}^2_H(U)$ defines a \textit{Vlasov bivector} if it satisfies one of the following equivalent properties:
\begin{enumerate}
\item There exists a Vlasov field $W\in\Gamma TU$ such that $\Psi=\mathcal{R}\wedge W$.
\item $\Psi$ is integrable (\cref{Def Integrability}).
\item $\Psi$ is radially cubic (\cref{Def Radially Cubic}).
\end{enumerate}
The space of Vlasov vectors over $U$ is denoted by $\Psi\in\Gamma\mathcal{B}^2_V(U)$.
\end{definition}

\begin{lemma}
    Let $\Psi\in \Gamma\mathcal{B}_V^2(U)$, let $E$ have a kinematic indicator $F$ of order $k$, and let
\begin{equation}
{W}_F=\frac{\Psi \left \langle {F},\bullet \right\rangle}{k F},
\label{Eqn W from Psi}
\end{equation}
Then $W_F$ is a Vlasov field which is compatible with $F$ and
\begin{equation}
    \Psi = \mathcal{R}\wedge W_F
\end{equation}
\end{lemma}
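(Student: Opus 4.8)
The plan is to reduce the statement to the already-established transformation results by first extracting a Vlasov field from $\Psi$. Since $\Psi\in\Gamma\mathcal{B}^2_V(U)$, \cref{Def Bivector System} guarantees a Vlasov field $W\in\Gamma TU$ with $\Psi=\mathcal{R}\wedge W$. I would begin by expanding the numerator of \cref{Eqn W from Psi} with the bivector-on-scalar pairing \cref{Eqn Biv on Scalar},
\[
\Psi\langle F,\bullet\rangle=(\mathcal{R}\wedge W)\langle F,\bullet\rangle=\mathcal{R}\langle F\rangle\,W-W\langle F\rangle\,\mathcal{R}.
\]
Because $F$ is a \Fkin of order $k$, it is $k$--homogeneous, so Euler's theorem \cref{Eqn Euler Thm} gives $\mathcal{R}\langle F\rangle=kF$. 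Dividing by $kF$ then produces the explicit form
\[
W_F=W-\frac{W\langle F\rangle}{kF}\,\mathcal{R}.
\]

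Next I would identify this with the output of \cref{Thm Magic Formula}. Since $\mathcal{R}\langle F\rangle=kF$, we have $\tfrac{W\langle F\rangle}{kF}=\tfrac{W\langle F\rangle}{\mathcal{R}\langle F\rangle}$, so $W_F$ is exactly the field $\hat W$ given by \cref{Magic Formula} with $\hat F=F$. \cref{Thm Magic Formula} therefore already certifies that $W_F$ is a Vlasov field and that it is compatible with $F$, i.e. $W_F\langle F\rangle=0$. If instead one wants a self-contained argument, compatibility is immediate from the explicit form, $W_F\langle F\rangle=W\langle F\rangle-\tfrac{W\langle F\rangle}{kF}\,\mathcal{R}\langle F\rangle=0$; horizontality follows from $\mathcal{R}\langle\pi^\ast h\rangle=0$ and the horizontality of $W$; and radial quadraticity follows from \cref{Lem VQ iff Com} once $[\mathcal{R},W_F]=W_F$ is verified. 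The final identity is the cheapest step: since $\mathcal{R}\wedge\mathcal{R}=0$,
\[
\mathcal{R}\wedge W_F=\mathcal{R}\wedge W-\frac{W\langle F\rangle}{kF}\,\mathcal{R}\wedge\mathcal{R}=\mathcal{R}\wedge W=\Psi.
\]

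The only place demanding genuine care is the radial quadraticity, which I expect to be the main obstacle if one does not simply cite \cref{Thm Magic Formula}. The key input is that $W\langle F\rangle$ is $(k{+}1)$--homogeneous by \cref{Lem Homogeneity of W<G>}, so the correction coefficient $\tfrac{W\langle F\rangle}{kF}$ is $1$--homogeneous; this is precisely the hypothesis ensuring that adding a multiple of $\mathcal{R}$ preserves the radially quadratic property, exactly as used in \cref{Lem Trajectories} and in the proof of \cref{Thm Magic Formula}. Everything else is routine bookkeeping with Euler's theorem and the antisymmetry of the wedge.
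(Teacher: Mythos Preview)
Your argument is correct and takes a somewhat different route from the paper. The paper uses the coordinate-based Vlasov field $W_{\textup{crd}}$ from \cref{Lem Wf} as its reference, appeals to that lemma for horizontality and the identity $\Psi=\mathcal{R}\wedge W_F$, and then verifies radial quadraticity by an explicit computation expanding $W_F\vert_{\lambda\underline{u}}\langle\dot f\rangle$ in terms of $W_{\textup{crd}}$. You instead extract an arbitrary Vlasov field $W$ with $\Psi=\mathcal{R}\wedge W$ from \cref{Def Bivector System}, rewrite $W_F=W-\tfrac{W\langle F\rangle}{\mathcal{R}\langle F\rangle}\mathcal{R}$, and invoke \cref{Thm Magic Formula}. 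Your route is more conceptual and recycles the transformation theorem neatly; the paper's is more self-contained and avoids citing a result whose hypotheses are slightly mismatched.

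On that last point: \cref{Thm Magic Formula} as stated assumes $W$ is already compatible with some kinematic domain, which your abstractly chosen $W$ need not be (and if $U$ is not time orientable there may be no kinematic domain at all). You correctly note that the \emph{proof} of \cref{Thm Magic Formula} never uses that hypothesis for the conclusions you need, and you sketch the direct verification via the $1$--homogeneity of $W\langle F\rangle/(kF)$ from \cref{Lem Homogeneity of W<G>}. That suffices; just be explicit that you are reusing the proof rather than the theorem statement, or carry out the $[\mathcal{R},W_F]=W_F$ check in full.
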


\begin{proof}
$\Psi=\mathcal{R}\wedge W_F$ and $W_F$ is horizontal by the same logic as \cref{Lem Wf}.
Furthermore, by \cref{Lem Wf}, $\Psi=\mathcal{R}\wedge W_\coord$ where $W_\coord$ is given by \cref{Eqn W xi}.
To see that $W_F$ is radially quadratic observe that
\begin{equation*}
    \begin{split}
        W_F\vert_{\lambda\underline{u}} \langle \dot{f}\rangle =& \left( \frac{\Psi\langle F,\dot{f}\rangle}{kF} \right) \bigg{\vert}_{\lambda\underline{u}}\\
        =& \left( \frac{\mathcal{R}\langle F\rangle W_\coord\langle \dot{f}\rangle - \mathcal{R}\langle \dot{f}\rangle W_\coord}{kF} \right) \bigg{\vert}_{\lambda \underline{u}}\\
        =& \left( W_\coord\langle\dot{f}\rangle -\frac{\dot{f}}{kF} W_\coord\langle F\rangle  \right) \bigg{\vert}_{\lambda\underline{u}}\\
        =& \left(\lambda^2 W_\coord\langle\dot{f}\rangle -\frac{\lambda \dot{f}}{\lambda^k kF} \lambda^{k+1} W_\coord\langle F\rangle  \right) \bigg{\vert}_{\underline{u}}\\
        =& \lambda^2 \left(W_\coord\langle\dot{f}\rangle -\frac{ W_\coord\langle F\rangle }{ kF} \mathcal{R}\langle \dot{f} \rangle  \right) \bigg{\vert}_{\underline{u}}\\
        =&\lambda^2 W_F\vert_{\underline{u}}\langle\dot{f}\rangle.
    \end{split}
\end{equation*}

$F$ is compatible with $W_F$, i.e. $W_F\langle F\rangle=0$, since $\Psi\langle F,F\rangle =0$.
It follows that
\begin{equation*}
    \begin{split}
        \frac{\Psi\langle F,\bullet \rangle}{kF}=&  \frac{\Psi\langle F,\bullet \rangle}{\mathcal{R}\langle F\rangle}= \frac{\mathcal{R}\langle F\rangle W_F- W_F\langle F\rangle\mathcal{R}}{\mathcal{R}\langle F\rangle}= W_F.
    \end{split}
\end{equation*}
\end{proof}

With \cref{Lem Lin Rel Lemma} and \cref{Lem Lin Rel == Integrable} combined we may note that Vlasov bivectors constructed from projectively related Vlasov fields represents the same object (that is, if there exists a 1--homogeneous function $k\in\Gamma\Lambda^0 U$ such that $\hat{W}=W+k\mathcal{R}$, then $\Psi=\mathcal{R}\wedge W= \mathcal{R} \wedge \hat{W}$). In figure \cref{Geometric Bivector}, the foliations generated by $\mathcal{R},W$ are the same of those generated by $\mathcal{R},\hat{W}$.
We can use the Vlasov bivector to derive the transformation formula \cref{Magic Formula}.
\begin{lemma}\label{Lem MAgic Fromula From Psi}
    If $\Psi=\mathcal{R}\wedge W= \mathcal{R} \wedge \hat{W}$ and $\hat{W}$ is compatible with kinematic indicators $\hat{F}$ then \cref{Magic Formula} holds. 
\end{lemma}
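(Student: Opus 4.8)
The plan is to reduce the equality of the two representations of $\Psi$ to a rank-one relation between $W$ and $\hat{W}$, and then to pin down the single free scalar by imposing compatibility of $\hat{W}$ with $\hat{F}$. This reverses the direction of the remark preceding this lemma, which observes that $\hat{W}=W+k\mathcal{R}$ implies $\mathcal{R}\wedge W=\mathcal{R}\wedge\hat{W}$; here the bivector equality is the hypothesis and the relation between the fields is the conclusion.

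First I would rewrite the hypothesis $\mathcal{R}\wedge W=\mathcal{R}\wedge\hat{W}$ as $\mathcal{R}\wedge(\hat{W}-W)=0$. Since $\mathcal{R}$ is nowhere vanishing on $U$, this forces $\hat{W}-W$ to be pointwise proportional to $\mathcal{R}$: working in a local chart in which $\mathcal{R}=\partial_0$ (exactly as in the proof of \cref{Lem X wedge Phi}) and writing $\hat{W}-W=V^0\partial_0+V^a\partial_a$, the vanishing of $\mathcal{R}\wedge(\hat{W}-W)=V^a\,\partial_0\wedge\partial_a$ forces every $V^a=0$, so $\hat{W}-W=V^0\mathcal{R}$. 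Hence there is a scalar field $k\in\Gamma\Lambda^0U$ with
\begin{equation*}
\hat{W}=W+k\mathcal{R}.
\end{equation*}

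Next I would determine $k$ by evaluating on $\hat{F}$. Acting both sides on $\hat{F}$ gives $\hat{W}\langle\hat{F}\rangle=W\langle\hat{F}\rangle+k\,\mathcal{R}\langle\hat{F}\rangle$, and compatibility of $\hat{W}$ with $\hat{F}$ means $\hat{W}\langle\hat{F}\rangle=0$. Here I would record that $\mathcal{R}\langle\hat{F}\rangle\neq0$: by Euler's theorem \cref{Eqn Euler Thm}, $\mathcal{R}\langle\hat{F}\rangle$ equals the homogeneity degree of $\hat{F}$ times $\hat{F}$, the degree is nonzero and $\hat{F}$ is non-vanishing by the definition of a \Fkin. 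Dividing yields $k=-W\langle\hat{F}\rangle/\mathcal{R}\langle\hat{F}\rangle$, and substituting back into $\hat{W}=W+k\mathcal{R}$ gives exactly \cref{Magic Formula}.

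The argument is short, and the only real subtlety is the first step: extracting a global smooth scalar $k$ from $\mathcal{R}\wedge(\hat{W}-W)=0$. Pointwise linear dependence is immediate from $\mathcal{R}\neq0$, and smoothness of $k$ follows from the coordinate computation above, since $k=V^0$ is a smooth component of a smooth vector field. A worthwhile consistency check, though not needed for the statement, is that the resulting $k$ is $1$--homogeneous: by \cref{Lem Homogeneity of W<G>} the numerator $W\langle\hat{F}\rangle$ is homogeneous of degree one higher than the denominator $\mathcal{R}\langle\hat{F}\rangle$, so their ratio has degree one, consistent with $\hat{W}=W+k\mathcal{R}$ being a genuine Vlasov field.
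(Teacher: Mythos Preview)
Your proof is correct and follows essentially the same approach as the paper's: both exploit the two representations of $\Psi$ together with $\hat{W}\langle\hat{F}\rangle=0$ to pin down $\hat{W}$. The only cosmetic difference is that the paper packages the whole computation in the bivector pairing $\Psi\langle\hat{F},\bullet\rangle=\mathcal{R}\langle\hat{F}\rangle W-W\langle\hat{F}\rangle\mathcal{R}=\mathcal{R}\langle\hat{F}\rangle\hat{W}$ and divides, whereas you first extract $\hat{W}=W+k\mathcal{R}$ and then solve for $k$.
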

\begin{proof}
Consider the action of $\Psi$ on the scalar field $\hat{F}$:
\begin{equation*}
    \begin{split}
        \Psi\langle \hat{F},\bullet \rangle =& \mathcal{R}\langle \hat{F} \rangle W- W\langle \hat{F}\rangle \mathcal{R}\\
        =& \mathcal{R}\langle \hat{F} \rangle \hat{W}- \hat{W}\langle \hat{F}\rangle \mathcal{R}= \mathcal{R}\langle \hat{F} \rangle \hat{W}.
    \end{split}
\end{equation*}
That is,
\begin{equation*}
     \hat{W}= W- \frac{W\langle \hat{F}\rangle}{\mathcal{R}\langle \hat{F}\rangle}\mathcal{R}.
\end{equation*}
\end{proof}

Here we show that the Vlasov Bivector $\Psi$, knit together as leaves, as depicted in \cref{Geometric Bivector}. Since the leaves are tangent to $\mathcal{R}$, then they must open out like a book. 
\begin{theorem}
    Given a Vlasov bivector $\Psi$, then for each $\underline{u}\in U$ there exists a 2--dimensions surface $K\subset U$ such that $\underline{u}\in U$ and $\Psi$ is a tangent to $K$. 
\end{theorem}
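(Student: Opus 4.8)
The plan is to recognise this statement as an instance of the Frobenius integrability theorem, applied to the rank-2 distribution spanned by a representation of $\Psi$. By the definition of a Vlasov bivector (\cref{Def Bivector System}), $\Psi$ is integrable, so by \cref{Psi Structure Thm} we may fix the representation $\Psi = \mathcal{R}\wedge W$ with $W$ a Vlasov field, in which case \cref{Lem VQ iff Com} supplies the sharp relation $[\mathcal{R},W] = W$. This single identity, which is exactly of the form $\alpha\mathcal{R} + \beta W$ demanded by \cref{Def Integrability}, will drive the whole argument.

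First I would establish that $\Psi$ determines a genuine rank-2 distribution $\mathcal{D}$ on $U$, with $\mathcal{D}|_{\underline{u}} = \mathrm{span}\{\mathcal{R}|_{\underline{u}}, W|_{\underline{u}}\}$. Since $\Psi = \mathcal{R}\wedge W$ is a nonzero simple bivector, $\mathcal{R}$ and $W$ must be pointwise linearly independent: $\mathcal{R}$ is purely radial ($\mathcal{R}\langle\pi^\ast h\rangle = 0$) whereas $W$ is horizontal ($W\langle\pi^\ast h\rangle = \dot h$), so no nontrivial combination can vanish at any $\underline{u}\in U$. Hence $\mathcal{D}$ is a smoothly varying 2-plane field of \emph{constant} rank, which is what allows Frobenius to apply globally on $U$ rather than merely generically.

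Next I would verify that $\mathcal{D}$ is involutive. Any two sections have the form $Z_1 = a\mathcal{R} + bW$ and $Z_2 = c\mathcal{R} + dW$ with $a,b,c,d\in\Gamma\Lambda^0 U$; expanding $[Z_1,Z_2]$ with the Leibniz rule for the Lie bracket produces only a coefficient times $[\mathcal{R},W] = W$, together with terms proportional to $\mathcal{R}$ and $W$ arising from the derivatives of the coefficients. Every resulting term therefore lies in $\mathcal{D}$, so $\mathcal{D}$ is closed under the Lie bracket. In other words the lone integrability identity $[\mathcal{R},W]=W$ upgrades to involutivity of the entire distribution.

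Finally I would invoke the Frobenius theorem: an involutive distribution of constant rank $2$ on $U$ admits, through each point $\underline{u}$, a $2$--dimensional integral submanifold $K\subset U$ whose tangent space at every point coincides with $\mathcal{D}$. Both $\mathcal{R}$ and $W$ are then tangent to $K$ in the sense of \cref{Def Tangent}, so by \cref{Def Tangent Bivector} the bivector $\Psi = \mathcal{R}\wedge W$ is tangent to $K$, as required. I expect the involutivity check to be the only substantive step; the constant-rank point (linear independence of $\mathcal{R}$ and $W$) is the subtlety that must be stated carefully, after which the existence of the leaf $K$ and the tangency of $\Psi$ follow immediately from Frobenius and the definitions.
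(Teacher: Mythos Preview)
Your proposal is correct and follows essentially the same route as the paper: write $\Psi=\mathcal{R}\wedge W$ with $W$ a Vlasov field, use the involution $[\mathcal{R},W]=W$ from \cref{Lem VQ iff Com}, and invoke Frobenius. The paper's proof is a two-line sketch of exactly this argument; your version is more careful in explicitly verifying the constant-rank condition (pointwise linear independence of $\mathcal{R}$ and $W$) before applying Frobenius, which the paper leaves implicit.
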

\begin{proof}
    From theorem \cref{Thm Vlasov Bivector} we can write $\Psi=\mathcal{R}\wedge W$ such that $\mathcal{R}$ and $W$ are in involution, \cref{Involution}. The result now follows directly from Frobenius Theorem, which can be found in \cite{abraham_foundations_2008}.
\end{proof}


\section{Particle Density \texorpdfstring{$(2n{-}2)$}{TEXT}--forms}\label{Sect Particle Density}

In this section we introduce the particle density $(2n{-}2)$--form on $U$, $\theta\in\Gamma\Lambda^{2n-2}U$. This may be depicted pictorially in \cref{fig:intro Leaves} and \cref{Geometric Bivector}.
It is the generalisation of the particle density form $\theta_E\in\Gamma\Lambda^{2n-2}E$. It is subject to the transport equations on $U$. 
This reformulation of the transport equations has the same advantages as mentioned in the introduction.

A particle density form on $U$ must satisfy certain conditions recover a particle density on $E$(and vice versa).
These are discussed in \cref{Subsec PDF on U into E}.
Recall the Null condition \cref{Def Null Condition}, which is needed to formulate the transport equations on $U$.
The null condition can be considered a generalisation of the $i_{W_E}\theta_E=0$ equation.

\subsection{Transport Equations on \texorpdfstring{$U$}{TEXT}}\label{Subsect Transport Equations on U}

\begin{definition}[Transport Equations on $U$]\label{Def Transport Equations on U}
Given a particle density form $\theta\in\Gamma\Lambda^{2n-2}U$ and a Vlasov bivector $\Psi\in\Gamma\mathcal{B}^2_V(U)$, the \textit{transport equations on} $U$ are given by
\begin{equation}\label{PF Transport Equations}
\mbox{Null}(\Psi,\theta)\;  \mbox{ holds and } \; d\theta=0,
\end{equation}
where the null condition is given by \cref{Def Null Condition}.
\end{definition}

Notice that if $\mbox{Null}(\Psi,\theta)$ holds and $\Psi=\mathcal{R}\wedge W$ then both $W$ and $\mathcal{R}$ are tangent to the form manifolds of $\theta$.
That is, $i_W\theta=0$ and $i_\mathcal{R}\theta=0$. 
This is consistent with the visualisation in \cref{Geometric Bivector}.
We can identify the form manifolds of $\theta$ with the leaves of a foliation generated by a Vlasov bivector.
Since $\theta$ is a closed form, the form manifolds associated with it are smooth surfaces.
Furthermore, the velocity density profile of the particle distribution is reflected in these form manifolds: the closer together the surfaces, the greater the local velocity density of the particles (and vice versa). 

\begin{definition}[Populated Systems on $U$]\label{Def Populate PF System}
Given a Vlasov bivector $\Psi$ and a particle density $(2n{-}2)$--form $\theta\in\Gamma\Lambda^{2n-2} U$ which satisfies the transport equations on $U$ \cref{PF Transport Equations}, we define the pair $(\Psi,\theta)$ to be a populated system on $U$.
\end{definition}

As stated in the introduction, defining a populated system does not require a time orientation.
Since for any given radial $\{\lambda\underline{u}\in U, \lambda>0\}$ is disconnected from $\{\lambda\underline{u}\in U, \lambda<0\}$ there is no relationship between $\theta\vert_{\underline{u}}$ and $\theta\vert_{-\underline{u}}$. 
For example, in \cref{Vlasov Bivector} the particle density form can be considered nonzero on $U^+$ and zero on $U^-$.

\subsection{Relating the particle density on \texorpdfstring{$U$}{TEXT} with the particle density on \texorpdfstring{$E$}{TEXT}}\label{Subsec PDF on U into E}

For this subsection we assume that $U$ is time orientable and that there is a kinematic domain $E$. We relate the particle densities $\theta\in\Gamma\Lambda^{n-2}U$ and $\theta_E\in\Gamma\Lambda^{n-2}E$. Furthermore we assume that $\theta$ only contains particles which lie on $U^+$.

\begin{definition}[Future Pointing Particle Density]
    A particle density form $\theta\in\Gamma\Lambda^{(2n-2)}U$ is called future pointing if $U$ is time orientable and
    \begin{equation}\label{Eqn Future Pointing 1}
        \theta\vert_{U^-}=0.
    \end{equation}
    For future pointing particle densities we let the restriction $\theta^+\in\Gamma\Lambda^{2n-2}U^+$ be
    \begin{equation}\label{Eqn Future Pointing 2}
        \theta^+ = \theta\vert_{U^+}.
    \end{equation}    
    A populated system $(\Psi,\theta^+)$ formed from a future pointing particle density is called a future pointing populated system.
\end{definition}

\begin{lemma}\label{Lem PF System to P System}
Let $(\Psi,\theta^+)$ define a future pointing populated system on $U$, satisfying the transport equations.
Given a kinematic domain $E\subset U^+$ with \Fkin $F$, let $W_E$ be given by \cref{Eqn Vlasov Pushforward} where $W$ is given by \cref{Eqn W from Psi}, and let $\theta_E\in\Gamma\Lambda^{2n-2}E$ be given by
\begin{equation}\label{Eqn Theta E}
\theta_E=\Sigma_E^\ast\theta^+.    
\end{equation}
Then $W_E$ and $\theta_E$ satisfy the transport equations on $E$.
\end{lemma}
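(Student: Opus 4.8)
The plan is to verify directly the two transport equations on $E$ from \cref{Def Transport Equations E}, namely $d\theta_E=0$ and $i_{W_E}\theta_E=0$, by pulling back through the inclusion $\Sigma_E\colon E\hookrightarrow U$ the corresponding facts that already hold on $U$. Both halves are naturality arguments, so almost all of the work is already done by the hypotheses.

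First I would dispatch the closure. Since $\theta_E=\Sigma_E^\ast\theta^+$ by \cref{Eqn Theta E} and pullback commutes with the exterior derivative, $d\theta_E=\Sigma_E^\ast(d\theta^+)$. Now $\theta^+$ is the restriction of $\theta$ to the open set $U^+$, so $d\theta^+=(d\theta)\vert_{U^+}$, which vanishes because $(\Psi,\theta)$ satisfies the transport equations on $U$, \cref{PF Transport Equations}, and in particular $d\theta=0$. Hence $d\theta_E=\Sigma_E^\ast 0=0$.

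Second, for the contraction I would invoke the tangency identity of \cref{Def Tangent}. By hypothesis $W_E$ is induced by $W$ via \cref{Eqn Vlasov Pushforward}, so $W$ is tangent to $E$ with induced field $W_E$, and therefore $\Sigma_E^\ast(i_W\theta^+)=i_{W_E}(\Sigma_E^\ast\theta^+)=i_{W_E}\theta_E$. It then remains to show $i_W\theta^+=0$, and this is exactly where the null condition enters: as $(\Psi,\theta)$ is a populated system, $\mathrm{Null}(\Psi,\theta)$ holds, and because $W$ is given by \cref{Eqn W from Psi} we have the factorisation $\Psi=\mathcal{R}\wedge W$. Applying \cref{Def Null Condition} to this representation gives $i_W\theta=0$ (and likewise $i_{\mathcal{R}}\theta=0$), as already observed beneath \cref{Def Transport Equations on U}. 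Restricting to $U^+$ yields $i_W\theta^+=0$, so pulling back gives $i_{W_E}\theta_E=0$.

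The argument is essentially bookkeeping: the two nontrivial inputs are the naturality of $d$ under pullback and the tangency identity of \cref{Def Tangent}, while the only genuine content is recognising that $i_W\theta=0$ is precisely the $W$-component of the null condition for $\Psi=\mathcal{R}\wedge W$. The point to handle with care is the passage between $\theta$ on $U$ and its restriction $\theta^+$ on the open set $U^+$: one must check that both $d\theta=0$ and $i_W\theta=0$ descend to $U^+$ (they do, being identities valid on all of $U$) and that $\Sigma_E$ factors through $U^+$, which holds since $E\subset U^+$. No homogeneity or compatibility calculation is needed beyond what already guarantees, via \cref{Eqn Vlasov Pushforward}, that $W_E$ is well defined.
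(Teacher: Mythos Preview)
Your proof is correct and follows essentially the same approach as the paper: both verify $d\theta_E=0$ via naturality of $d$ under pullback, and $i_{W_E}\theta_E=0$ via the tangency identity $\Sigma_E^\ast(i_W\theta)=i_{W_E}\Sigma_E^\ast\theta$ together with $i_W\theta=0$ from the null condition. You are somewhat more explicit than the paper about the $\theta$ versus $\theta^+$ distinction and about why $W$ is tangent to $E$, but the argument is the same.
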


\begin{proof}
Observe that
\begin{align*}
d\theta_E=& d\Sigma_E^\ast \theta= \Sigma_E^\ast d\theta=0, \; \mbox{ and }\;\\
i_{W_E}\theta_E=& i_{W_E} \Sigma_E^\ast \theta= \Sigma_E^\ast \left( i_W\theta \right) =0.
\end{align*}
Hence $\theta_E$ with $W_E$ satisfy \cref{Def Transport Equations E}.
\end{proof}

\begin{lemma}\label{Lem Theta from Theta E}
Given a kinematic domain $E$ with $\theta_E\in\Gamma\Lambda^{2n-2}E$ and $W_E\in\Gamma TE$ such that $\theta_E$ satisfies the transport equations on $E$, define the map,
\begin{align}
    \Pi \colon U^+\rightarrow E; & \quad \Pi(\lambda\underline{v})=\underline{v} \;\mbox{\textnormal{ where }}\; \underline{v}\in E \textup{ and } \lambda>0.
\end{align}
Let 
\begin{equation}
    \theta^+= \Pi^\ast \theta_E,
    \label{Eqn Theta on U+}
\end{equation}
let $\theta$ be future pointing, given by \cref{Eqn Future Pointing 1} and \cref{Eqn Future Pointing 2},
and let $W\in\Gamma TU$ be given by \cref{Eqn Vlasov Pushforward}.
The system $(\Psi=\mathcal{R}\wedge W,\theta)$ satisfies the transport equations on $U$.
\end{lemma}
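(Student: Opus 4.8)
The plan is to verify directly the two transport equations on $U$ from \cref{Def Transport Equations on U}, namely that $\mathrm{Null}(\Psi,\theta)$ holds and $d\theta=0$. Since $\Psi=\mathcal{R}\wedge W$, the null condition (\cref{Def Null Condition}) is representation-independent by \cref{Lem Lin Rel Lemma}, so it suffices to show $i_{\mathcal{R}}\theta=0$ and $i_W\theta=0$ for this particular representation. I would work on $U^+$, where $\theta=\Pi^\ast\theta_E$, and dispose of $U^-$ separately: there $\theta=0$, so closure and all contractions vanish trivially. Because $U$ is time orientable the sets $U^+$ and $U^-$ are disjoint and relatively clopen, so a form that is smooth and closed on each piece is smooth and closed on all of $U$; this reduces every claim to a statement on $U^+$.

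Two of the claims are immediate from the naturality of the pullback. For closure, $d\theta=d\,\Pi^\ast\theta_E=\Pi^\ast d\theta_E=0$, using that $\theta_E$ satisfies the transport equations on $E$. For $i_{\mathcal{R}}\theta=0$, the key point is that $\mathcal{R}$ is tangent to the fibres of $\Pi$: since $\Pi(\lambda\underline{v})=\underline{v}$ collapses each ray to a point and $\mathcal{R}$ generates the scaling $m_\lambda\colon\underline{u}\mapsto\lambda\underline{u}$, differentiating $\Pi\circ m_\lambda=\Pi$ gives $\Pi_\ast\mathcal{R}=0$. As $\Pi^\ast\theta_E$ evaluated on any tuple of vectors only sees their $\Pi$-pushforwards, and $\mathcal{R}$ lies in $\ker\Pi_\ast$, we get $i_{\mathcal{R}}\theta=0$ pointwise on $U^+$.

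The substantive step, and the main obstacle, is $i_W\theta=0$. Working pointwise, $(i_W\Pi^\ast\theta_E)|_{\underline{u}}$ evaluated on $X_1,\dots,X_{2n-3}$ equals $\theta_E(\Pi_\ast W,\Pi_\ast X_1,\dots)$, so it depends on $W$ only through $\Pi_\ast(W|_{\underline{u}})$. Hence it suffices to show $\Pi_\ast(W|_{\underline{u}})$ is a multiple of $W_E|_{\Pi(\underline{u})}$, since then the relation $i_{W_E}\theta_E=0$ finishes the argument. To establish this I would first note, from $\Pi\circ m_\lambda=\Pi$, that $\Pi_\ast\circ(m_\lambda)_\ast=\Pi_\ast$. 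I then relate $W|_{\lambda\underline{v}}$ to $(m_\lambda)_\ast(W|_{\underline{v}})$ using the radially quadratic property: writing $W=\dot{x}^\mu\partial^{(x)}_\mu+\varphi^\mu\partial^{(\dot{x})}_\mu$ with $\varphi^\mu$ being $2$--homogeneous, and computing $(m_\lambda)_\ast\partial^{(x)}_\mu=\partial^{(x)}_\mu$, $(m_\lambda)_\ast\partial^{(\dot{x})}_\mu=\lambda\partial^{(\dot{x})}_\mu$ together with $\dot{x}^\mu|_{\lambda\underline{v}}=\lambda\dot{x}^\mu|_{\underline{v}}$ and $\varphi^\mu|_{\lambda\underline{v}}=\lambda^2\varphi^\mu|_{\underline{v}}$, one finds $W|_{\lambda\underline{v}}=\lambda\,(m_\lambda)_\ast(W|_{\underline{v}})$.

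Finally, on $E$ itself $W$ restricts to $W_E$ via \cref{Eqn Vlasov Pushforward} and $\Pi\circ\Sigma_E=\mathrm{id}_E$, so $\Pi_\ast(W|_{\underline{v}})=W_E|_{\underline{v}}$ for $\underline{v}\in E$. Combining the last two displays gives $\Pi_\ast(W|_{\lambda\underline{v}})=\lambda\,\Pi_\ast(W|_{\underline{v}})=\lambda\,W_E|_{\Pi(\lambda\underline{v})}$, which is a multiple of $W_E|_{\Pi(\lambda\underline{v})}$ as required. Therefore $i_W\theta=\lambda\,\Pi^\ast(i_{W_E}\theta_E)=0$ on $U^+$, and trivially on $U^-$. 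With $i_{\mathcal{R}}\theta=0$, $i_W\theta=0$ and $d\theta=0$ all established, $\mathrm{Null}(\Psi,\theta)$ holds and the transport equations on $U$ are satisfied, so $(\Psi,\theta)$ is a populated system on $U$.
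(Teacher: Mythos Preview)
Your proof is correct and considerably more streamlined than the paper's. Both arguments handle $d\theta=0$ and $i_{\mathcal{R}}\theta=0$ identically (naturality of $d$ and $\Pi_\ast\mathcal{R}=0$), but diverge sharply on $i_W\theta=0$. The paper builds an explicit foliation $\{E_\ell\}_{\ell>0}$ of $U^+$, adapted coordinates $(x^\mu,\ell,\xi^a)$, scaling maps $\Xi_\lambda\colon E\to E_\lambda$, and an auxiliary top form $\Omega$; it then expands $\theta_E$ in components and shows $\Xi_\lambda^\ast\Sigma_{E_\lambda}^\ast(i_W\theta^+)=0$ for every $\lambda$, finally appealing to a separate observation that a radially annihilated form vanishing on every leaf must vanish. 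Your route bypasses all of this by computing $\Pi_\ast(W|_{\lambda\underline{v}})$ directly: the identity $W|_{\lambda\underline{v}}=\lambda\,(m_\lambda)_\ast W|_{\underline{v}}$ is exactly the radially quadratic property rephrased, and combined with $\Pi\circ m_\lambda=\Pi$ and $\Pi\circ\Sigma_E=\mathrm{id}_E$ it gives $\Pi_\ast W|_{\underline{u}}=F(\underline{u})\,W_E|_{\Pi(\underline{u})}$ at once, whence $i_W\Pi^\ast\theta_E=F\cdot\Pi^\ast(i_{W_E}\theta_E)=0$. What the paper's machinery buys is reusability (the coordinates $(\ell,y^k)$ and the leaf-restriction criterion reappear in the converse lemma immediately following), whereas your argument is self-contained and coordinate-free but tailored to this direction. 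One cosmetic point: writing ``$i_W\theta=\lambda\,\Pi^\ast(i_{W_E}\theta_E)$'' is a slight abuse since $\lambda$ is really the scalar field $F$ on $U^+$; the meaning is clear, but you might prefer to state it pointwise or name the scalar.
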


\begin{proof}
Let $F$ be the 1--homogeneous \Fkin for $E\subset U^+$. Define a foliation on $U^+$ of kinematic domains $E_\ell$ for $\ell\in\mathbb{R}^+$ where $E_1=E$ and $F|_{E_\ell}=\ell$. 
We may define a coordinate system $(x^\mu,\ell,\xi^a)$ for $U^+$ where $\ell$ is constant on each $E_\ell$.
We may choose $\xi^a$ such that they are 0--homogeneous, and $\ell$ is 1--homogeneous by the 1--homogeneity of $F$.
By \cref{Lem Homogeneity of W<G>} we have for each $\lambda>0$ and $\underline{u}\in U^+$,
\begin{equation}\label{Eqn W Lambda}
    W\vert_{\lambda\underline{u}}\langle x^\mu\rangle= \lambda W\vert_{\underline{u}}\langle x^\mu\rangle, \;
    W\vert_{\lambda\underline{u}}\langle \xi^a\rangle= \lambda W\vert_{\underline{u}}\langle \xi^a\rangle.
\end{equation}
The map 
\begin{equation*}
    \Xi_\ell \colon E\rightarrow E_\ell \; ; \;\Xi_\ell(\underline{v})=\ell\underline{v},
\end{equation*}
is well defined since $E_\ell=\{ \ell\underline{v},\; \underline{v}\in E\} $.
Furthermore, it satisfies
\begin{align*}
    \Pi\circ\Sigma_{E_\ell}\circ \Xi_\ell= \mathbb{1}_E \; \mbox{ i.e. } \;
    \Xi_\ell=\left( \Pi\circ \Sigma_{E_\ell} \right)^{-1}.
\end{align*}
Observe that we have
\begin{equation*}
    \begin{split}
        {}&\Xi_\ell^\ast x^\mu= x^\mu, \quad \Xi_\ell^\ast \xi^a=\xi^a,\\
        {}&\Xi_\ell^\ast dx^\mu=dx^\mu, \quad \Xi_\ell^\ast d\xi^a=d\xi^a.
    \end{split}
\end{equation*}
Lastly, define $\Omega\in \Gamma\Lambda^{2n-1}U$ by
\begin{equation*}
    \Omega=dx^0\wedge\cdots \wedge dx^{n-1}\wedge d\xi^1\wedge\cdots\wedge d\xi^{n-1}.
\end{equation*}

First observe that $d\theta^+=0$ since the exterior derivative commutes with the pullback.
Notice also that $i_{\mathcal{R}}\theta^+=0$ since $\Pi_\ast\mathcal{R}=0$ we have
\begin{equation*}
    i_{\mathcal{R}}\theta^+= i_{\mathcal{R}} \Pi^\ast \theta_E= \Pi^\ast \left( i_{\Pi_\ast\mathcal{R}}\theta_E\right) =0.
\end{equation*}
To see that $i_W\theta^+=0$ define coordinate functions $y^0,\ldots,y^{2n-2}$ as $y^k=x^k$ for $0\leq k\leq n-1$, $y^k=v^{k-n+1}$ for $n\leq k \leq 2n-2$. 
Hence $(y^0,\ldots,y^{2n-2})$ is a coordinate system for $E$ while 
$(\ell,y^0,\ldots,y^{2n-2})$ is a coordinate system for $U^+$. With appropriate domains $y^k = \Pi^\ast y^k$ and $y^k = \Sigma_E^\ast y^k$.
Hence for $\Sigma_E\circ\Pi:U^+\to U^+$ we have $(\Sigma_E\circ\Pi)^\ast(y^k)=y^k$. 
Thus $\Omega=dy^0\wedge\cdots\wedge dy^{2n-2}$. 
Also,
\begin{equation*}
    (\Sigma_E\circ\Pi)^\ast f = f
    \qquad\text{and}\qquad
    (\Sigma_E\circ\Pi)^\ast i^{(y)}_k \Omega  = i^{(y)}_k \Omega 
\end{equation*}
where $f$ is a scalar which such that $\mathcal{R}\langle f\rangle=0$, so it is a function only of $(y^0,\ldots,y^{2n-2})$.

Since we have, for any $\underline{v}\in E$, 
\begin{equation*}
    W\vert_{\lambda\underline{v}} \langle y^k\rangle = \lambda W\vert_{\underline{v}} \langle y^k\rangle
\end{equation*}
by \cref{Eqn W Lambda}. Thus for $\underline{v}\in E$ (and $\lambda\underline{v}\in E_\lambda$ we have 
\begin{equation*}
    \begin{split}
    \Big(\Xi_\lambda^\ast \left( i_{W_{E_\lambda}} dy^k\right)\Big) \Big\vert_{\underline{v}}
    =& \; \Xi_\lambda^\ast \Big(\left( i_{W_{E_\lambda}} dy^k\right) \vert_{\lambda\underline{v}}\Big)\\
    =&i_{W_{E_\lambda}} dy^k\vert_{\lambda\underline{v}}\\
    =& \lambda i_{W_E} dy^k\vert_{\underline{v}} \\
    =& \lambda i_{W_E} \Xi_\lambda^\ast \left( dy^k\vert_{\lambda\underline{v}} \right)\\
    =& \Big(\lambda i_{W_E} \Xi_\lambda^\ast \left( dy^k\right)\Big)\Big\vert_{\underline{v}} .
    \end{split}
\end{equation*}
Since this is true for any $\underline{v}\in E$ we have (for $dy^k\in\Gamma\Lambda^1 E_\lambda$)
\begin{equation*}
    \Xi_\lambda^\ast \left( i_{W_{E_\lambda}} dy^k\right)= \lambda i_{W_E}\Xi_\lambda^\ast dy^k.
\end{equation*}
It follows that 
\begin{equation*}
    \lambda i_{W_{E}} \Xi_\lambda^\ast \Sigma_{E_\lambda}^\ast \Omega = \Xi_\lambda^\ast \left( i_{W_{E_\lambda}} \Sigma_{E_\lambda}^\ast\Omega \right).
\end{equation*}
Since $\theta_E$ is an $(2n-2)$--form on a $(2n-1)$--dimensional manifold $E$, with coordinates $(y^0,\ldots,y^{2n-2})$ we may write
\begin{equation*}
    \theta_E=\theta_E^k i_k^{(y)}\Sigma_E^\ast\Omega ,
\end{equation*}
where $\theta^\mu_E=\theta^\mu_E(y^0,\ldots,y^{2n-2})$. Thus we can extend $\theta^\mu_E$ to a scalar field on $U^+$ so that $\mathcal{R}\langle \theta_E^k\rangle=0$. Also $\theta^k_E=\Pi^\ast\theta_E^k=\Xi^\ast_\ell\theta_E^k$ for the appropriate domains.
\begin{equation*}
    \begin{split}
        \theta^+=& \Pi^\ast \theta_E\\
        =& \Pi^\ast \left( \theta_E^k i_k^{(y)}\Sigma_E^\ast\Omega  \right)\\
        =& \Pi^\ast\theta_E^k\;(\Sigma_E\circ\Pi)^\ast \left( i_k^{(y)}\Omega  \right)\\
        =& \theta_E^k\; i_k^{(y)}\Omega .
    \end{split}
\end{equation*}
We then have for each $\lambda$
\begin{equation*}
    \begin{split}
        \Xi_\lambda^\ast \left( \Sigma_{E_\lambda}^\ast \left( i_W\theta^+\right) \right) =&
        \Xi_\lambda^\ast \left( i_{W_{E_\lambda}} \Sigma_{E_\lambda}^\ast \Pi^\ast \theta_E \right)\\
        =& \; \Xi_\lambda^\ast \left( i_{W_{E_\lambda}} \Sigma_{E_\lambda}^\ast \Pi^\ast \left( \theta_E^k i_k^{(y)}\Sigma_E^\ast\Omega  \right) \right)\\
        =& -\theta_E^k i_k^{(y)} \Xi_\lambda^\ast \left( i_{W_{E_\lambda}} \Sigma_{E_\lambda}^\ast \Pi^\ast \Sigma_E^\ast\Omega   \right)\\
        =& -\theta_E^k i_k^{(y)} \Xi_\lambda^\ast \left( i_{W_{E_\lambda}} \Sigma_{E_\lambda}^\ast \Omega   \right)\\
        =& -\lambda \theta_E^k i_k^{(y)} i_{W_{E}} \Xi_\lambda^\ast \left( \Sigma_{E_\lambda}^\ast \Omega   \right)\\
        =&\; \lambda i_{W_E}\Xi_\lambda^\ast \left( \theta_E^k i_k^{(y)} \Sigma_{E_\lambda}^\ast \Omega   \right)\\
        =&\; \lambda i_{W_E}\Xi_\lambda^\ast \left(  \Sigma_{E_\lambda}^\ast \Pi^\ast \left( \theta_E^k i_k^{(y)}\Sigma_E^\ast\Omega  \right)\right)\\
        =&  \;\lambda  i_{W_{E}} \Xi_\lambda^\ast \left( \Sigma_{E_\lambda}^\ast \Pi^\ast \theta_E \right)\\
        =& \lambda\ i_{W_E} \theta_E =0
    \end{split}
\end{equation*}

Since $\Xi_\lambda$ is bijective we have $\Sigma_{E_\lambda}^\ast \left( i_W\theta^+ \right)=0$.
In order to see that this implies $i_W\theta^+=0$, observe the following result:
\begin{equation*}
    \begin{split}
    {}&\text{Given $\alpha\in\Gamma\Lambda^{2n-3} U^+$, if $i_{\mathcal{R}} \alpha=0$}\\
    {}&\text{and $\Sigma_{E_\lambda}^\ast \alpha=0$ for each $\lambda >0$, then $\alpha=0$.}
    \end{split}
\end{equation*}
This follows since $i_{\mathcal{R}} \alpha=0$, we may write $\alpha= \alpha^{kj}i_k^{(y)}i_j^{(y)} \Omega$.
For any $\lambda>0,$ $0=\Sigma_{E_\lambda}^\ast \alpha=\Sigma_{E_\lambda}^\ast \left( \alpha^{kj} \right) i_k^{(y)}i_j^{(y)} \Sigma_{E_\lambda}^\ast \Omega$.
Since the form on the RHS is nonzero we must have $\Sigma_{E_\lambda}^\ast \left( \alpha^{kj} \right)=0$ for all $\lambda>0$. Since $U^+$ is the union of $E_\lambda$ then $\alpha^{jk}=0$.
It follows that $(\Psi,\theta^+)$ satisfy the transport equations on $U^+$.
\end{proof}

\begin{lemma}
Given a kinematic domain $E$ and particle densities $\theta_E$ and $\theta$ where $\theta$ is future pointing satisfying the relevant transport equations, then \cref{Eqn Theta E} holds if and only if \cref{Eqn Theta on U+} holds.
\end{lemma}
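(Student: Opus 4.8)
The plan is to prove the two implications separately, noting that one direction is essentially formal while the other is where the transport equations actually do work. The key structural fact to record first is the elementary identity $\Pi\circ\Sigma_E=\mathbb{1}_E$, which holds because $\Sigma_E(\underline{v})=\underline{v}$ for $\underline{v}\in E$ and then $\Pi(\underline{v})=\underline{v}$ (take $\lambda=1$ in the definition of $\Pi$ from \cref{Lem Theta from Theta E}).

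Given this identity, the implication \cref{Eqn Theta on U+}$\Rightarrow$\cref{Eqn Theta E} is immediate and uses nothing about the forms beyond functoriality of pullback: if $\theta^+=\Pi^\ast\theta_E$ then $\Sigma_E^\ast\theta^+=\Sigma_E^\ast\Pi^\ast\theta_E=(\Pi\circ\Sigma_E)^\ast\theta_E=\theta_E$. So I would dispatch this direction in one line.

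The substantive direction is \cref{Eqn Theta E}$\Rightarrow$\cref{Eqn Theta on U+}. Here I would reuse the coordinate system $(\ell,y^0,\ldots,y^{2n-2})$ on $U^+$ constructed in the proof of \cref{Lem Theta from Theta E}, in which $\ell$ is $1$--homogeneous, the $y^k$ are $0$--homogeneous, and consequently $\mathcal{R}=\ell\,\partial_\ell$. The null condition built into the transport equations on $U$ (\cref{PF Transport Equations}, via \cref{Def Null Condition} with $\Psi=\mathcal{R}\wedge W$) forces $i_{\mathcal{R}}\theta^+=0$, and since $\ell\neq0$ on $U^+$ this yields $i_{\partial_\ell}\theta^+=0$; hence $\theta^+$ carries no $d\ell$ component and can be written $\theta^+=\sum_k\theta_k^+(\ell,y)\,i_k^{(y)}\Omega$ with $\Omega=dy^0\wedge\cdots\wedge dy^{2n-2}$. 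Computing $d\theta^+$ then splits into a piece proportional to $\Omega$ and pieces proportional to the linearly independent $(2n-1)$--forms $d\ell\wedge i_k^{(y)}\Omega$; the closure equation $d\theta^+=0$ separately annihilates the latter, giving $\partial_\ell\theta_k^+=0$ for every $k$. Thus the coefficients depend on $y$ alone, and since $\Sigma_E\circ\Pi$ fixes the $y$--coordinates while sending $\ell\mapsto1$ and $d\ell\mapsto0$, the form $\theta^+$ is invariant under $(\Sigma_E\circ\Pi)^\ast=\Pi^\ast\circ\Sigma_E^\ast$. Therefore $\theta^+=\Pi^\ast(\Sigma_E^\ast\theta^+)=\Pi^\ast\theta_E$, using the hypothesis \cref{Eqn Theta E} in the last step.

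I expect the main obstacle to be the closure computation in this converse direction: one must verify carefully that $d\theta^+=0$ really does separate into the $\Omega$--part and the $d\ell\wedge i_k^{(y)}\Omega$--parts, and that the latter family is linearly independent, so that the $\ell$--independence of each coefficient $\theta_k^+$ can be extracted cleanly. Everything else—the functorial identity $\Pi\circ\Sigma_E=\mathbb{1}_E$ and the passage from $i_{\mathcal{R}}\theta^+=0$ to the absence of a $d\ell$ component—is routine once the homogeneous coordinate frame of \cref{Lem Theta from Theta E} is in hand.
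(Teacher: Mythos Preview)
Your proposal is correct and follows essentially the same approach as the paper: the easy direction via $\Pi\circ\Sigma_E=\mathbb{1}_E$ is identical, and for the converse both you and the paper use the $(\ell,y^k)$ coordinates from \cref{Lem Theta from Theta E}, extract $i_{\mathcal{R}}\theta^+=0$ and $d\theta^+=0$ from the transport equations to conclude that $\theta^+=\theta^k\,i_k^{(y)}\Omega$ with $\ell$--independent coefficients, and then use $(\Sigma_E\circ\Pi)^\ast y^k=y^k$ to finish. The paper simply asserts the $\ell$--independence in one line where you spell out the $d\ell\wedge i_k^{(y)}\Omega$ decomposition of $d\theta^+$; your concern about that step is well-placed but the computation is indeed routine (alternatively, $L_{\mathcal{R}}\theta^+=i_{\mathcal{R}}d\theta^+ + d\,i_{\mathcal{R}}\theta^+=0$ gives the same conclusion without tracking components).
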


\begin{proof}
First suppose that $\theta^+=\Pi^\ast \theta_E$.
We have
\begin{equation*}
    \Sigma_E^\ast \theta^+= \Sigma_E^\ast \Pi^\ast \theta_E= \left( \Pi \circ \Sigma_E \right)^\ast \theta_E= \theta_E.
\end{equation*}

Suppose we now have $\theta_E= \Sigma_E^\ast \theta^+$.
Define a coordinate system $(\ell,y^k)$  in the same way as in the proof for \cref{Lem Theta from Theta E}, so that $\left( \Sigma_E\circ \Pi \right)^\ast\! y^k=y^k$.
Since $d\theta^+=0$ and $i_\mathcal{R}\theta^+=0$ we have
\begin{equation*}
    \theta^+= \theta^k i_k^{(y)} \Omega_X, \quad \theta^+= \theta^+(y^0,...,y^{2n-2}),
\end{equation*}
where $\Omega_X=dy^0\wedge\cdots\wedge dy^{2n-2}$.
We then have
\begin{equation*}
    \begin{split}
        \Pi^\ast \theta_E=& \Pi^\ast\Sigma_E^\ast \theta^+
        = \Pi^\ast\Sigma_E^\ast \left( \theta^k i_k^{(y)} \Omega_X \right)\\
        =& \theta^k \left( \Sigma_E\circ \Pi \right)^\ast \left( i_k^{(y)} \Omega_X \right)\\
        =& \theta^k i_k^{(y)} \Omega_X
        = \theta^+.
    \end{split}
\end{equation*}
\end{proof}

\subsection{The current associated with the particle density form}
\label{Subsec Current Density}
For a populated system on $E$, $(W_E,\theta_E)$ we construct the current $J_E$ associated with $\theta_E$ by integrating $\theta_E$ over each fibre. In differential geometry, this integration can be expressed very naturally using the language of de Rham push forwards. The conservation of the charge $d J_E=0$ follows from the transport equation $d \theta_E=0$ and the fact the the exterior derivative commutes with the de Rham push forward. 

For the populated system on $U$, it is necessary introduce a support 1--form $\chi$, so as to define the current. It is then necessary to show that the result is independent of the choice of this support 1--form, and the charge is conserved.

\begin{definition}[Integration along a fibre]
Let $\pi\colon K\rightarrow M$ where $K$ is an oriented $k$--dimensional vector bundle over $M$.
A form $\alpha\in\Gamma\Lambda^qK$ (for $n\leq k \leq 2n$) is said to have vertical compact support if for each $p\in M$ the restriction $\alpha\vert_{\pi^{-1}(p)}$ has compact support.
Given a form with vertical compact support $\alpha\in\Gamma\Lambda^q K$ the \textit{integral along the fibre} (otherwise known as the deRham pushforward) $\pi_\varsigma\alpha\in\Gamma\Lambda^{n-r} M$ is defined by
\begin{equation}\label{Eqn deRham 2 Int}
\int_K \pi^\ast \beta \wedge \alpha= \int_M\beta\wedge \pi_\varsigma \alpha,
\end{equation}
for all forms $\beta\in\Gamma\Lambda^r M$ with compact support such that $q+r =\dim (K)$.
For a comprehensive overview see \cite{derham_differentiable_2012}.
\end{definition}

\begin{definition}[Current forms from $E$]\label{Def Current from E}
Let $U$ be time orientable, let $E\subset U^+$ be a kinematic domain and let $\theta_E\in\Gamma\Lambda^{2n-2}E$ define a particle density 6-form satisfying the transport equations on $E$. 
We define the current $(n{-}1)$--form $\mathcal{J}_E\in\Gamma\Lambda^{n-1}M$ by
\begin{equation}
\mathcal{J}_E= \pi_{E\varsigma}(\theta_E). \label{Current 3--form}
\end{equation}
Here, $\pi_E\colon E\rightarrow M$ is the projection from the bundle $E$ to $M$.
\end{definition}

Since the above definition relies on a choice of kinetic domain $E$, we propose the following generalisation.
For this we need to define a support form, which is a 1--form on $U$.

\begin{definition}[Support Form]
    Given any $\underline{u}\in U$ let $\mathfrak{R}_{\underline{u}}=\{ \lambda\underline{u} \; \colon \;  \lambda>0\}$ and $\hat{\mathfrak{R}}_{\underline{u}} \colon \mathbb{R}^+ \hookrightarrow U $.
    A \textit{support form} $\chi\in\Gamma\Lambda^1 U$ is a 1--form such that for all $\underline{u}\in U$, $\hat{\mathfrak{R}}_{\underline{u}}^\ast\chi$ has compact support on $\mathfrak{R}_{\underline{u}}$ and satisfies
    \begin{equation}\label{Eqn Support Form Condition}
        \int_{\mathbb{R}^+} \hat{\mathfrak{R}}_{\underline{u}}^\ast \chi =1,
    \end{equation}
    for each $\underline{u}\in U$.
\end{definition}

Observe that replacing $\underline{u}$ with $\lambda\underline{u}$ for some $\lambda>0$ does not change the integration in \cref{Eqn Support Form Condition}. 
Recall that $\theta\vert_{\underline{u}}$ may be unrelated to $\theta\vert_{-\underline{u}}$ and in general both can be non-zero. Thus we need a support form to have support on both $\mathfrak{R}_{\underline{u}}$ and $\mathfrak{R}_{-\underline{u}}$ so that both sides contribute.

\begin{definition}[Current forms from $U$]\label{Def Current from U}
Let $\theta\in\Gamma\Lambda^{2n-2}U$ be a particle density form satisfying the transport equations on $U$ (\cref{Def Transport Equations on U}).
The \textit{current form on $U$}, $\mathcal{J}\in\Gamma\Lambda^{n-1}M$ is given by
\begin{equation}
\mathcal{J}=\pi_\varsigma(\chi\wedge\theta),
\end{equation}
for any support form $\chi\in\Gamma\Lambda^1U$.
\end{definition}

Although we need to impose additional structure through the inclusion of the support form, we may show that the current form from $U$ is independent of our choice of support form.

\begin{lemma}\label{Lem Integration}
Let $N$ be an $\ell$ dimensional manifold, $\alpha\in\Gamma\Lambda^{\ell-1}N$, $\beta\in\Gamma\Lambda^1N$, $t\in\Gamma\Lambda^0N$ and $X\in\Gamma TN$ such that 
\begin{equation}
L_X\alpha=0,\quad i_X\alpha=0,\quad \mbox{and} \quad X\langle t\rangle=1.
\end{equation}
Let $K_{t_0}=\{ p\in N \colon t\vert_p=t_0 \}$ for some value $t_0$ and define the embedding $\Sigma_{t_0}\colon K_{t_0}\hookrightarrow N$.
Lastly, let $\eta_p(t)$ denote the integral curve of $X$ passing through $p$.
Then
\begin{equation}
\int_N \alpha\wedge \beta= \int_{p\in K_{t_0}} \Sigma_{t_0}^\ast\alpha \left( \int_{\mathbb{R}} \eta_{p}^\ast \beta \right),
\end{equation}
provided $\eta_{p}^\ast\beta$ has compact support on the domain of each $\eta_p$.
\end{lemma}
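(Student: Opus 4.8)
The plan is to use the flow of $X$ to trivialise $N$ as a product $K_{t_0}\times\mathbb{R}$ and then recognise the identity as Fubini's theorem, with the two hypotheses $L_X\alpha=0$ and $i_X\alpha=0$ doing the essential work. First I would let $\phi_s$ denote the flow of $X$ and define $\Phi\colon K_{t_0}\times\mathbb{R}\to N$ by $\Phi(p,s)=\phi_s(\Sigma_{t_0}(p))$, which is (up to an irrelevant affine shift of the curve parameter) the map $s\mapsto\eta_p(s)$. Since $X\langle t\rangle=1$, the scalar $t$ is a submersion, as $dt(X)=1\neq0$, and it increases at unit rate along each integral curve, so $t\circ\eta_p(s)=t_0+s$. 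This forces $\Phi$ to be a bijection onto $N$ and a local diffeomorphism (the $\partial_s$ direction maps to the transverse field $X$, and each slice maps via the diffeomorphism $\phi_s$), hence a diffeomorphism. Under $\Phi$ the coordinate field $\partial_s$ pushes forward to $X$.

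The key step is to compute $\Phi^\ast\alpha$. Because $i_X\alpha=0$, I have $i_{\partial_s}\Phi^\ast\alpha=\Phi^\ast(i_X\alpha)=0$, so $\Phi^\ast\alpha$ carries no $ds$ factor. Because $L_X\alpha=0$, the flow preserves $\alpha$, that is $\phi_s^\ast\alpha=\alpha$; restricting $\Phi^\ast\alpha$ to the slice $K_{t_0}\times\{s\}$ gives $(\phi_s\circ\Sigma_{t_0})^\ast\alpha=\Sigma_{t_0}^\ast\phi_s^\ast\alpha=\Sigma_{t_0}^\ast\alpha$, independent of $s$. A form on the product with no $ds$ component whose slices are $s$--independent is exactly a pullback from the first factor, so $\Phi^\ast\alpha=\mathrm{pr}_1^\ast(\Sigma_{t_0}^\ast\alpha)$, where $\mathrm{pr}_1$ is projection onto $K_{t_0}$. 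As $K_{t_0}$ is $(\ell-1)$--dimensional, in local coordinates $y^a$ on $K_{t_0}$ this reads $a(y)\,dy^1\wedge\cdots\wedge dy^{\ell-1}$.

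Next I would decompose $\Phi^\ast\beta=\beta_a\,dy^a+b\,ds$. Wedging, every $dy^a$ term dies against the top $y$--form in $\Phi^\ast\alpha$, so only the $ds$ part survives:
\[
\Phi^\ast(\alpha\wedge\beta)=a(y)\,b(y,s)\,dy^1\wedge\cdots\wedge dy^{\ell-1}\wedge ds.
\]
Applying the change of variables $\int_N\alpha\wedge\beta=\int_{K_{t_0}\times\mathbb{R}}\Phi^\ast(\alpha\wedge\beta)$ and then Fubini, which is legitimate because $\eta_p^\ast\beta$ has compact support in $s$, separates the $s$--integral over $\mathbb{R}$ from the $y$--integral over $K_{t_0}$. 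Finally I identify the pieces: since $\eta_p=\Phi(p,\,\cdot\,)$, pulling $\beta$ back along $\eta_p$ annihilates the $dy^a$ (the $y$'s are constant along the curve) and leaves $\eta_p^\ast\beta=b(y,s)\,ds$, so $\int_{\mathbb{R}}\eta_p^\ast\beta=\int_{\mathbb{R}}b(y,s)\,ds$, while $a(y)\,dy^1\wedge\cdots\wedge dy^{\ell-1}=\Sigma_{t_0}^\ast\alpha$. This reproduces exactly the claimed formula.

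The main obstacle is the global validity of the trivialisation $\Phi$: strictly this needs $X$ to be complete, or else a restriction to the maximal flow domain together with the observation that the integral curves meeting $K_{t_0}$ exhaust $N$ (which follows from $t$ being a submersion with $X\langle t\rangle=1$); the compact support of $\eta_p^\ast\beta$ is what guarantees the $s$--integrals, and hence the Fubini step, converge. One must also fix the orientations of $N$, $K_{t_0}$ and $\mathbb{R}$ compatibly so the change-of-variables sign is $+1$. Once the product structure is in place, the whole argument collapses to the two pointwise facts $i_{\partial_s}\Phi^\ast\alpha=0$ and the $\partial_s$--independence of $\Phi^\ast\alpha$, after which the remaining computation is routine.
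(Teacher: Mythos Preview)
Your proposal is correct and follows essentially the same approach as the paper: straighten $X$ so that it becomes the coordinate field $\partial_t$ (the paper does this by choosing adapted local coordinates $(x^0,\ldots,x^{\ell-2},t)$ with $X=\partial_t$, whereas you do it globally via the flow trivialisation $\Phi$), use $i_X\alpha=0$ to kill the $dt$-component of $\alpha$ and $L_X\alpha=0$ to make its coefficient independent of $t$, and then apply Fubini. If anything, you are more explicit than the paper about where $L_X\alpha=0$ enters and about the global issues (completeness of the flow, orientations), which the paper's proof glosses over.
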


\begin{proof}
Define a coordinate system on $N$ by $(x^0,...,x^{\ell-2},t)$ such that $X=\partial_t$.
Since $i_X\alpha=0$ it follows that $\alpha=\alpha_0 dx^0\wedge \cdots \wedge dx^{\ell-2}$ for some $\alpha_0\in\Gamma\Lambda^0N$.
Let $\beta_0=i_X\beta$ then we have 
\begin{equation*}
\begin{split}
\int_N&\alpha\wedge \beta
\\
=& \int_N \alpha_0  dx^0\wedge \cdots \wedge dx^{\ell-2} \wedge (i_X\beta)dt\\
=& \int_{(x^0,...,x^{\ell-2})\in K_{t_0}} \int_{t\in\mathbb{R}} \alpha_0(x^0,...,x^{\ell-2}) dx^0\wedge\cdots\\
{}&\qquad\cdots \wedge dx^{\ell-2} \wedge \beta_0(x^0,...,x^{\ell-2},t)dt\\
=& \int_{(x^0,...,x^{\ell-2})\in K_{t_0}} \alpha_0(x^0,...,x^{\ell-2}) dx^0\wedge \cdots \wedge dx^{\ell-2}\\
{}& \; \int_{t\in\mathbb{R}} \beta_0(x^0,...,x^{\ell-2},t)dt\\
=& \int_{(x^0,...,x^{\ell-2})\in K_{t_0}} \Sigma_{t_0}^\ast \alpha \int_{t\in\mathbb{R}}\beta\vert_{(x^0,...,x^{\ell-2},t)}\\
=& \int_{(x^0,...,x^{\ell-2})\in K_{t_0}} \Sigma_{t_0}^\ast\alpha \left( \int_{\mathbb{R}} \eta_{(x^0,...,x^{\ell-2},t)}^\ast \beta \right).
 \end{split}
\end{equation*}
\end{proof}

\begin{lemma}\label{Lem PF Current 3--form}
Let the current $(n{-}2)$--form $\mathcal{J}\in\Gamma\Lambda^{n-1}M$ as given by \cref{Def Current from U}.
$\mathcal{J}$ is independent of the choice of support form $\chi\in\Gamma\Lambda^1U$.
\end{lemma}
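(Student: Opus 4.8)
The plan is to prove the statement through the defining pairing of the de Rham pushforward, reducing the question to an application of \cref{Lem Integration}, which is precisely the Fubini-type decomposition along the flow of $\mathcal{R}$ that the situation calls for. Let $\chi,\chi'\in\Gamma\Lambda^1U$ be two support forms. Since the pushforward $(n{-}1)$--form is determined by its pairing against all compactly supported test $1$--forms $\gamma\in\Gamma\Lambda^1M$ through \cref{Eqn deRham 2 Int}, it suffices to show that $\int_U\pi^\ast\gamma\wedge\chi\wedge\theta$ does not depend on the choice of support form.

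First I would set $\omega=\pi^\ast\gamma\wedge\theta\in\Gamma\Lambda^{2n-1}U$ and record that $\omega$ is annihilated by, and Lie-invariant under, $\mathcal{R}$. Indeed $i_\mathcal{R}\pi^\ast\gamma=\gamma\colon\pi_\ast\mathcal{R}=0$ since $\pi_\ast\mathcal{R}=0$, and $i_\mathcal{R}\theta=0$ because $\mbox{Null}(\Psi,\theta)$ holds with $\Psi=\mathcal{R}\wedge W$; hence $i_\mathcal{R}\omega=0$. Likewise $L_\mathcal{R}\pi^\ast\gamma=0$ and, using $d\theta=0$ together with $i_\mathcal{R}\theta=0$ and Cartan's formula, $L_\mathcal{R}\theta=d\,i_\mathcal{R}\theta+i_\mathcal{R}\,d\theta=0$, so $L_\mathcal{R}\omega=0$.

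Next I would pick a global function transverse to the rays. Taking $t=\tfrac12\ln F_\coord$ with $F_\coord$ from \cref{F Coord}, Euler's theorem \cref{Eqn Euler Thm} gives $\mathcal{R}\langle F_\coord\rangle=2F_\coord$ and hence $\mathcal{R}\langle t\rangle=1$; moreover $F_\coord>0$ on all of $U$, so no time orientation is required and the complete flow of $\mathcal{R}$ identifies $U$ with $K_{t_0}\times\mathbb{R}$, where $K_{t_0}=\{t=t_0\}$ is a contour of $F_\coord$ meeting every ray exactly once. With $N=U$, $\alpha=\omega$, $\beta=\chi$ and $X=\mathcal{R}$, the hypotheses of \cref{Lem Integration} are exactly the three identities just verified, so
\begin{equation*}
\int_U\pi^\ast\gamma\wedge\chi\wedge\theta
=\int_U\omega\wedge\chi
=\int_{p\in K_{t_0}}\Sigma_{t_0}^\ast\omega\left(\int_\mathbb{R}\eta_p^\ast\chi\right),
\end{equation*}
where the first equality uses that $2n-2$ is even, so $\chi\wedge\theta=\theta\wedge\chi$.

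Finally I would identify the inner integral. The integral curve $\eta_p$ of $\mathcal{R}$ through $p\in K_{t_0}$ is precisely the ray $\mathfrak{R}_p$, reparametrised by $t$; since the integral of a $1$--form over an oriented curve is parametrisation independent and $t$ increases with $\lambda$, we get $\int_\mathbb{R}\eta_p^\ast\chi=\int_{\mathbb{R}^+}\hat{\mathfrak{R}}_p^\ast\chi=1$ by \cref{Eqn Support Form Condition}. Hence $\int_U\pi^\ast\gamma\wedge\chi\wedge\theta=\int_{K_{t_0}}\Sigma_{t_0}^\ast\omega$, a quantity in which $\chi$ no longer appears. As this holds for every test $\gamma$, the two pushforwards agree and $\mathcal{J}$ is independent of $\chi$. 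I expect the only real friction to be bookkeeping: checking that $\eta_p^\ast\chi$ inherits compact support from the support-form axiom (so that both the inner integral and \cref{Lem Integration} apply), and confirming that the global product structure furnished by the complete flow of $\mathcal{R}$ legitimately lets the single-chart computation underlying \cref{Lem Integration} run over all of $U$ at once.
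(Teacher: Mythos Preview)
Your proposal is correct and follows essentially the same route as the paper: verify $i_\mathcal{R}(\pi^\ast\gamma\wedge\theta)=0$ and $L_\mathcal{R}(\pi^\ast\gamma\wedge\theta)=0$ from the transport equations, then invoke \cref{Lem Integration} with $X=\mathcal{R}$ to split the integral over $U$ into an integral over a transversal times the ray integral $\int_\mathbb{R}\eta_p^\ast\chi=1$. The only cosmetic difference is that you name an explicit transversal coordinate $t=\tfrac12\ln F_\coord$, whereas the paper works with an abstract $r$ satisfying $\mathcal{R}\langle r\rangle=1$; your caveats about compact support and the global product structure are the same loose ends implicit in the paper's own argument.
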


\begin{proof}
Observe that since $\theta$ satisfies the transport equations (\cref{Def Null Condition}) we have $i_{\mathcal{R}}\theta=0$ and $d\theta=0$ so $L_\mathcal{R}\theta=0$.
It follows that
\begin{equation*}
L_\mathcal{R}(\pi^\ast \phi \wedge \theta)= \pi^\ast(L_{\pi_\ast\mathcal{R}}\phi)\wedge \theta -\pi^\ast\phi \wedge L_\mathcal{R} \theta=0,
\end{equation*}
since $\pi_\ast\mathcal{R}=0$.
We also have $i_\mathcal{R}(\pi^\ast \phi \wedge \theta)=0$ so we may apply \cref{Lem Integration} in the following way.
Let $\eta_{\underline{u}}$ be an integral curve of $\mathcal{R}$ passing through $\underline{u}\in U$ and let $K_{r_0}=\{ \underline{u}\in U \colon r\vert_{\underline{u}}=r_0 \}$ with $\Sigma_{r_0}\colon K_{r_0} \hookrightarrow U $ for some value $r_0$.
We then have
\begin{equation*}
\begin{split}
    \int_U \pi^\ast \phi \wedge \theta\wedge \chi =&
    \int_{\underline{u}\in K_{r_0}} \Sigma_{r_0}^\ast (\pi^\ast\phi\wedge\theta) \left( \int_{\mathbb{R}} \eta_{\underline{u}}^\ast \chi \right)\\
    =&\int_{K_{r_0}} \Sigma_{r_0}^\ast (\pi^\ast\phi\wedge\theta).
\end{split}
\end{equation*}
Hence $\mathcal{J}$ is independent of the choice of support form $\chi$.
\end{proof}

\begin{lemma}
The current $(n{-}2)$--form on $U$, $\mathcal{J}=\pi_\varsigma(\chi\wedge \theta)$, satisfies the continuity equation
\begin{equation}
d\mathcal{J}=0.
\end{equation}
\end{lemma}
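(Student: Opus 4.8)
The plan is to reduce the current on $U$ to a current carried by a single radial slice, thereby eliminating the support form, and then to invoke the fact that the de Rham pushforward commutes with the exterior derivative, exactly as for the current $\mathcal{J}_E$ on a kinematic domain in \cref{Current 3--form}.

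First I would recall the computation carried out in the proof of \cref{Lem PF Current 3--form}. Choosing a scalar $r\in\Gamma\Lambda^0U$ with $\mathcal{R}\langle r\rangle=1$ (for instance $r=\tfrac12\log F_\coord$), writing $K_{r_0}=\{\underline u\in U\colon r\vert_{\underline u}=r_0\}$ with inclusion $\Sigma_{r_0}\colon K_{r_0}\hookrightarrow U$ and projection $\pi_{r_0}=\pi\circ\Sigma_{r_0}\colon K_{r_0}\to M$, that proof establishes
\begin{equation*}
\int_U \pi^\ast\phi\wedge\chi\wedge\theta=\int_{K_{r_0}}\Sigma_{r_0}^\ast\!\left(\pi^\ast\phi\wedge\theta\right)
\end{equation*}
for every compactly supported $\phi\in\Gamma\Lambda^1M$. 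Rewriting both sides with the defining relation \cref{Eqn deRham 2 Int} for the fibre integral, this says precisely that
\begin{equation*}
\mathcal{J}=\pi_\varsigma(\chi\wedge\theta)=(\pi_{r_0})_\varsigma\!\left(\Sigma_{r_0}^\ast\theta\right),
\end{equation*}
so the support form $\chi$ has disappeared and $\mathcal{J}$ is expressed as the fibre integral of the $(2n{-}2)$--form $\Sigma_{r_0}^\ast\theta$ over the slice $K_{r_0}$, which has the same shape as \cref{Current 3--form}.

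Next I would differentiate. Since pullback commutes with $d$ and $\theta$ satisfies the transport equation $d\theta=0$ from \cref{PF Transport Equations}, the pulled-back form is closed,
\begin{equation*}
d\,\Sigma_{r_0}^\ast\theta=\Sigma_{r_0}^\ast d\theta=0.
\end{equation*}
The remaining ingredient is that integration along the fibre commutes with the exterior derivative, $d\circ(\pi_{r_0})_\varsigma=(\pi_{r_0})_\varsigma\circ d$ (see \cite{derham_differentiable_2012}); there is no boundary contribution because the particle density $\theta$ has vertical compact support, so $\Sigma_{r_0}^\ast\theta$ is compactly supported on each fibre of $\pi_{r_0}$. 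Combining these gives
\begin{equation*}
d\mathcal{J}=d\,(\pi_{r_0})_\varsigma\!\left(\Sigma_{r_0}^\ast\theta\right)=(\pi_{r_0})_\varsigma\!\left(d\,\Sigma_{r_0}^\ast\theta\right)=0.
\end{equation*}

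The main obstacle, and the reason I would avoid the naive route $d\mathcal{J}=\pi_\varsigma(d(\chi\wedge\theta))=\pi_\varsigma(d\chi\wedge\theta)$, is that this expression is literally $d\mathcal{J}$ again by the same commutation property, so it is circular: the surviving $d\chi$ term does not obviously vanish fibrewise. Eliminating $\chi$ first, via the slice identity already proved in \cref{Lem PF Current 3--form}, is exactly what breaks the circularity. The only genuine technical point is the vanishing of the fibrewise Stokes boundary term, which rests on the vertical compact support hypothesis for $\theta$ implicit in \cref{Def Current from U}; one should also confirm the orientation conventions under which $d$ commutes with $(\pi_{r_0})_\varsigma$, though any attendant sign is immaterial here since the form being pushed forward is identically zero.
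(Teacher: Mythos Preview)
Your proof is correct but takes a different route from the paper. You eliminate $\chi$ entirely by reducing to a radial slice $K_{r_0}$ via the identity established in the proof of \cref{Lem PF Current 3--form}, then push forward the closed form $\Sigma_{r_0}^\ast\theta$ along $\pi_{r_0}$. The paper instead goes straight through the computation you call ``naive'': it writes $d\mathcal{J}=\pi_\varsigma(d\chi\wedge\theta)$ and then invokes the independence of $\mathcal{J}$ from $\chi$ (\cref{Lem PF Current 3--form}) to replace $\chi$ by a \emph{closed} support form, namely $\chi=\chi_r(r)\,dr$ for $r$ with $\mathcal{R}\langle r\rangle=1$ and $\chi_r$ a compactly supported bump in $r$.

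Your dismissal of this route as ``circular'' is therefore misplaced: the expression $\pi_\varsigma(d\chi\wedge\theta)$ is not $d\mathcal{J}$ ``again'' in any self-referential sense, it is simply a formula whose right-hand side still depends on $\chi$. The paper's insight is that one may choose $\chi$ so that $d\chi=0$ identically, killing the term outright. Your approach buys you freedom from constructing a specific closed $\chi$, at the cost of importing the full slice computation; the paper's approach is shorter once independence is in hand but needs the explicit construction. Both are valid and rest on the same prior lemma.
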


\begin{proof}
The exterior derivative commutes with the deRham pushforward and $d\theta=0$ so we have
\begin{equation*}
d\mathcal{J}= d\pi_\varsigma(\chi\wedge\theta)=  \pi_\varsigma(d\chi\wedge \theta).
\end{equation*}
Since $\mathcal{J}$ is independent of our choice of $\chi$ by \cref{Lem PF Current 3--form} it suffices to pick $\chi$ such that $d\chi=0$.
By picking $r\in\Gamma\Lambda^0 U$ such that $dr\neq 0$ and $\mathcal{R}\langle r \rangle=1$ we may define a coordinate system $(x^\mu,r,y^{a})$.
By choosing
\begin{equation*}
\chi=\chi_r(r)dr,
\end{equation*}
where $\chi_r(r)$ is a function in $r$ with compact support (to comply with \cref{Eqn Support Form Condition}), we have $d\chi=0$.
\end{proof}

\begin{lemma}\label{Lem J eq JE}
Let $U$ be time orientable, $E$ be a kinematic domain and $\theta$ is future time pointing. Let $\theta$ and $\theta_E$ be related by \cref{Eqn Theta E} and \cref{Eqn Theta on U+}.
The current form from $U$, $\mathcal{J}\in\Gamma\Lambda^{n-1}M$ (\cref{Def Current from U}) and the current form from $E$, $\mathcal{J}_E\in\Gamma\Lambda^{n-1}M$ (\cref{Def Current from E}), are identical.
\end{lemma}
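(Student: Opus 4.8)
The plan is to test the equality of the two $(n{-}1)$--forms $\mathcal{J}$ and $\mathcal{J}_E$ by pairing them against an arbitrary compactly supported test $1$--form $\beta\in\Gamma\Lambda^1 M$ and invoking the defining integral property of the de Rham pushforward. By \cref{Def Current from U} and \cref{Def Current from E}, and the nondegeneracy of the wedge pairing of $(n{-}1)$--forms against $1$--forms, the claim $\mathcal{J}=\mathcal{J}_E$ reduces to showing
\begin{equation*}
\int_U \pi^\ast\beta\wedge\chi\wedge\theta = \int_E \pi_E^\ast\beta\wedge\theta_E
\end{equation*}
for every such $\beta$ and any admissible support form $\chi$. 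Since $\theta$ is future pointing, $\theta\vert_{U^-}=0$, so the left-hand integral collapses to $\int_{U^+}\pi^\ast\beta\wedge\chi\wedge\theta^+$.

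First I would set up \cref{Lem Integration} on $N=U^+$ with $X=\mathcal{R}$ and $\alpha=\pi^\ast\beta\wedge\theta^+$, taking the lemma's auxiliary $1$--form to be $\chi$. The hypotheses hold exactly as in the proof of \cref{Lem PF Current 3--form}: $i_\mathcal{R}\pi^\ast\beta=0$ and $L_\mathcal{R}\pi^\ast\beta=0$ because $\pi_\ast\mathcal{R}=0$, while $i_\mathcal{R}\theta^+=0$ (the null condition) and $d\theta^+=0$ give $i_\mathcal{R}\alpha=0$ and $L_\mathcal{R}\alpha=0$. For the cross-section I would take the $1$--homogeneous kinematic indicator $F$ of $E$ and set $t=\ln F$, which is smooth and positive on $U^+$; by Euler's theorem \cref{Eqn Euler Thm}, $\mathcal{R}\langle F\rangle=F$, so $\mathcal{R}\langle t\rangle=1$, and the level set $\{t=0\}$ is exactly $E$. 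Using that $\theta^+\wedge\chi=\chi\wedge\theta^+$ since $\deg\theta^+=2n{-}2$ is even, \cref{Lem Integration} yields
\begin{equation*}
\int_{U^+}\pi^\ast\beta\wedge\chi\wedge\theta^+ = \int_{\underline{v}\in E}\Sigma_E^\ast(\pi^\ast\beta\wedge\theta^+)\left(\int_{\mathbb{R}}\eta_{\underline{v}}^\ast\chi\right),
\end{equation*}
where $\eta_{\underline{v}}$ is the integral curve of $\mathcal{R}$ through $\underline{v}\in E$.

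It then remains to evaluate the two factors. The curve $\eta_{\underline{v}}$ traces the ray $\mathfrak{R}_{\underline{v}}$ in the direction of increasing $F$, hence with the same orientation as $\hat{\mathfrak{R}}_{\underline{v}}$; since a line integral of a $1$--form is parameterisation invariant, $\int_{\mathbb{R}}\eta_{\underline{v}}^\ast\chi=\int_{\mathbb{R}^+}\hat{\mathfrak{R}}_{\underline{v}}^\ast\chi=1$ by the support-form normalisation \cref{Eqn Support Form Condition}. For the outer factor, $\Sigma_E^\ast\pi^\ast\beta=(\pi\circ\Sigma_E)^\ast\beta=\pi_E^\ast\beta$ and $\Sigma_E^\ast\theta^+=\theta_E$ by \cref{Eqn Theta E}. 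Combining these gives $\int_{U^+}\pi^\ast\beta\wedge\chi\wedge\theta^+=\int_E\pi_E^\ast\beta\wedge\theta_E=\int_M\beta\wedge\mathcal{J}_E$, which establishes the displayed identity and hence $\mathcal{J}=\mathcal{J}_E$.

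The main obstacle is the correct application of \cref{Lem Integration}: one must produce a global parameter along the radial flow on $U^+$ with $\mathcal{R}\langle t\rangle=1$ whose zero level set is $E$, and the choice $t=\ln F$ is what makes this work, available precisely because $F$ is a positive $1$--homogeneous indicator on $U^+$. The second delicate point is recognising that the fibre integral produced by the lemma is normalised to unity by the support-form condition, together with the parity bookkeeping for the wedge reordering and the vanishing of $\theta$ on $U^-$. Everything else is routine. Independence of $\mathcal{J}$ from $\chi$, \cref{Lem PF Current 3--form}, is not strictly needed here but is consistent: the computation returns the same value $\mathcal{J}_E$ for every admissible $\chi$.
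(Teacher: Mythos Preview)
Your proposal is correct and follows essentially the same route as the paper: both introduce the logarithm of the $1$--homogeneous kinematic indicator, $t=\ln F$, as a radial parameter with $\mathcal{R}\langle t\rangle=1$ and zero level set $E$, apply \cref{Lem Integration} with $\alpha=\pi^\ast\beta\wedge\theta$ and the support form playing the role of the lemma's $1$--form, and then use \cref{Eqn Support Form Condition} to collapse the inner ray integral to $1$. Your write-up is slightly more explicit about the restriction to $U^+$ and the wedge-parity bookkeeping, but the argument is the same.
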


\begin{proof}
Let $F$ be the 1--homogeneous \Fkin associated with $E$, define
\begin{equation*}
r=\log F,
\end{equation*}
$K_{r_0}=\{ \underline{u}\in U \colon r\vert_{\underline{u}}=r_0 \}$, and $\Sigma_{r_0}\colon K_{r_0}\hookrightarrow U$.
First observe that $E=K_0$ and $\mathcal{R}\langle r\rangle=1$.
By application of \cref{Lem Integration} we have for any test form $\phi\in\Gamma_0\Lambda^1M$,
\begin{equation*}
\begin{split}
\int_U \pi^\ast \phi\wedge \theta \wedge \chi=& \int_{p\in K_0} \Sigma_0^\ast(\pi^\ast \phi\wedge\theta) \left( \int_{\mathbb{R}} \eta_p^\ast \chi \right)\\
=& \int_{ K_0} \Sigma_0^\ast(\pi^\ast \phi\wedge\theta)\\
=& \int_{ E} \Sigma_E^\ast(\pi^\ast \phi\wedge\theta)\\
=& \int_E \pi_E^\ast\phi \wedge \theta_E.
\end{split}
\end{equation*}
Hence for any $E$, $\mathcal{J}=\mathcal{J}_E$.
\end{proof}

\subsection{Discussion about the Stress-Energy 3--form}\label{subsect Stress energy}
Given a kinematic domain $E$ and $\alpha\in\Gamma\Lambda^1M$ the stress-energy $(n{-}1)$--form can be expressed
\begin{equation}\label{Stress-Energy 3-form}
\tau_E{}_\alpha= \pi_E{}_\varsigma (\hat{\alpha} \theta_E),
\end{equation}
where $\hat{\alpha}\in\Gamma\Lambda^0U$ is such that for each $\underline{u}\in U$, $\hat{\alpha}\vert_{\underline{u}}=\alpha\colon\underline{u}$.
The stress-energy 3-form can be converted into the usual stress-energy tensor
\begin{equation}\label{Eqn EV Stress-Energy}
    T_E^{\mu\nu}=\int_E\dot{x}^\mu\dot{x}^\nu f_E \frac{\sqrt{-\det g}}{\dot{x}_0} d^3\dot{x},
\end{equation}
where $d^3\dot{x} =d\dot{x}^1\wedge d\dot{x}^2\wedge d\dot{x}^3$,
via
\begin{align}\label{Stress-Energy conversion}
    T_E^{\mu\nu} = \star (dx^\mu \wedge \tau_E{}_{dx^\nu})
\end{align}
where $\star$ is the Hodge dual. See \cref{Lem EV Stress-Energy} in the appendix.

This can be placed on the right hand side of Einstein's equations to complete the Einstein-Vlasov system. 
To see that the above expression yields the correct relationship for an Einstein-Vlasov system. 

Due to the similarities between the definitions of the current $(n{-}1)$--form (\cref{Current 3--form}) and the stress-energy $(n{-}1)$--form (\cref{Stress-Energy 3-form}), it may be tempting to try and define the stress-energy $(n{-}1)$--form on $U$ according to
\begin{equation}\label{Stress-Energy 3-form U}
\tau= \pi_\varsigma (\chi\wedge \hat{\alpha}\theta).
\end{equation}
Unfortunately, unlike for $\mathcal{J}$, the stress-energy $(n{-}1)$--form depends on the choice of $\chi$.
This is because $d\left( \hat{\alpha} \theta \right) \neq 0$, so that \cref{Lem J eq JE} does not apply.
It can also be shown that $\tau_E{}_\alpha$, \cref{Stress-Energy 3-form}, is dependent on the choice of $E$.

Since the usual stress-energy forms is $\tau_{\EH\alpha}$, this makes the stress-energy tensor a metric quantity. 
This  stress-energy form is needed for the Einstein-Vlasov system. It is still advantageous to use the formalism in this article. First it makes it clear how the Vlasov equations i.e. the Vlasov bivector and the transport equations depend on the metric. This may be less explicit in the usual treatment. This is especially relevant if one need to vary the metric, as is done in \cite{andersson_variational_2019}, for instance. 

Another use, of our approach, is when considering non-metric compatible connections. As stated the trajectories no longer remain on $\EH$, \cref{Fig Mass Shell Slip}. However using the the Vlasov bivector and the particle density form $\theta$, this is no longer a problem. After calculating $\theta$, one can choose $\EH$ to calculate $\tau_{\EH\alpha}$.

It is an open question to see if an object similar to the stress-energy form can be defined which does not depend on the choice of $E$ or $\chi$.

\section{Conclusion}\label{Sec Conclusion}

In this article we presented an alternative way of representing the Vlasov equation, which did not rely on a kinematic domain. This corresponds to not prescribing a parameterisation for the underlying 2nd order ODE. For this we had to use many new concepts not normally associated with the Vlasov equation: We replaced 1--dimensional prolongations of trajectories on $E$, with 2--dimensional leaves on $U$; the Vlasov vector field with a Vlasov bivector, and extended the particle density form from $E$ to $U$. We give the formula for the current form on $M$ which depends only on the particle density form. We also give the formula for the stress-energy forms, which depend both on the particle density form and the choice of kinematic domains, or alternatively the support form $\chi$. Using the stress-energy forms, we can investigate the full Einstein-Vlasov system. 
In order to do this we have summarised the results for the standard Vlasov equation and given the relationship to sprays and semi-sprays.  

This approach has a number of advantages detailed in the introduction \cref{Sec: Intro Advantages}. These include wider applicability such as not requiring time orientation, non-metric compatible connections and pre-metric electrodynamics. Another advantage is that the formula for going from one kinematic domain to another falls out trivially. 

There are a number of applications for this approach. 
One can investigate the Vlasov equation for lightlike particles. This will enable one to extend the ultra-relativistic approximation \cite{burton_asymptotic_2007} from fluids to the Vlasov equation. This is particularly relevant when considering particles in particle accelerators as well as astrophysical plasmas near black holes and neutron stars. 

There is clearly a deep relationship between the work presented here and the idea of trajectory and solutions of 2nd order ODEs which do not have a parameter prescribed. In a follow-up article, we show how to define such trajectories and their connection to the leaves of the Vlasov bivector.

Other directions one may consider are to look at the relationship of this work with jet bundles and Finsler geometry. One can also look at how to generalise the Vlasov equation and the Bolzmann equation. In this latter case we replace the transport equation \cref{PF Transport Equations}, with $\text{Null}(\Psi,\theta)$ and $d\pi_\varsigma\theta=0$.

In summary we argue that the Vlasov bivector is the fundamental object to describe kenetic system, since it is invariant under reparameterisation.

\begin{appendices}

\section{Appendices}
\subsection{Sprays and Semi-Sprays} \label{Spray Section}
Vlasov fields can be formulated in terms of sprays and semi-sprays. 
For a detailed discussion of sprays the reader is directed towards \cite{shen_differential_2001} and \cite{miron_geometry_2002}.
For our purposes we define a spray as follows.
\begin{definition}[Spray]
A \textit{spray} on a smooth manifold $N$ is a smooth vector field  $X\in \Gamma T(\breve{T}N)$ which is expressed in local adapted coordinates $(x^\mu,\dot{x}^\mu)$ as 
\begin{equation}
X=\dot{x}^\mu\frac{\partial\;}{\partial x^\mu} +X^\mu\frac{\partial\;}{\partial \dot{x}^\mu},
\end{equation}
where $X^\mu= X^\mu(x,\dot{x})$ are local functions on $\breve{T}N$ satisfying
\begin{equation}
X^\mu\vert_{\lambda \underline{u}}=\lambda^2 X^\mu\vert_{\underline{u}},\; \lambda>0,\; \underline{u}\in \breve{T}N.
\end{equation}
In the case where $\breve{T}N$ is a conic bundle this can be shown to be equivalent to \cref{Def Vlasov Field on U}.
\end{definition}

The trajectories of a spray are defined in the same way as described in \cref{Subsec ICs}, satisfying the equation
\begin{equation}\label{Spray SODE}
    \frac{d^2C^\mu}{dt^2}  = X^\mu\left( C^\mu(t),\frac{dC^\mu}{dt} \right).
\end{equation}

\begin{definition}[Projectively related Sprays]\label{Def Projectively Related}
Two sprays $X$ and $\hat{X}$ are \textit{projectively related} if they have the same trajectories as point sets. 
That is, if $C(t)$ is a trajectory of $X$, then there exists a reparameterisation $t=t(s)$ such that $C(s):=C(t(s))$ is a geodesic of $\hat{X}$ (and vice versa).
\end{definition}

\begin{lemma}\label{Projectively Related Lemma}
Two sprays $X$ and $\hat{X}$ are projectively related if and only if there exists a 1--homogeneous scalar field $k\in\Gamma\Lambda^0 (\breve{T}N)$ such that
\begin{equation}\label{PR Eqn}
\hat{X}= X+k \mathcal{R},
\end{equation}
where $\mathcal{R}$ is the radial vector field on $\breve{T}N$.
\end{lemma}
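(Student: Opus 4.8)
The plan is to prove the two implications separately, and to observe that one of them is already available. For the ``if'' direction---that \cref{PR Eqn} implies projective relatedness---nothing essentially new is needed: since $k$ is $1$--homogeneous and $\mathcal{R}$ is the radial vector field, \cref{Lem Trajectories} already shows that $X$ and $\hat{X}=X+k\mathcal{R}$ have the same trajectories up to a reparameterisation, which is exactly the content of \cref{Def Projectively Related}. So I would simply invoke \cref{Lem Trajectories} here and move on.

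The substance is the converse. Suppose $X$ and $\hat{X}$ are projectively related. Because both are sprays they share the horizontal part $\dot{x}^\mu\partial_\mu^{(x)}$ in adapted coordinates, so their difference is purely vertical, $\hat{X}-X=(\hat{X}^\mu-X^\mu)\,\partial_\mu^{(\dot{x})}$. First I would fix an arbitrary $\underline{u}\in\breve{T}N$ and let $C(t)$ be the trajectory of $X$ whose prolongation passes through $\underline{u}$, so $\ddot{C}^\mu=X^\mu(C,\dot{C})$ by \cref{Spray SODE}. Projective relatedness supplies an orientation-preserving reparameterisation $t=t(s)$ for which $\tilde{C}(s)=C(t(s))$ is a trajectory of $\hat{X}$. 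Differentiating twice, using $\tilde{C}''=\ddot{C}\,(t')^2+\dot{C}\,t''$ and the $2$--homogeneity of $\hat{X}^\mu$ in the velocity (valid since $t'=dt/ds>0$) to write $\hat{X}^\mu(C,\dot{C}\,t')=(t')^2\hat{X}^\mu(C,\dot{C})$, then dividing by $(t')^2$, yields
\[
\hat{X}^\mu(C,\dot{C})-X^\mu(C,\dot{C})=\kappa\,\dot{C}^\mu,\qquad \kappa=\frac{d^2t}{ds^2}\Big(\frac{dt}{ds}\Big)^{-2}.
\]

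Evaluating at the parameter value corresponding to $\underline{u}$ gives $\hat{X}^\mu|_{\underline{u}}-X^\mu|_{\underline{u}}=\kappa\,\dot{x}^\mu|_{\underline{u}}$, i.e.\ the vertical difference is pointwise proportional to $\mathcal{R}=\dot{x}^\mu\partial_\mu^{(\dot{x})}$. Since $\underline{u}$ was arbitrary, and through every point of $\breve{T}N$ there passes such a trajectory by existence of solutions to the spray ODE, this proportionality holds at every point; I then define $k$ by $\hat{X}^\mu-X^\mu=k\,\dot{x}^\mu$, so $\hat{X}=X+k\mathcal{R}$. To finish I would check that $k$ is a genuine scalar field of the right type: it is smooth because $\mathcal{R}$ is nowhere vanishing on the slit bundle (so dividing the smooth vertical difference by the nowhere-zero $\dot{x}^\mu$ is legitimate), and it is $1$--homogeneous because $\hat{X}^\mu-X^\mu$ is $2$--homogeneous while $\dot{x}^\mu$ is $1$--homogeneous. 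The main obstacle I anticipate is precisely this last bookkeeping: confirming that the pointwise constant $\kappa$, which a priori is defined only along individual trajectories, assembles into a single globally-defined, smooth, $1$--homogeneous function on $\breve{T}N$; the nowhere-vanishing of $\mathcal{R}$ handles smoothness and the homogeneity degrees of the spray coefficients handle the homogeneity, so making these two points precise is the crux.
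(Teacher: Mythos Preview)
Your proposal is correct. The ``if'' direction matches the paper exactly: both invoke \cref{Lem Trajectories}. For the converse, the paper does not give its own argument at all---it simply cites Shen's \emph{Differential Geometry of Spray and Finsler Spaces}, pp.~173--174, so your explicit computation via the chain rule and the $2$--homogeneity of the spray coefficients is more self-contained than what appears in the paper. Your closing bookkeeping (smoothness from the nowhere-vanishing of $\mathcal{R}$ on the slit bundle, $1$--homogeneity from the degree count) is exactly what is needed and is sound.
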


\begin{proof}
See Z. Shen, Differential Geometry of Spray and Finsler Spaces, pages 173-174 \cite{shen_differential_2001}.
\cref{Lem Trajectories}
\end{proof}

In the literature a semi-spray is defined similarly to a spray on $\breve{T}N$, only without the homogeneity property.
For our purposes however, it is productive to define sprays on some hypersurface $K\subset \breve{T}N$.
We restrict our attention to hypersurfaces defined similarly to lab time bundles.

\begin{definition}[Semi-Spray]
Given $s\in\Gamma\Lambda^0N$ let $K=\{ \underline{u}\in \breve{T}N : \dot{s}\vert_{\underline{u}}=1\}$.
A \textit{semi-spray} $X_K\in\Gamma \breve{T}K$ is a vector field given by
\begin{equation}\label{Eqn SS Coords}
X_K=\frac{\partial\;}{\partial s}+ \dot{x}^{a} \frac{\partial\;}{\partial x^{a}}+ X_K^{a}(s,x^{a},\dot{x}^{a})\frac{\partial\;}{\partial \dot{x}^{a}} .
\end{equation}
Unlike a spray, there are no homogeneity conditions on $X_K^{a}$.
The semi-spray $X_K$ corresponds to a set of ordinary differential equations locally expressed as
\begin{equation}\label{Semi-Spray SODE}
\frac{d^2f^{a}}{ds^2}= X_K^{a}\bigg{(} s,f^{a}, \frac{df^{a}}{ds} \bigg{)} .
\end{equation}
Similarly to the case with a spray, $f(s)$ is a solution to \cref{Semi-Spray SODE} if and only if its lift $\dot{f}(s)=(1,f^{a}(s), df^{a}(s)/ds)$ is an integral curve of $X_K$.
\end{definition}

Observe that in the case where $K=\ELab{s}\subset U$, we may identify \cref{Eqn SS Coords} with \cref{Eqn LT Vlasov Coord}, a Vlasov field on a lab time bundle (\cref{Eqn Intro ELab}).

Given such a semi-spray on $K$ we may construct a spray on $\breve{T}N$ and vice versa.
The full details of the lemma can be found in \cite{shen_differential_2001}.
An example of this lemma in action is given below.
If we are given a semi-spray determined by coefficients $X_K^{a}$ over $K$ (equipped with a choice of parameterisation $s$) then we may construct a spray over $\breve{T}N$ with the following coefficients
\begin{equation}\label{SS-S Construction}
\begin{cases}
{}& X^0(x,\dot{x})=0\\
{}& X^{a}(x,\dot{x})= \dot{x}^0\dot{x}^0 X_K^{a}(x^0,x^{a},\dot{x}^{a}/\dot{x}^0),
\end{cases}
\end{equation}
where $s=x^0$.
Note that under this construction, $X_K$ is induced by $X$.
This is an example of the quadratic extension described in \cref{Lem WE to W}.
The freedom to choose $X^0$ in \cref{SS-S Construction} roughly corresponds to the freedom to reparametrise the spray according to \cref{PR Eqn}.
In the instance where we restrict ourselves to a Vlasov field on a lab time then \cref{Eqn W from WE I} can be identified with \cref{SS-S Construction}.

\subsection{Auxiliary Proofs}\label{Sect Proofs}
\begin{lemma}\label{Lem WE Tangent}
Let $X\in\Gamma TN,\; f\in\Gamma \Lambda^0N$ such that $df\neq 0$, and $f^{-1}\{0\}=K\subset N$ with $\Sigma_K\colon K\hookrightarrow N$.
We have that $X\vert_K\langle f \rangle=0$ if and only if there exists a unique $Y\in\Gamma TK$ such that $X\vert_K= \Sigma_K{}_\ast Y$ (i.e. $X$ is tangent to $K$).
\end{lemma}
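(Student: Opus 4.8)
The plan is to reduce the statement to the standard fact that, for a regular level set, the tangent space to the level set coincides with the kernel of the differential of the defining function. First I would observe that since $df\neq 0$ on $K$, the value $0$ is a regular value of $f$, so $K=f^{-1}\{0\}$ is an embedded submanifold of $N$ of codimension one. Consequently $\Sigma_K$ is a genuine smooth embedding, and its pushforward $\Sigma_K{}_\ast\colon T_pK\to T_pN$ is injective at each point $p\in K$. This injectivity is exactly what will deliver the uniqueness of $Y$ once existence is established.

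The key identification I would prove is that $\Sigma_K{}_\ast(T_pK)=\{\underline{v}\in T_pN : \underline{v}\langle f\rangle=0\}$. For one inclusion, note that $f\circ\Sigma_K=0$ since $f$ vanishes on $K$, so by the definition of the pushforward in \cref{Def Pull and Push}, for any $Y_p\in T_pK$ we have $(\Sigma_K{}_\ast Y_p)\langle f\rangle=Y_p\langle f\circ\Sigma_K\rangle=Y_p\langle 0\rangle=0$. Hence $\Sigma_K{}_\ast(T_pK)$ is contained in the kernel of $df_p$. Since $\Sigma_K{}_\ast$ is injective with $\dim T_pK=\dim N-1$, and since $df_p\neq 0$ forces $\dim\ker(df_p)=\dim N-1$, a dimension count upgrades this inclusion to an equality.

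With the identification in hand, both directions are immediate. For the reverse implication, if $X\vert_K=\Sigma_K{}_\ast Y$ for some $Y\in\Gamma TK$, then at each $p\in K$ we have $X\vert_p\langle f\rangle=(\Sigma_K{}_\ast Y\vert_p)\langle f\rangle=0$, so $X\vert_K\langle f\rangle=0$. For the forward implication, if $X\vert_K\langle f\rangle=0$ then $X\vert_p\in\ker(df_p)=\Sigma_K{}_\ast(T_pK)$ for each $p$, so by injectivity of $\Sigma_K{}_\ast$ there is a unique $Y\vert_p\in T_pK$ with $\Sigma_K{}_\ast(Y\vert_p)=X\vert_p$; this is precisely the assertion that $X$ is tangent to $K$ in the sense of \cref{Def Tangent}, and the pointwise uniqueness of $Y\vert_p$ gives uniqueness of $Y$.

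The one step that requires genuine care — and the main obstacle — is verifying that the pointwise-defined section $Y$ is actually smooth, i.e. that $Y\in\Gamma TK$ rather than merely a rough assignment of tangent vectors. I would settle this locally, using a slice chart $(x^1,\ldots,x^{n-1},f)$ adapted to $K$ near each point, which exists because $df\neq 0$; in such a chart $K$ is the slice $\{f=0\}$ and $\Sigma_K{}_\ast$ carries $\partial_{x^a}$ on $K$ to the corresponding coordinate field on $N$. Writing $X=X^a\,\partial_{x^a}+X^f\,\partial_f$ with smooth components, the hypothesis $X\vert_K\langle f\rangle=0$ says exactly that $X^f$ vanishes on $K$, so $X\vert_K=\bigl(X^a\vert_K\bigr)\partial_{x^a}$ and $Y=\bigl(X^a\vert_K\bigr)\partial_{x^a}$ has smooth components. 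Since smoothness is local, this exhibits $Y$ as a genuine smooth vector field on $K$ and completes the argument.
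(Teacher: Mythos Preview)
Your proof is correct and follows essentially the same approach as the paper: the reverse implication uses $f\circ\Sigma_K=0$ together with the pushforward definition, and the forward implication (including smoothness) is handled via an adapted coordinate chart with $f$ as one of the coordinates. Your version is somewhat more explicit about the dimension count identifying $\Sigma_K{}_\ast(T_pK)$ with $\ker(df_p)$ and about why smoothness needs attention, but the underlying argument is the same.
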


\begin{proof}
Suppose $X$ is tangent to $K$ then we have
\begin{equation*}
X\vert_K\langle f \rangle = \Sigma_K{}_\ast Y\langle f \rangle= Y\langle f \circ \Sigma_K \rangle= Y\langle 0 \rangle =0.
\end{equation*}
Suppose now that $X\vert_K\langle f \rangle=0$.
If a suitable vector field exists it is unique by the injectivity of $\Sigma_K$.
Let $\ell$ be the dimension of $N$.
Since $df\neq0$ there exists a local coordinate system $\{ x^0=f,x^1,..,x^{\ell-1} \}$ where we may express $X$ as $X=X^\mu\partial_\mu^{(x)}$, where $\mu=0,...,\ell-1$. 
Since $X\langle f\rangle =0$, we have $X^0=X\langle x^0\rangle =0$.
In this coordinate system we also have $\Sigma_K: (x^0,...,x^{\ell-1})\mapsto (0,x^1,...,x^{\ell-1})$.
This allows us to define a vector field $Y\in\Gamma TK$ where locally $Y=Y^a\partial_a^{(x)},\; Y^a=\Sigma_K^\ast X^a$, for $a=1,...,\ell-1$.
It follows that $X\vert_K =\Sigma_K{}_\ast Y$ and hence $X$ is tangent to $K$.
\end{proof}

\begin{lemma}\label{Lem Non Metric Vlasov}
    Given a spacetime manifold $M$ with a pseudo-Reimann metric $g$, there exists a Vlasov field  $W\in\Gamma TU$ constructed from a force equation involving a non-metric compatible connection $\hat{\nabla}$ i.e. in local coordinates $(x,\dot{x})$,
    \begin{equation}\label{Eqn Nonmetric Vlasov}
        \begin{split}
        {}&\hat{\nabla}_{\dot{C}}\dot{C}=
        \widetilde{i_{\dot{C}}\mathcal{F}},\\
        {}&W= \dot{x}^\mu\partial_\mu^{(x)}+ \left( g^{\mu\nu}\mathcal{F}_{\nu\rho}\dot{x}^\rho - \hat{\Gamma}^\mu_{\rho\nu} \dot{x}^\nu\dot{x}^\rho \right) \partial_\mu^{(\dot{x})},
        \end{split}
    \end{equation}
    where $\dot{C}=C_\ast(\partial_\tau)$ and $\tau$ is the chosen parameterisation of the trajectories,
    such that the integral curves of $W$ will not lie on $\EH$.
\end{lemma}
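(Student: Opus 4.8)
The plan is to reduce the statement to a single pointwise inequality on $\EH$ and then exhibit an explicit connection that realises it. Recall from \cref{Eqn Fkin UH} that $\FkinUH\vert_{\underline u}=-g(\underline u,\underline u)$, so $\EH$ is the level set $\{\FkinUH=1,\ \sigma=1\}$. By \cref{Lem WE Tangent} (applied with $f=\FkinUH-1$, noting $df\neq0$ on the hyperboloid since $\partial^{(\dot x)}_\mu\FkinUH=-2g_{\mu\beta}\dot x^\beta\neq0$), a Vlasov field $W$ is tangent to $\EH$ — equivalently, its integral curves that start on $\EH$ remain there — precisely when $W\langle\FkinUH\rangle=0$ on $\EH$. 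Since $\tfrac{d}{d\tau}\FkinUH=W\langle\FkinUH\rangle$ along any integral curve $\eta$, it therefore suffices to produce one connection for which $W\langle\FkinUH\rangle\vert_{\underline u}\neq0$ at some $\underline u\in\EH$: at such a point $\FkinUH$ is strictly monotone in $\tau$, so $\eta$ immediately leaves the level set $\{\FkinUH=1\}=\EH$.

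Next I would compute $W\langle\FkinUH\rangle$ for a general $W$ of the displayed form. Writing $\FkinUH=-g_{\mu\nu}\dot x^\mu\dot x^\nu$ and expanding $W=\dot x^\mu\partial^{(x)}_\mu+\varphi^\mu\partial^{(\dot x)}_\mu$, a direct calculation gives
\[
W\langle\FkinUH\rangle=-\dot x^\mu\dot x^\alpha\dot x^\beta\,\partial_\mu g_{\alpha\beta}-2g_{\mu\beta}\dot x^\beta\varphi^\mu .
\]
Substituting $\varphi^\mu=g^{\mu\nu}\mathcal F_{\nu\rho}\dot x^\rho-\hat\Gamma^\mu_{\rho\nu}\dot x^\nu\dot x^\rho$, the Faraday contribution becomes $-2\dot x^\nu\mathcal F_{\nu\rho}\dot x^\rho$, which vanishes by the antisymmetry of $\mathcal F$; this is the geometric reason the electromagnetic force never pushes trajectories off the mass shell. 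Using the symmetry of $\dot x^\rho\dot x^\nu\dot x^\beta$, the remaining terms reorganise into $W\langle\FkinUH\rangle=-\dot x^\rho\dot x^\nu\dot x^\beta\,\hat\nabla_\rho g_{\nu\beta}$, i.e. minus the non-metricity tensor contracted three times with the velocity. For a metric-compatible connection $\hat\nabla g=0$ and this vanishes identically (recovering the familiar fact that the mass shell is preserved), so the obstruction is exactly the non-metricity.

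Finally I would exhibit the example. Take $M$ Minkowski with $g=\operatorname{diag}(-1,1,1,1)$, set $\mathcal F=0$ (so $W$ is the genuine autoparallel spray, with $\varphi^\mu$ manifestly $2$-homogeneous and hence a bona fide Vlasov field on $U$ in the sense of \cref{Def Vlasov Field on U}), and let $\hat\nabla$ be the connection whose only non-vanishing Christoffel symbol is $\hat\Gamma^0_{00}=c\neq0$. Then the non-metricity has the single component $\hat\nabla_0 g_{00}=-2\hat\Gamma^0_{00}g_{00}=2c$, so $W\langle\FkinUH\rangle=-2c(\dot x^0)^3$. Evaluating at the rest velocity $\underline u=\partial_0\in\EH$ (where $\dot x^0=1$) yields $W\langle\FkinUH\rangle\vert_{\underline u}=-2c\neq0$; by the first step the integral curve through $\partial_0$ leaves $\EH$, and $\hat\nabla$ is non-metric since $\hat\nabla g\neq0$, proving existence.

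The main obstacle is the bookkeeping in the middle step: collapsing the full cubic expression cleanly to $-\dot x^\rho\dot x^\nu\dot x^\beta\hat\nabla_\rho g_{\nu\beta}$ and, in particular, verifying that the Faraday term drops out, since this is precisely what isolates non-metricity as the sole cause of the drift. A secondary point worth stating explicitly is that one must keep $\mathcal F=0$ (or else restore the $\sigma\sqrt{\FkinUH}$ factor of \cref{Eqn LF Vlasov}) so that $W$ is honestly radially quadratic; with $\mathcal F=0$ this is automatic, which is why the autoparallel example is the cleanest route to the claim.
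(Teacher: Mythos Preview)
Your proof is correct and follows the same strategy as the paper: both reduce the claim, via \cref{Lem WE Tangent}, to showing $W\langle\FkinUH\rangle\neq0$ somewhere on $\EH$, and both identify this quantity with the non-metricity $\hat Q=\hat\nabla g$ contracted three times with the velocity, the Faraday contribution dropping out by antisymmetry. The paper carries out the computation covariantly, writing $W|_{\dot C}\langle\FkinUH\rangle=\hat\nabla_{\dot C}\big(g(\dot C,\dot C)\big)=\hat Q(\dot C,\dot C,\dot C)+g(\hat\nabla_{\dot C}\dot C,\dot C)$ and then invoking the force equation to kill the second term, whereas you expand the same identity in coordinates; these are two presentations of one calculation. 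Your version adds two points the paper leaves implicit: you supply an explicit connection (Minkowski with $\hat\Gamma^0_{00}=c$, $\mathcal F=0$) witnessing $\hat Q(\dot C,\dot C,\dot C)\neq0$ rather than merely asserting such a $\dot C$ exists, and you correctly flag that the displayed $W$ is only radially quadratic when $\mathcal F=0$ or when the $\sigma\sqrt{\FkinUH}$ factor of \cref{Eqn LF Vlasov} is restored.
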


\begin{proof}
    Let $\nabla$ denote the Levi-Civita connection built from $g$.
    Since $\hat{\nabla}$ is non-metric compatible, it has non-vanishing non-metricity:
    \begin{equation*}
        \hat{Q}= \hat{\nabla} g \neq 0.
    \end{equation*}
    Letting $\FkinUH$ denote the \Fkin of $\EH$.
    We observe that $W$ is not tangent to $\EH$ since there exists $\dot{C}$ such that
    \begin{equation*}
        \begin{split}
            W\vert_{\dot{C}}\langle \FkinUH \rangle=& \dot{C}_\ast (\partial_\tau) \langle \FkinUH \rangle 
            = \partial_\tau\langle \FkinUH\circ \dot{C}\rangle\\
            =& \partial_\tau C^\ast\big{(}g(\dot{C}, \dot{C})\big{)}
            = \dot{C}\langle g(\dot{C},\dot{C}) \rangle\\ 
            =& \hat{\nabla}_{\dot{C}}\big{(} g(\dot{C},\dot{C})\big{)} 
            = \hat{Q}(\dot{C},\dot{C},\dot{C}) + g(\hat{\nabla}_{\dot{C}}\dot{C},\dot{C})\\
            =& \hat{Q}(\dot{C},\dot{C},\dot{C}) + g\left(\widetilde{i_{\dot
            C}\mathcal{F}},\dot{C}\right)\\
            =& \hat{Q}(\dot{C},\dot{C},\dot{C}) + i_{\dot{C}}i_{\dot{C}}\mathcal{F}\\
            =& \hat{Q}(\dot{C},\dot{C},\dot{C}) \neq0,
        \end{split}
    \end{equation*}
    where $\tau$ is a parameter and $\dot{C}={C_\ast(\partial_\tau)}$.
    Hence $W$ is not tangent to $\EH$ by \cref{Lem WE Tangent}, and consequently, its integral curves will not remain on $\EH$.
    Conversely, if we replace $\hat{\nabla}$ with the Levi-Civita  connection $\nabla$ in \cref{Eqn Nonmetric Vlasov}, then we may see that $W\vert_{\dot{C}}\langle \FkinUH \rangle =0$ since $\nabla g=0$.
    Hence $W$ is tangent to $\EH$ and consequently its integral curves are confined to $\EH$.
\end{proof}

\begin{lemma}\label{Lem Null Geo}
    Let $M$ be a Minkowsky spacetime manifold of dimension $2$ and let $ s \in\Gamma\Lambda^0M $ define a lab time function such that $\ELab{ s }$ defines a lab time bundle (see \cref{Eqn Intro ELab}).
    The null geodesics parametrised by the induced lab time coordinate in general do not satisfy the geodesic equation.
    Consequently, the prolongations of lab time parametrised curves must be expressed in terms of the pre-geodesic equation.
\end{lemma}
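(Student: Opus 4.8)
The plan is to work in global inertial coordinates, exhibit the geodesic equation explicitly, and then show by a direct reparametrisation that any \emph{nonlinear} lab time function forces an extra velocity-proportional term, which I will make nonzero with one concrete choice. First I would fix coordinates $(x^0,x^1)$ on the $2$-dimensional Minkowski manifold $M$ with $g=-(dx^0)^2+(dx^1)^2$. In these coordinates every Christoffel symbol vanishes, so by \cref{Eqn Trajectory ODE} the geodesic equation is simply $\ddot C^\mu=0$; hence, up to affine reparametrisation, every null geodesic is a straight line $C^\mu(\lambda)=a^\mu+b^\mu\lambda$ with $-(b^0)^2+(b^1)^2=0$, i.e. $b^1=\pm b^0$.

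Next I would reparametrise such an affinely parametrised null geodesic by the lab time $s$. Setting $s(\lambda)=s\big(C(\lambda)\big)$ and inverting to $\lambda(s)$, the chain rule together with $\ddot C^\mu=0$ yields
\[
\frac{d^2C^\mu}{ds^2}=k\,\frac{dC^\mu}{ds},\qquad k=-\frac{\ddot s}{(\dot s)^2},
\]
where dots denote $\lambda$-derivatives, and, because $\ddot C^\mu=0$, one has $\dot s=\partial_\sigma s\,\dot C^\sigma$ and $\ddot s=\partial_\nu\partial_\rho s\,\dot C^\nu\dot C^\rho$. This is precisely the pre-geodesic form, $k$ being the $1$-homogeneous coefficient of the $\mathcal{R}$-term of \cref{Lem Trajectories}; the same $k$ may alternatively be read off from \cref{Magic Formula} applied to the affine geodesic field with $\hat F=\dot s$, as $k=-W\langle\dot s\rangle/\mathcal{R}\langle\dot s\rangle$. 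The bare geodesic equation $\ddot C^\mu=0$ is recovered if and only if $\ddot s=0$ along the geodesic, i.e. iff $\partial_\nu\partial_\rho s\,\dot C^\nu\dot C^\rho=0$; this holds for a linear (in particular the standard Minkowski-time) $s$, but fails for a curved choice.

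Finally I would show $k$ is not identically zero for an admissible lab time function. Take $s=x^0+\tfrac12\epsilon(x^1)^2$ with $\epsilon\neq0$, so that $ds=dx^0+\epsilon x^1\,dx^1\neq0$, and, for future-pointing null $\underline v=v^0(\partial_0\pm\partial_1)$, $\underline v\langle s\rangle=v^0(1\pm\epsilon x^1)>0$ on the strip $|x^1|<1/|\epsilon|$; thus $s$ is a genuine lab time function there in the sense of \cref{Eqn Intro ELab}. For the rightward null geodesic with $\dot C=(1,1)$ the only surviving second derivative is $\partial_1\partial_1 s=\epsilon$, so $\ddot s=\epsilon\neq0$ and therefore $k\neq0$. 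Hence the $s$-parametrised null geodesic obeys $\ddot C^\mu=k\,\dot C^\mu$ with $k\neq0$, not $\ddot C^\mu=0$, which is exactly the claim that lab time parametrisation requires the pre-geodesic equation.

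The main obstacle is not the computation but the bookkeeping needed to certify admissibility of $s$: one must simultaneously guarantee $ds\neq0$, the future-pointing positivity $\underline v\langle s\rangle>0$, and $\dot s\neq0$ (so that $\lambda\leftrightarrow s$ is a diffeomorphism along the geodesic and $k$ is well defined). Restricting attention to the strip $|x^1|<1/|\epsilon|$ secures all three at once, so the example is clean and the conclusion follows.
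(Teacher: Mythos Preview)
Your proof is correct and follows the same underlying idea as the paper: both show that reparametrising an affine null geodesic in $2$-dimensional Minkowski space by a lab time $s$ produces a pre-geodesic term proportional to the velocity, which vanishes only when $s$ is affine in the inertial coordinates. The paper parametrises directly by $\tau$ with $\dot C\langle s\rangle=1$, writes the null line $t\circ C=x\circ C$, and argues that $\nabla_{\dot C}\dot C\langle s\rangle=\frac{d}{d\tau}\big(\partial_x s+\partial_t s\big)^{-1}\partial_x s$ is nonzero ``in general since there is no specific relationship between $s$ and $t,x$''. You instead start from the affine parametrisation, derive the reparametrisation coefficient $k=-\ddot s/(\dot s)^2$ explicitly, and then exhibit the concrete lab time $s=x^0+\tfrac12\epsilon(x^1)^2$, checking its admissibility on the strip $|x^1|<1/|\epsilon|$. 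Your route has the advantage of making the ``in general'' claim fully rigorous via an explicit witness, and of tying the computation cleanly to \cref{Lem Trajectories} and \cref{Magic Formula}; the paper's version is shorter but leaves the existence of a nonvanishing example implicit.
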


\begin{proof}
    Let $C\colon \mathcal{I}\hookrightarrow M$ be a trajectory parametrised by $\tau\in\Gamma\Lambda^0\mathcal{I}$.
    Then the prolongation satisfies 
    \begin{equation*}
        \dot{C}=C_\ast(\partial_\tau), \quad \dot{C}\langle  s  \rangle= \frac{d ( s \circ C)}{d \tau} =1.
    \end{equation*}
    Let $(t,x)$ define coordinates on $M$.
    Null trajectories in Minkowsky spacetime form straight lines, for example,
    \begin{equation*}
        (t\circ C)(\tau)=(x\circ C)(\tau).
    \end{equation*}
    The trajectory above satisfies
    \begin{equation*}
        \begin{split}
            \frac{\partial ( s \circ C)}{\partial \tau}=& \frac{\partial (x\circ C)}{\partial \tau} \frac{\partial  s }{\partial x}+ \frac{\partial (t\circ C)}{\partial \tau} \frac{\partial  s }{\partial t}\\
            =& \frac{\partial (x\circ C)}{\partial \tau} \left( \frac{\partial  s }{\partial x}+  \frac{\partial  s }{\partial t}\right)=1
        \end{split}
    \end{equation*}
    By defining $X=x\circ C$ and $T=t\circ C$ we can write
    \begin{equation*}
        \begin{split}
            {}& \dot{X}=  \left( \frac{\partial  s }{\partial x}+ \frac{\partial  s }{\partial t} \right)^{-1} \frac{\partial\;}{\partial x},\\
            {}& \dot{T}=  \left( \frac{\partial  s }{\partial x}+ \frac{\partial  s }{\partial t} \right)^{-1} \frac{\partial\;}{\partial t}.
        \end{split}
    \end{equation*}
    It follows that
    \begin{equation*}
        \begin{split}
            \nabla_{\dot{X}}\dot{X}\langle s \rangle= \frac{d\;}{d\tau}\left( \frac{\partial  s }{\partial x}+ \frac{\partial  s }{\partial t} \right)^{-1} \frac{\partial s}{\partial x} \neq 0,
        \end{split}
    \end{equation*}
    in general since there is no specific relationship between $s$ and $t,x$.
    The same result follows for $\dot{T}$.
    Hence the null trajectories parametrised by the lab time function cannot obey the geodesic equation, they must obey a pre-geodesic equation.
    
    If however we choose to parametrise the null geodesics by the Minkowsky coordinate $t$ then the above equation does satisfy $\nabla_{\dot{X}}\dot{X}=\nabla_{\dot{T}}\dot{T}=0$.
\end{proof}

\begin{lemma}
Two pairs of vector fields, $X_1,X_2\in\Gamma TN$ and $Y_1,Y_2\in\Gamma TN$, are specially related if and only if $X_1\wedge X_2= Y_1\wedge Y_2$. 
\end{lemma}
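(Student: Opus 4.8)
The plan is to prove the two implications separately: the forward direction (\emph{specially related} $\Rightarrow$ equal bivectors) is a one-line computation, while the converse requires showing that a nonzero simple bivector determines a single $2$-plane in each tangent space. First I would dispatch the easy direction. Assuming $X_1,X_2$ and $Y_1,Y_2$ are specially related with $\alpha,\beta,\gamma,\delta\in\Gamma\Lambda^0N$ satisfying $\alpha\delta-\beta\gamma=1$, I would expand $Y_1\wedge Y_2=(\alpha X_1+\beta X_2)\wedge(\gamma X_1+\delta X_2)$ using bilinearity and antisymmetry of the wedge. The $X_1\wedge X_1$ and $X_2\wedge X_2$ terms vanish, leaving $(\alpha\delta-\beta\gamma)\,X_1\wedge X_2=X_1\wedge X_2$, as required.

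For the converse I would set $\Phi:=X_1\wedge X_2=Y_1\wedge Y_2$ and work pointwise, establishing that this common simple bivector determines a single $2$-plane at each point. The key observation is that for any $1$-form $\omega\in\Gamma\Lambda^1N$ the vector field $(\omega\colon X_1)\,X_2-(\omega\colon X_2)\,X_1$ obtained by contracting $\Phi$ with $\omega$ lies in $\mathrm{span}(X_1,X_2)$, and, provided $X_1,X_2$ are pointwise linearly independent, letting $\omega$ range over all $1$-forms recovers every element of $\mathrm{span}(X_1,X_2)$ (choosing $\omega$ dual to $X_1$ or $X_2$ returns $X_2$ or $-X_1$). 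Since the identical construction applied to $Y_1\wedge Y_2$ recovers $\mathrm{span}(Y_1,Y_2)$, the equality $\Phi=X_1\wedge X_2=Y_1\wedge Y_2$ forces $\mathrm{span}(X_1,X_2)=\mathrm{span}(Y_1,Y_2)$ at every point.

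Consequently each $Y_i$ is a pointwise linear combination of $X_1$ and $X_2$, so there exist functions $\alpha,\beta,\gamma,\delta$ with $Y_1=\alpha X_1+\beta X_2$ and $Y_2=\gamma X_1+\delta X_2$. Smoothness of these coefficients follows by completing $X_1,X_2$ to a local frame of $TN$ and reading off the first two components of $Y_1,Y_2$ through the dual coframe (Cramer's rule), which are smooth. Substituting into the wedge and using antisymmetry gives $Y_1\wedge Y_2=(\alpha\delta-\beta\gamma)\,X_1\wedge X_2$; comparing with $Y_1\wedge Y_2=X_1\wedge X_2\neq0$ yields $\alpha\delta-\beta\gamma=1$, exactly the special-relation condition.

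The forward implication is routine; the substance lies in the converse, where the genuine work is the uniqueness-of-plane argument together with the smoothness of the extracted coefficients. The main obstacle, and a point I would flag explicitly, is the hidden nondegeneracy hypothesis: the argument requires $\Phi\neq0$ (equivalently $X_1,X_2$, and hence $Y_1,Y_2$, pointwise linearly independent), which is exactly the rank-$2$ situation relevant to the simple bivectors $\mathcal{R}\wedge W$ considered here. If $\Phi$ were allowed to vanish the statement can fail, so I expect the proof either to assume this, or the ambient setting of $\Gamma\mathcal{B}^2(N)$ to supply that the bivectors in question are nowhere zero.
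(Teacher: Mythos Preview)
Your proof is correct and handles both directions soundly, including the explicit flag on the nondegeneracy hypothesis $\Phi\neq 0$, which the paper also assumes without further comment.

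The forward direction is identical to the paper's. For the converse the two arguments diverge in style. The paper extends $X_1,X_2$ to a full local frame $\{X_1,\ldots,X_n\}$, writes $Y_1=\alpha^i X_i$ and $Y_2=\beta^i X_i$, expands $Y_1\wedge Y_2$ and matches coefficients against $X_1\wedge X_2$: this immediately gives $\alpha^1\beta^2-\alpha^2\beta^1=1$, and the remaining work is a short linear-algebra manipulation to show all coefficients $\alpha^i,\beta^i$ with $i>2$ vanish. Your route is more intrinsic: you characterise $\mathrm{span}(X_1,X_2)$ as the set of vectors obtained by contracting $\Phi$ with $1$-forms, note that this description is representation-independent, and conclude the two spans agree; the determinant condition then drops out from a single wedge comparison. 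Your argument is cleaner and more conceptual, and it makes transparent why the simple bivector encodes exactly the $2$-plane; the paper's coefficient-matching is more pedestrian but entirely self-contained and avoids invoking the contraction pairing. Both approaches rely on the same implicit assumption that $\Phi$ is nowhere zero.
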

\begin{proof}\label{Prf 1}
Let  $X_1,X_2$ and $Y_1,Y_2$ be specially related. We then have
\begin{equation*}
\begin{split}
X_1\wedge X_2=& \left(\alpha Y_1 +\beta Y_2\right)\wedge \left(\gamma Y_1+ \delta Y_2\right)\\
=& \alpha\delta Y_1\wedge Y_2 + \beta\gamma Y_2\wedge Y_1\\
=& \left(\alpha\delta -\beta\gamma\right) Y_1\wedge Y_2\\
=& Y_1\wedge Y_2.
\end{split}
\end{equation*}

Suppose $X_1\wedge X_2\neq0$ then there exists a basis $\{X_1,...,X_n\}$ such that $X_i\wedge X_j\neq0$ for $i\neq j,\; i,j=1,...,n$.
We may then express 
\begin{equation*}
Y_1= \alpha^i X_i,\quad Y_2=\beta^i X_i,
\end{equation*}
for scalar fields $\alpha^i,\beta^i\in\Gamma\Lambda^0N$.

Hence, if we have $X_1\wedge X_2= Y_1\wedge Y_2$, we observe that
\begin{equation*}
\begin{split}
X_1\wedge X_2=& Y_1\wedge Y_2\\
=& \alpha^i X_i \wedge\beta^j X_j\\
=& \big{(}\alpha^1 \beta^2- \alpha^2 \beta^1\big{)} X_1\wedge X_2.
\end{split}
\end{equation*}

It follows that $\alpha^1\beta^2-\beta^1\alpha^2=1$.
It remains to be shown that the other coefficients ($(\alpha^i,\beta^i)$ for $i>2$) vanish. 
Since the coefficients of the terms $X_i\wedge X_j$ for $i$ or $j>2$ are vanishing, we have $\alpha^i\beta^j- \alpha^j \beta^i=0$ for $i$ or $j>2$.
We have the following set of equations:
\begin{align}
{}& \alpha^i \beta^j=\beta^i \alpha^j, \quad \mbox{ for } i,j>2 \label{condt 1},\\
{}& \alpha^1 \beta^j= \alpha^j \beta^1, \quad \mbox{ for } j>2  \label{condt 2},\\
{}& \alpha^2 \beta^j= \alpha^j \beta^2, \quad \mbox{ for } j>2  \label{condt 3}.
\end{align}

Define column vectors
\begin{equation*}
A=\begin{pmatrix}
\alpha^3\\
\vdots\\
\alpha^n
\end{pmatrix}, \quad
B=\begin{pmatrix}
\beta^3\\
\vdots\\
\beta^n
\end{pmatrix}.
\end{equation*}
These column vectors satisfy the following relations
\begin{align}
{}&\alpha^1B= \beta^1A, & \mbox{ by \cref{condt 2}}  \label{condt 5},\\
{}&\alpha^2B=\beta^2A, & \mbox{ by \cref{condt 3}}   \label{condt 6}.
\end{align}

By multiplying \cref{condt 5} by $\beta^2$ and then applying \cref{condt 6} we observe that
\begin{equation}
\begin{split}
{}& \beta^2 \alpha^1 B= \beta^2 \beta^1 A= \beta^1 (\beta^2A)= \beta^1(\alpha^2 B)\\
\implies & 0=(\beta^2\alpha^1-\beta^1\alpha^2)B=B.
\end{split}
\end{equation}
Since at least one of $\beta^1, \beta^2$ is non-zero we must also have $A=0$ by \cref{condt 5} or \cref{condt 6}. 
Hence $X_1,X_2$ and $Y_1,Y_2$ are specially related.
\end{proof}

\begin{lemma}
If a bivector $\Phi\in\Gamma\mathcal{B}^2(N)$ is integrable then for any representation $\Phi=X\wedge Y,\; X,Y\in\Gamma TN$, we have
\begin{equation}
[X,Y]=\alpha X+ \beta Y,
\end{equation}
for some $\alpha,\beta\in\Gamma\Lambda^0N$.
\end{lemma}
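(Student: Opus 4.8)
The plan is to reduce an arbitrary representation to the one guaranteed by the definition of integrability, using the special relatedness of representations established in \cref{Lem Lin Rel Lemma}. By \cref{Def Integrability}, integrability of $\Phi$ provides at least one representation $\Phi = X_0 \wedge Y_0$ together with functions $a_0, b_0 \in \Gamma\Lambda^0 N$ such that $[X_0, Y_0] = a_0 X_0 + b_0 Y_0$. Now fix an arbitrary representation $\Phi = X \wedge Y$. Since $X \wedge Y = X_0 \wedge Y_0$, \cref{Lem Lin Rel Lemma} tells us that the pairs $(X_0, Y_0)$ and $(X, Y)$ are specially related, so there exist $a, b, c, d \in \Gamma\Lambda^0 N$ with $(ad - bc)\vert_p = 1$ for all $p \in N$, satisfying
\begin{equation*}
X = a X_0 + b Y_0, \qquad Y = c X_0 + d Y_0.
\end{equation*}

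First I would expand $[X, Y]$ using bilinearity of the bracket together with the Leibniz identity $[fA, gB] = fg[A,B] + f A\langle g\rangle B - g B\langle f\rangle A$. Collecting the contributions, the part carrying the bracket $[X_0, Y_0]$ has total coefficient $ad - bc = 1$, so it contributes exactly $[X_0, Y_0] = a_0 X_0 + b_0 Y_0$; every remaining term arises from differentiating one of the coefficients $a, b, c, d$ and is therefore a function multiple of $X_0$ or of $Y_0$. Hence $[X, Y] = P X_0 + Q Y_0$ for some $P, Q \in \Gamma\Lambda^0 N$.

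The final step is to re-express $X_0, Y_0$ in terms of $X, Y$. Because $ad - bc = 1$ the coefficient matrix is everywhere invertible with inverse of the same determinant, giving $X_0 = d X - b Y$ and $Y_0 = -c X + a Y$. Substituting yields
\begin{equation*}
[X, Y] = (Pd - Qc)\,X + (-Pb + Qa)\,Y,
\end{equation*}
which is of the required form $[X, Y] = \alpha X + \beta Y$ with $\alpha = Pd - Qc$ and $\beta = -Pb + Qa$.

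There is no deep obstacle here; the single point that needs care is that the coefficients $a, b, c, d$ are functions rather than constants, so the Leibniz rule generates extra derivative terms. The observation that makes the argument work is that each such term is still a multiple of $X_0$ or $Y_0$, so that involutivity of the pair $(X_0, Y_0)$ is inherited by $(X, Y)$; the unimodularity condition $ad - bc = 1$ is then exactly what is needed both to collapse the leading bracket term and to invert back to $X, Y$.
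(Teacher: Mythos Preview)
Your proof is correct and follows essentially the same approach as the paper: both start from the guaranteed integrable representation, use \cref{Lem Lin Rel Lemma} to write the arbitrary pair as a unimodular combination of it, expand the bracket via the Leibniz rule to see it lands in the span of the original pair, and then invert back. Your presentation is slightly more streamlined in that you explicitly isolate the role of $ad-bc=1$ both in collapsing the leading bracket term and in the inversion, but the argument is structurally identical.
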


\begin{proof}\label{Prf 2}
Since $\Phi$ is integrable there exists $V,Z\in\Gamma TN$ such that $\Phi=V\wedge Z$ and $[V,Z]=\gamma V+\delta Z$.
Let $\Phi$ have another representation $\Phi=X\wedge Y$.
By \cref{Lem Lin Rel Lemma} there exist scalar field $\lambda,\sigma,\rho,\kappa\in\Gamma\Lambda^0N$ such that
\begin{equation*}
\begin{split}
X=& \lambda V+\sigma Z\\
Y=& \rho V+\kappa Z.
\end{split}
\end{equation*}
The Lie bracket of $X$ and $Y$ satisfies
\begin{equation*}
\begin{split}
[X,Y]=& [\lambda V+\sigma Z, \rho V+ \kappa Z]\\
=& [\lambda V, \rho V]+ [\lambda V,\kappa Z]+ [\sigma Z, \rho V]+ [\sigma Z, \kappa Z].
\end{split}
\end{equation*}
Consider a term containing $V$ and $Z$ e.g.
\begin{equation*}
\begin{split}
[\lambda V, \kappa Z]=& \lambda V\langle \kappa \rangle Z -\kappa Z\langle \lambda \rangle V+ \lambda\kappa [V,Z]\\
=& \big{(}\lambda\kappa \gamma -\kappa Z\langle \lambda \rangle\big{)} V+ \big{(}\lambda\kappa\delta + \lambda V\langle \kappa \rangle\big{)} Z.
\end{split}
\end{equation*}
Consider a term containing two of the same types e.g.
\begin{equation*}
\begin{split}
[\lambda V,\rho V]=& \lambda V\langle \rho \rangle V -\rho V\langle \lambda\rangle V +\lambda\rho [V,V]\\
=& \lambda V\langle \rho \rangle V -\rho V\langle \lambda\rangle V.
\end{split}
\end{equation*}
Hence we may write
\begin{equation*}
\begin{split}
[X,Y]=&\; \lambda V\langle \rho \rangle V -\rho V\langle \lambda\rangle V\\
{}& + \big{(}\lambda\kappa \gamma -\kappa Z\langle \lambda \rangle\big{)} V+ \big{(}\lambda\kappa\delta + \lambda V\langle \kappa \rangle\big{)} Z\\
{}& + \big{(}\sigma\rho \gamma -\rho V\langle \sigma \rangle\big{)} Z+ \big{(}\sigma\rho\delta + \sigma Z\langle \rho \rangle\big{)} V\\
{}& + \sigma Z\langle \kappa \rangle Z -\kappa Z\langle \sigma\rangle Z\\
=&\; \big{(} \lambda V\langle \rho \rangle  -\rho V\langle \lambda\rangle + \lambda\kappa \gamma\\
{}&-\kappa Z\langle \lambda \rangle +\sigma\rho\delta + \sigma Z\langle \rho \rangle\big{)} V\\
{}& + \big{(}\sigma Z\langle \kappa \rangle  -\kappa Z\langle \sigma\rangle + \lambda\kappa\delta \\
{}&+ \lambda V\langle \kappa \rangle + \sigma\rho \gamma -\rho V\langle \sigma \rangle\big{)} Z\\
=& \gamma' V+ \delta' Z
\end{split}
\end{equation*}

There exist $\lambda',\sigma',\rho',\kappa'$ such that
\begin{equation*}
\begin{split}
V=&\lambda' X+\sigma'Y\\
Z=&\rho' X+\kappa'Y.
\end{split}
\end{equation*}
These are guaranteed to exist since $\lambda\kappa-\sigma\rho=1$.
It follows that
\begin{equation*}
[X,Y]=\big{(}\gamma'\lambda'+\delta'\rho'\big{)}X+ \big{(}\gamma'\sigma'+\delta'\kappa'\big{)}Y.
\end{equation*}
\end{proof}

\begin{lemma}\label{Lem EV Stress-Energy}
Let $(M,g)$ be a spacetime manifold, let $E$ be a kinematic domain with a local coordinate system $(x^\mu,\dot{x}^a)$, let $\Omega_E$ be the volume form as given by \cref{Eqn Vol EH}, and let $\theta_E$ be a particle density form built from $\Omega_E$.
The stress-energy 3--form $\tau_E{}_\alpha$ \cref{Stress-Energy 3-form} is equivalent to the Vlasov-Einstein stress-energy tensor \cref{Eqn EV Stress-Energy}.
\end{lemma}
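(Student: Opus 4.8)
The plan is to unwind both sides in the coordinate system $(x^\mu,\dot{x}^a)$ adapted to $\EH$ and show that the conversion \cref{Stress-Energy conversion} reproduces \cref{Eqn EV Stress-Energy}. First I would record the two inputs: by \cref{Eqn Theta E I} the density form is $\theta_E = f_E\, i_{W_E}\Omega_E$, and for the covector $\alpha = dx^\nu$ the scalar $\hat\alpha$ appearing in \cref{Stress-Energy 3-form} is $\widehat{dx^\nu}\vert_{\underline{u}} = dx^\nu : \underline{u} = \dot{x}^\nu$. Hence $\tau_E{}_{dx^\nu} = \pi_{E\varsigma}\big(\dot{x}^\nu f_E\, i_{W_E}\Omega_E\big)$.

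The heart of the argument is a single algebraic simplification. Using the projection formula for the de Rham pushforward, which follows from \cref{Eqn deRham 2 Int} in the form $\pi_{E\varsigma}(\pi_E^\ast\gamma\wedge\,\cdot\,)=\gamma\wedge\pi_{E\varsigma}(\,\cdot\,)$, I would pull $dx^\mu$ under the fibre integral,
\[
dx^\mu\wedge\tau_E{}_{dx^\nu} = \pi_{E\varsigma}\big(\dot{x}^\nu f_E\, dx^\mu\wedge i_{W_E}\Omega_E\big).
\]
Since $\Omega_E$ is a top form on $E$ one has $dx^\mu\wedge\Omega_E=0$, so expanding $0=i_{W_E}(dx^\mu\wedge\Omega_E)=(i_{W_E}dx^\mu)\Omega_E - dx^\mu\wedge i_{W_E}\Omega_E$ gives $dx^\mu\wedge i_{W_E}\Omega_E = (dx^\mu : W_E)\,\Omega_E$. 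By horizontality of $W_E$ (\cref{W horizontal}), $dx^\mu : W_E = W_E\langle\bar{x}^\mu\rangle = \dot{x}^\mu$, including $\mu=0$, where $\dot{x}^0$ is the constrained root on $\EH$. This is the decisive step: it shows the entire vertical part of $W_E$, that is the connection and Lorentz-force terms, drops out, leaving $dx^\mu\wedge\tau_E{}_{dx^\nu} = \pi_{E\varsigma}(\dot{x}^\mu\dot{x}^\nu f_E\,\Omega_E)$.

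It then remains to evaluate the pushforward and the Hodge dual explicitly. Substituting \cref{Eqn Vol EH}, the form $\dot{x}^\mu\dot{x}^\nu f_E\Omega_E$ is a top form whose pushforward integrates out the fibre coordinates $\dot{x}^1,\dots,\dot{x}^{n-1}$, producing $\big[\int_{E_p}\dot{x}^\mu\dot{x}^\nu f_E\,(-\det g/\dot{x}_0)\,d^{n-1}\dot{x}\big]\,dx^0\wedge\cdots\wedge dx^{n-1}$. Applying $\star$ to the coordinate top form $dx^0\wedge\cdots\wedge dx^{n-1}$ yields a factor proportional to $1/\sqrt{-\det g}$, and collapsing $-\det g/\sqrt{-\det g} = \sqrt{-\det g}$ reduces the integrand to $\dot{x}^\mu\dot{x}^\nu f_E\sqrt{-\det g}/\dot{x}_0$, which is precisely \cref{Eqn EV Stress-Energy}.

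The main obstacle is not any single computation but the sign and orientation bookkeeping that must thread through all three structures simultaneously: the Hodge dual of a top coordinate form in Lorentzian signature carries a signature-dependent sign, the de Rham pushforward of \cref{Eqn deRham 2 Int} fixes the fibre orientation only up to the ordering convention of base-versus-fibre factors, and $\Omega_{\EH}$ already carries an explicit sign in \cref{Eqn Vol EH}. I would fix a single orientation on $\EH$ and a single Hodge convention at the outset and verify that these three signs compose to $+1$; a convenient consistency check confirming the convention is that the resulting $T_E^{00}$ is the positive energy density.
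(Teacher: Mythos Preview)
Your argument is correct and reaches the same endpoint as the paper, but the key step is organised differently. The paper expands $i_{W_E}\Omega_E = \dot{x}^\rho i_\rho^{(x)}\Omega_E + \varphi_E^{a} i_a^{(\dot{x})}\Omega_E$ in coordinates and then discards the $i_a^{(\dot{x})}$ pieces on the grounds that only terms of maximal fibre degree survive the de Rham pushforward; wedging the surviving part with $dx^\mu$ then collapses via $dx^\mu\wedge i_\rho^{(x)}d^4x=\delta^\mu_\rho d^4x$. You bypass this decomposition with the single identity $dx^\mu\wedge i_{W_E}\Omega_E=(i_{W_E}dx^\mu)\,\Omega_E=\dot{x}^\mu\Omega_E$, obtained from $dx^\mu\wedge\Omega_E=0$ and horizontality. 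Your route is slightly more coordinate-free and makes the vanishing of the connection/force terms a one-line consequence of $\Omega_E$ being top; the paper's route is more explicit about the mechanism by which fibre integration kills those terms. The Hodge-dual and sign discussion you give matches the paper's use of $\star 1$ and $\star\star 1$, and your closing remark about fixing orientation conventions is an appropriate substitute for the paper's (equally brief) treatment of signs.
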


\begin{proof}
First observe that $\widehat{dx^\mu}\vert_{\underline{u}}=dx^\mu : \underline{u}= u^\mu=\dot{x}^\mu\vert_{\underline{u}}$.
We may write $\theta_E$ as
\begin{equation*}
    \theta_E= f_E i_{W_E}\Omega_E= f_E\dot{x}^\mu i_\mu^{(x)}\Omega_E+ f_E\varphi_E^a i_a^{(\dot{x})} \Omega_E,
\end{equation*}
where $\varphi_E^a= W_E \langle \dot{x}^a \rangle$.
Here we use the coordinate-based definition for the deRham pushforward as seen in \cite{bott_differential_1982}.
Since only forms with maximal degree in the fibre coordinates contribute to the deRham pushforward we have
\begin{equation*}
    \begin{split}
         (dx^\mu &\wedge \tau_E{}_{dx^\nu}) \vert_p
        \\ & 
        = dx^\mu\wedge\pi_\varsigma \big( \widehat{dx^\nu} \theta_E \big)\big\vert_p
        \\
        {}& = -\big(dx^\mu\wedge i_\rho^{(x)} d^4x\big)\big\vert_p \int_{E_p} \widehat{dx^\nu} \dot{x}^\rho f_E \frac{\det (g)}{\dot{x}_0} d^3\dot{x}\\
        {}& = -d^4x\vert_p \int_{E_p} \dot{x}^\mu\dot{x}^\nu f_E \frac{\det (g)}{\dot{x}_0} d^3\dot{x}.
    \end{split}
\end{equation*}
Since $\star 1= \sqrt{\det (g)} d^4x$ and $\star\star1=-1$ we have for each $p\in M$
\begin{equation*}
    \star \left( dx^\mu\wedge\tau_E{}_{dx^\nu}  \right) \big|_p = \int_{E_p} \dot{x}^\mu\dot{x}^\nu f_E \frac{\sqrt{\det (g)}}{\dot{x}_0} d^3\dot{x},
\end{equation*}
so that \cref{Eqn EV Stress-Energy} holds.
\end{proof}

\end{appendices}

\subsection*{Acknowledgements}
The authors are grateful for conversions with Dr. Alex Warrick.

\subsection*{Funding}
FG is grateful for the support of the Faculty of Science and Technology at Lancaster University.
JG is grateful for the support of STFC (the Cockcroft Institute ST/V001612/1).

\subsection*{Author contributions}
The writing of this article and the contents herein are credited to F. Gunneberg and J. Gratus with the exception of \cref{subsec Vlasov LF}, the calculations in which were performed by H. Stanfield, who also contributed to the preliminary work in \cref{Sec Transforming Between Domains}.

\bibliography{Vlasov_Paper_1_Bib}{}
\bibliographystyle{unsrt.bst}
\end{document}